\newcommand{\N}{{\mathbb{N}}}
\newcommand{\Lang}{{\cal L}}
\newcommand{\A}{{\tt A}}
\newcommand{\B}{{\tt B}}
\newcommand{\C}{{\tt C}}
\newcommand{\F}{{\tt F}}
\newcommand{\K}{{\tt K}}
\renewcommand*\l@author[2]{}
\renewcommand*\l@title[2]{}
\begin{document}

\title{The genus of regular languages}
\author{Guillaume $\text{Bonfante}^1$, Florian $\text{Deloup}^2$}
\institute{Universit\'e de Lorraine, LORIA, Nancy, France
\and Universit\'e Paul Sabatier, IMT, Toulouse, France }

\date{}

 \maketitle

\tableofcontents

\begin{abstract}
The article defines and studies the genus of finite state deterministic automata
(FSA) and regular languages. Indeed, a FSA can be seen as a graph for
which the notion of genus arises. At the same time, a FSA has a
semantics via its underlying language. It is then natural to
make a connection between the languages and the notion of genus. After we introduce and
justify the the notion of the genus for regular languages, the
following questions are addressed. First, depending on the size of the alphabet, we provide upper and lower bounds on the genus of regular languages : we show that under a relatively generic condition on the alphabet and the geometry of the automata, the genus grows at least linearly in terms of the size of the automata. Second, we show that the topological cost of the powerset determinization procedure is exponential. Third, we prove that the notion of minimization is orthogonal to the notion of genus. Fourth, we build regular languages of arbitrary large genus: the notion of genus defines a proper hierarchy of regular languages.
\end{abstract}

Beyond the set-theoretic description of graphs, there is the notion of an
embedding of a graph in a surface. Intuitively speaking, an
embedding of a graph in a surface is a drawing without
edge-crossings. Planar graphs are drawn on the sphere $S_0 $, the
graphs $K_5$ and $K_{3,3}$ are drawn on the torus $S_1$ and more
generally, any graph can be drawn on some closed orientable
surface $S_k$, that is a sphere with $k$ ``handles". The genus of a
graph $G$ is the minimal index $k$ such that $G$ can be drawn on
$S_k$.

\begin{figure}
\begin{center}
\vspace{-4ex}\includegraphics{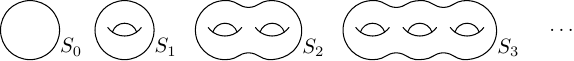}
\end{center}
\end{figure}

\vspace{-4ex}

The aim of this work is to explore standard notions
of finite state automata (FSA) theory with this topological point
of view. The novelty of this point of view lies in the fact that finite
state automata are not only \emph{graphs}, they are
\emph{machines}. These machines compute regular languages. The correspondence is onto: one language
may be computed by infinitely many automata. It is then natural to define the genus of a regular language to be the minimal genus of its representing deterministic automata.

It should be noted that the word ``deterministic" in the previous sentence is crucial: any regular language is recognized by some {\emph{planar}} nondeterministic automaton. The earliest reference for this result we could find is \cite{BoCh}. The cost in terms in extra states and transitions is analyzed in \cite{BezPal}.  By contrast, we show in this paper the existence of regular languages having {\emph{arbitrary high genus}}.

The use of topology in the study of languages may come as a surprise at first. We suggest two
motivations of very different nature. First, the question arises naturally if one wants to build physically the FSA. Think of boolean circuits, they also are graph-machines. There is an immense literature about their electronic implementation, that is about the layout of Very-Large Scale Integration (VLSI) (for instance~\cite{Chen}). In particular, the problem of minimization of  via is  close to the current one. Many contributions suppose a fixed number of layers (holes), but some consider an arbitrary one~\cite{Stallmann}. As we will show, a smaller number of states
may not necessarily mean a smaller cost in terms of the electronic implementation.

There is a second and more fundamental reason why one should consider topology in general and the genus in particular in the study of regular languages. Low-dimensional topology is a natural tool in order to estimate the complexity of languages (or the complexity of the computation of languages).
The main invariant of a regular language $L$ is usually the number of states (the size) of the minimal automaton recognizing $L$. This invariant describes the size of the table data in which transitions are stored, that is the size of the machine's memory. However, simple counting costs memory without complexifying the internal structure of automata. As a simple example, the language $L_{n} = \{ a^{n} \}$ is represented by an automaton of size $n+2$ but with the simple shape of a line:
%\begin{figure}[!h]
\begin{center}
\includegraphics[scale=0.7]{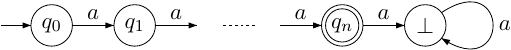}
\end{center}
%\caption{A linear automaton of size $n+2$.} \label{fig:linear-automaton}
%\end{figure}

The genus, as a complexity measure, has been introduced for formal logical proofs by R.~Statman~\cite{Statman}, and further studied by A.~Carbone~\cite{Carbone}.
Cut-elimination is presented as a way of diminishing the complexity of proofs, that is of simplifying proofs. We are not aware of other use of low dimensional topology as a complexity measure besides this work.  To the best of our knowledge, classical textbooks (e.g.~\cite{Hopcroft}, \cite{Sakarovitch}, \cite{Salomaa}) about automata theory are devoted to the set-theoretic approach. Our long term objective is a topological study
of the well known constructions such as minimization, determinization, union, concatenation,
and so on.  This paper is devoted to the notion of genus.

As a first step, we derive a closed formula for the genus of a deterministic
finite automaton (Theorem \ref{th:genus-formula}). Then we show that under a rather mild hypotheses
on the size of the alphabet ($\geq 4$) and on the geometry, the genus of a deterministic finite automaton at least increases linearly in
terms of the number of states (Theorem \ref{th:genus-growth-theorem}). Since the hypotheses depend only on the abstract representing automaton and not on a particular embedding, we deduce an estimation of the genus of regular languages (Theorem \ref{th:genus-growth-language}).

\begin{theorem}
Let $(L_{n})_{n \geq 1}$ be a sequence of regular languages $L_n$ of size~$n$, with alphabet size $m \geq 4$. Assume that for any deterministic automaton recognizing $L_n$, the number of cycles of length $1$ and $2$ is negligible with respect to $n$. For any $\varepsilon > 0$, there is $N > 0$ such that for all $n \geq N$, $$ 1 + \left( \frac{m-3}{6m} - \varepsilon \right) mn \leq g(L_{n}) \leq 1 + \frac{(m-1)n}{2}.$$
\end{theorem}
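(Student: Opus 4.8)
The plan is to obtain the two inequalities from, respectively, the genus formula of Theorem~\ref{th:genus-formula} and the automaton-level growth estimate of Theorem~\ref{th:genus-growth-theorem}, and then to pass from automata to $L_n$ using only that every deterministic automaton recognizing $L_n$ has at least $n$ states (minimality of the size).

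\emph{Upper bound.} Take $A$ to be the minimal deterministic automaton of $L_n$; it has $V = n$ states and, being complete, $E = mn$ transitions, and its underlying graph is connected since every state is reachable from the initial one. Any $2$-cell embedding of a connected graph has at least one face, so Theorem~\ref{th:genus-formula} gives
\[
g(L_n) \le g(A) \le \tfrac12\bigl(E - V + 2 - 1\bigr) = \tfrac12(mn - n + 1) \le mn,
\]
which is the right-hand inequality, with room to spare.

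\emph{Lower bound.} We must bound $g(A)$ from below for an \emph{arbitrary} deterministic automaton $A$ recognizing $L_n$. Put $n' = |A| \ge n$, so $E = mn'$, and fix any $2$-cell embedding of $A$; let $\phi_i$ denote the number of faces whose boundary walk has length $i$. Each transition borders exactly two face-sides, so $\sum_i i\,\phi_i = 2E = 2mn'$, and since $\phi_i \le \tfrac{i}{3}\phi_i$ for $i \ge 3$,
\[
F = \sum_i \phi_i \le \phi_1 + \phi_2 + \tfrac13\sum_{i \ge 3} i\,\phi_i \le \tfrac{2mn'}{3} + \phi_1 + \phi_2 .
\]
This is where the hypothesis is used: a face of length $1$ is bounded by a single self-loop and a face of length $2$ by a pair of parallel transitions — the only other option, an edge traversed twice, would force an endpoint of undirected degree $\le 1$, which cannot happen since every state of a complete automaton has undirected degree $\ge m \ge 4$. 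Hence $\phi_1 + \phi_2$ is at most a fixed multiple of the number of cycles of length $1$ or $2$ of $A$, which is $o(n)$ by assumption, hence $o(n')$ since $n' \ge n$. Substituting into $g(A) = \tfrac12(E - n' + 2 - F)$ yields
\[
g(A) \ge \tfrac{mn'}{6} - \tfrac{n'}{2} + 1 - o(n) = 1 + \tfrac{(m-3)\,n'}{6} - o(n) \ge 1 + \tfrac{(m-3)\,n}{6} - o(n),
\]
the last step using $m \ge 4$ (so the coefficient is positive) and $n' \ge n$. Since $\tfrac{(m-3)n}{6} = \bigl(\tfrac{m-3}{6m}\bigr)mn$ and the error term is eventually at most $\varepsilon mn$, we get $g(A) \ge 1 + \bigl(\tfrac{m-3}{6m} - \varepsilon\bigr)mn$ for all sufficiently large $n$; taking the infimum over all $A$ recognizing $L_n$ gives the left-hand inequality. (At the level of a single automaton this is exactly Theorem~\ref{th:genus-growth-theorem}; Theorem~\ref{th:genus-growth-language} records the passage to languages.)

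\emph{Main obstacle.} The one genuinely delicate step is the combinatorial claim that every face of length $\le 2$ of an \emph{arbitrary} embedding is accounted for by a cycle of length $\le 2$ of the abstract automaton, so that the ``negligible'' hypothesis --- stated on the automaton and hence embedding-independent --- controls $\phi_1 + \phi_2$ for the genus-realizing embedding. This needs the short case analysis of boundary walks of short faces sketched above (self-loops, digons, and the exclusion of degree-$\le 1$ vertices); the rest --- the Euler-characteristic bookkeeping and the inequality $n' \ge n$ --- is routine.
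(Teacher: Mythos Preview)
Your argument is correct and follows the same route as the paper. The paper's proof of this statement (Theorem~\ref{th:genus-growth-language}) is a two-line citation: the upper bound is Proposition~\ref{prop:general-genus-upper-bound} (Euler's formula), and the lower bound is ``a direct consequence of the Genus Growth Theorem'' (Theorem~\ref{th:genus-growth-theorem}). What you have done is unpack that Genus Growth Theorem inline: your chain
\[
\sum_{i} i\,\phi_i = 2mn', \qquad F \le \phi_1 + \phi_2 + \tfrac{1}{3}\sum_{i\ge 3} i\,\phi_i, \qquad g = 1 + \tfrac12(E - n' - F)
\]
is exactly the combination of Lemma~\ref{le:kfke}, Lemma~\ref{lem:constante-alpha} (with $\alpha = \alpha_0 = \tfrac{m-3}{12m}$), and Euler's formula used in \S\ref{sec:genus-growth-proof}; and your ``Main obstacle'' --- that each $1$- or $2$-face is bounded by an abstract cycle of length $\le 2$ --- is the inequality $f_k \le z_k$ the paper uses in Lemma~\ref{lem:rel-negligible-second}. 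Your case analysis ruling out a $2$-face bounded by a single edge traversed twice (via the degree-$\ge m$ observation) is a clean way to justify $f_2 \le z_2$ directly; the paper leaves this implicit. You are also slightly more explicit than the paper in handling the passage from automata to languages: you keep track of $n' = |A| \ge n$ for the genus-realizing automaton, whereas the paper's one-line deduction glosses over the fact that this automaton need not have exactly $n$ states. One cosmetic point: where you write ``$g(A) = \tfrac12(E - n' + 2 - F)$'' after fixing \emph{any} $2$-cell embedding, this is really $g_\Sigma$; equality with $g(A)$ holds only for a minimal embedding, but since your lower bound is embedding-independent the conclusion is unaffected.
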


We present several remarkable consequences of this result throughout this paper.

We mention two particular cases of interest. It is known that the size of the union of two automata increases linearly with the product of their respective size. We prove that
the {\emph{genus}} of the union of two automata $\A$ and $\B$ increases linearly
with the product of the sizes of $\A$ and $\B$ (Corollary \ref{cor:product-automata}). We also provide an example of a nondeterministic automaton $\A$ such that the genus of the powerset-determinized form of a $\A$ is exponential up to a linear factor with respect to the size of $\A$ (Theorem \ref{th:exp-blowup-det}).

% Example rather than proof here...

%some general relations to get some grasp on the genus of an automaton. ***Y a t'il une bonne formulation pour : il n'y a pas de d'invariants simples qui permettent de calculer le genre d'un automate par dיnombrement*** In particular, we give several example for which Euler's relation appears as a fundamental tool to get lower bound on the genus  the automaton.

In a second step, we study further the link between languages, their representation in terms of automata and their genus. The comparison with state minimization is instructive. Myhill-Nerode Theorem ensures that two deterministic automata with same minimal number of states that recognize the same laguage must be isomorphic. We show that this uniqueness property does not hold if we replace minimal number of states by minimal genus. There is no simple analog to Myhill-Nerode Theorem. As a consequence nonisomorphic automata representing the same language may have minimal genus. There may be even nonisomorphic automata of minimal size within the set of genus-minimal automata.

%It should be noted also that the condition of recognition by a deterministic automaton is a crucial one: any regular language is recognized by some {\emph{planar nondeterministic}} automaton. The earliest reference for this result we could find is \cite{BoCh}. The cost in terms in extra states and transitions is analyzed in \cite{BezPal}.

As a final step, we describe {\emph{explicit}} languages having arbitrary high genus (Theorem \ref{th:hierarchy}). These results imply the existence of a nontrivial hierarchy of regular languages based on the genus and yields a far-reaching generalization of the results of \cite[\S 4]{BoCh}. In particular, the genus yields a nontrivial measure of complexity of regular languages.

%In a last step, we come back to the standard construction of FSA from regular expressions.
%The middle step is the construction of a non-deterministic finite state automaton.
%We prove that determinization leads to an exponential blow-up of the genus of the automaton.
%More precisely, there are non-deterministic finite state automata $\A$ for which their powerset
%determinized form $D(\A)$ has an exponential genus (with respect to the number of
%\emph{states} $\A$). Notice that the genus of the determinized form does not depend on the
%genus of the non-deterministic one but on its state size.

%\section{Introduction}

\section{Finite State Automata}
We briefly recall the main definitions of the theory of finite state automata and regular languages.
An \emph{alphabet} is a (finite) set of \emph{letters}. A word on an alphabet $A$ is a
finite sequence of letters in the alphabet. Let $A^*$ be the set of all words
on $A$, $\epsilon$ is the empty word and the concatenation of two words $w$ and $w'$ is
denoted by $w \cdot w'$. We define repetitions as follows. Given some word $w$, let $w^0 = \epsilon$ and $w^{n+1} = w^n \cdot w$.
%The triple $(A^*,\epsilon,\cdot)$ is the free monoid generated by $A$.

A language on an alphabet $A$ is a subset of $A^*$. Given two languages, let $L+L'$,
$L \cdot L' $ and $L^*$ denote respectively the union, the catenation and the star-operation on $L$ (and $L'$). Rational languages are those languages build from finite sets and the three former operations.

A (finite state) automaton is a $5$-tuple $\A = \langle Q, A, q_0,
F, \delta\rangle$ with $Q$, a finite set of \emph{states} among
which $q_0$ is the \emph{initial} state, $F \subseteq Q$ is the
set of \emph{final} states, $A$ is an alphabet and $\delta
\subseteq Q \times A \times Q$ is the \emph{transition relation}.
The relation $\delta$ extends to words by setting $\delta(q,\epsilon,q)$
for all $q \in Q$ and by defining $\delta(q, a \cdot w, q')$ if and only if
$\delta(q,a,q'') \text{ and } \delta(q'',w,q')$ for some state $q'' \in Q$. Such an automaton
induces a language $$\Lang_\A = \{w \in A^* \mid
\delta(q_0,w,q_f) \wedge q_f \in F\}.$$ The language $\Lang_\A$ is
said to be recognized (or represented) by $\A$. A fundamental result is Kleene's theorem.

\begin{theorem}[Kleene] A language is regular if and only if it is recognized by some finite
state automaton.
\end{theorem}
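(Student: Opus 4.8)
\medskip
\noindent\textbf{Proof proposal.} The statement is an equivalence and I would prove the two implications separately. For ``regular (equivalently, rational) $\Rightarrow$ recognizable by an automaton'' I would induct on the structure of the rational expression. The base cases are finite sets: $\emptyset$ is recognized by an automaton with no final state, $\{\epsilon\}$ by a single state that is at once initial and final, and each singleton $\{a\}$ by a two-state automaton with one $a$-labelled transition; an arbitrary finite set is a finite union of singletons of words, so it suffices to handle the three operations. Given automata $\A$ and $\B$ recognizing $L$ and $L'$: for the union $L+L'$ take the disjoint union of $\A$ and $\B$ and merge $q_0$ and $q_0'$ into a single initial state, final iff one of them was, whose outgoing transitions are the union of those of $q_0$ and of $q_0'$; for the catenation $L\cdot L'$ keep $\A$ and, from every final state of $\A$, copy every transition of $\B$ leaving $q_0'$ (same letter, same target), declaring a state final iff it is final in $\B$, with the usual small correction when $\epsilon\in L'$; for $L^{*}$ add a fresh state that is initial and final and route the accepting behaviour back through it. One may instead permit $\epsilon$-transitions, use the textbook Thompson construction, and then invoke $\epsilon$-elimination; I would take whichever makes the bookkeeping shortest.

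For the converse, ``recognizable $\Rightarrow$ regular'', I would use state elimination, which is also the most graph-theoretic argument and hence the most natural in the spirit of this paper. Starting from $\A=\langle Q,A,q_0,F,\delta\rangle$, first normalise by adding a fresh initial state with no incoming edge and an $\epsilon$-edge into $q_0$, and a fresh final state with no outgoing edge and $\epsilon$-edges from every old final state. Now work with \emph{generalized automata} whose edges carry rational expressions rather than single letters. Repeatedly choose a state $q$ other than the two fresh ones and delete it: for every ordered pair $(p,r)$ of surviving states, if $p$ has an edge to $q$ labelled $E_1$, $q$ a self-loop labelled $E_2$ (or none, in which case $E_2^{*}$ is read as $\epsilon$), $q$ an edge to $r$ labelled $E_3$, and $p$ already an edge to $r$ labelled $E_0$, then relabel that edge $E_0+E_1 E_2^{*}E_3$. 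When only the two fresh states remain, the unique surviving edge is labelled by a rational expression for $\Lang_\A$. Correctness follows from the invariant that, after deleting a set $S$ of states, the label of the edge from $p$ to $r$ denotes exactly the words read along paths from $p$ to $r$ whose intermediate vertices all lie in $S$. An equivalent route is to set up the language equations $X_q=\bigcup_{(q,a,q')\in\delta} a\cdot X_{q'}$, with $\{\epsilon\}$ adjoined when $q\in F$, observe $\Lang_\A=X_{q_0}$, and solve them by Gaussian elimination using Arden's lemma: $X=BX\cup C$ has solution $B^{*}C$, unique when $\epsilon\notin B$.

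The step with real content is the converse direction, and the point one must be careful about is the applicability of Arden's lemma at each elimination: the coefficient of the variable being removed must be $\epsilon$-free, which holds because every letter-labelled transition contributes a word of length at least one and self-loops are absorbed into the star; one then checks that the order of elimination is immaterial, or, in the state-elimination formulation, that the path-label invariant survives a single deletion --- a short but slightly fussy case analysis on how a path may traverse the deleted vertex. The forward direction is routine once one decides whether to allow $\epsilon$-transitions: if one does, the only extra obligation is the standard $\epsilon$-removal lemma; if one does not, the only delicate points are the $\epsilon\in L'$ correction in the catenation construction and the handling of the empty language among the base cases. Combining the two implications yields the equivalence.
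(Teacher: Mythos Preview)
Your proof is correct and follows the standard textbook route (Thompson-style induction on rational expressions in one direction, state elimination or Arden's lemma in the other), but the paper does not actually prove Kleene's theorem at all: it is merely stated as ``a fundamental result'' and used as background, with no proof or proof sketch given. So there is nothing to compare your argument against; you have supplied a full proof where the paper supplies none.
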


\begin{example}On the alphabet $\{a,b\}$, let us define the automaton $\F$ on the left and $\K_5$ on the right:
\begin{center}
\includegraphics[scale=0.7]{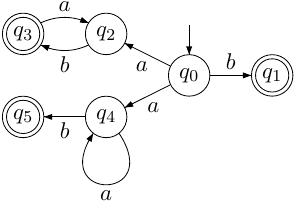} \hspace{2cm} \includegraphics[scale=0.7]{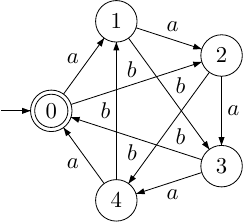}
\end{center}

The small arrows indicate the initial states and final states are doubly circled. The language recognized by $\F$ is $\Lang_\F = \{a^n \cdot b \mid n \in \N\} \cup \{ (a \cdot b)^n \mid n >0\} = a^* \cdot b + (a \cdot b)^*$. The language recognized by $\K_5$ is $\Lang_{\K_5}$ composed of the words of "weight" $0$ modulo $5$.  The weight of $a$ being $1$ and the one of  $b$ being $2$.
\end{example}

An automaton $\A = \langle Q, A, q_0, F, \delta\rangle$ is said to be
\emph{deterministic} (resp. \emph{complete}) if for any state $q \in Q$ and any symbol $a \in A$, the cardinality of the set $\{ q' \in Q \ | \ \delta(q,a,q') \}$ is at most one (resp. at least one). In the case when $\A$ is deterministic and complete, $\delta$ is actually a function  $Q \times A \to Q$. In that case, $\delta(q,u) = q'$ stands for $\delta(q,u,q')$.
%A deterministic automaton is \emph{complete} when $\delta$ is total.
It is well known that regular
languages restrict to (complete) deterministic automata. All deterministic automata in this paper shall be finite and complete unless stated otherwise.

\begin{example}The automaton $\K_5$ is deterministic, but $\F$ is not. Nevertheless, $\Lang_\F$ is
recognized by the automaton $\F'$ below:
\begin{center}
\includegraphics[scale=0.7]{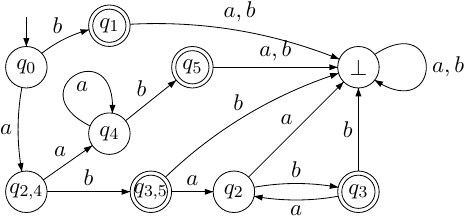}
\end{center}
Note that the only function of the state symbolized by $\perp$ is to make the automaton $\F'$ complete. It is traditionally denoted the ``trash state". Once this state is reached, the final states are unreachable.
\end{example}

Given a language $\Lang$, a distinguishing extension of two words $u$ and $v$ is a word $w$ such that $u\cdot w \in \Lang$ and $v \cdot w \not\in \Lang$. Let $R_{\Lang}$ be the (equivalence) relation $u\; R_{\Lang}\; v$ if and only if $u$ and $v$ have no distinguishing extension.

\begin{theorem}[Myhill-Nerode~\cite{Myhill,Nerode}] A language $\Lang$ is regular if and only if $R_{\Lang}$ has finitely many equivalence classes.
\end{theorem}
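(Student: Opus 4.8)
The plan is to prove the two implications separately, in each case translating between the relation $R_{\Lang}$ and deterministic complete automata recognizing $\Lang$. I will use throughout that $u\, R_{\Lang}\, v$ means precisely that for every word $w$ one has $u\cdot w \in \Lang \iff v\cdot w \in \Lang$ (this is the equivalence relation generated by ``having no distinguishing extension''). For the direction ``$\Lang$ regular $\Rightarrow$ $R_{\Lang}$ has finitely many classes'', I would start from Kleene's theorem to obtain a finite automaton for $\Lang$, which we may take deterministic and complete, say $\A = \langle Q, A, q_0, F, \delta\rangle$, so that $\delta$ is an honest function on words. Consider $\phi\colon A^* \to Q$ given by $\phi(w) = \delta(q_0,w)$. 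The key observation is that $\phi$ factors through $R_{\Lang}$: if $\phi(u)=\phi(v)$ then for every $w$, $\delta(q_0,u\cdot w)=\delta(\phi(u),w)=\delta(\phi(v),w)=\delta(q_0,v\cdot w)$, so $u\cdot w$ and $v\cdot w$ are simultaneously accepted or rejected, i.e.\ $u\, R_{\Lang}\, v$. Hence the number of classes of $R_{\Lang}$ is bounded by $|Q| < \infty$.

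For the converse, I would construct the quotient automaton directly from $R_{\Lang}$: the states are the classes $[w]$, the initial state is $[\epsilon]$, the final states are $\{[w] \mid w\in\Lang\}$, and the transition function is $\delta([w],a) = [w\cdot a]$. The substantive step is well-definedness. For the final states, if $u\, R_{\Lang}\, v$ while (say) $u\in\Lang$ and $v\notin\Lang$, then $\epsilon$ is a distinguishing extension, a contradiction; so membership in $\Lang$ is constant on each class. For the transitions, if some $w$ were a distinguishing extension of $u\cdot a$ and $v\cdot a$, then $a\cdot w$ would be one of $u$ and $v$; hence $u\, R_{\Lang}\, v$ implies $u\cdot a\, R_{\Lang}\, v\cdot a$. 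By hypothesis this automaton is finite, and it is deterministic and complete by construction. An easy induction on $|w|$ gives $\delta([\epsilon],w)=[w]$, and $[w]$ is final exactly when $w\in\Lang$, so the recognized language is $\Lang$, and Kleene's theorem yields that $\Lang$ is regular. As a byproduct this quotient automaton is minimal, which is what underlies the uniqueness statement discussed later in the paper.

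I do not anticipate a genuine obstacle here: the result is classical and the argument is elementary. The only point requiring care is the pair of well-definedness checks in the converse, which must be arranged so that the ``distinguishing extension'' argument transports along left-concatenation by a single letter; the remaining verifications are routine bookkeeping.
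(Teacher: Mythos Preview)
Your argument is correct and is the standard textbook proof of the Myhill--Nerode theorem. Note, however, that the paper does not give a proof of this statement at all: it is quoted as a classical result with references to \cite{Myhill,Nerode}, and the surrounding discussion only uses its consequences (existence and uniqueness of the minimal automaton, and the projection of Proposition~\ref{pr:projection}). So there is no ``paper's own proof'' to compare against; what you wrote would serve perfectly well as a self-contained justification, and your closing remark that the quotient automaton is minimal is exactly the fact the paper invokes immediately after the theorem.

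One tiny wording quibble: when you say ``$\phi$ factors through $R_{\Lang}$'', what you actually prove (and use) is that the kernel of $\phi$ refines $R_{\Lang}$, i.e.\ $\phi(u)=\phi(v)\Rightarrow u\,R_{\Lang}\,v$, hence $|A^*/R_{\Lang}|\leq |Q|$. The direction of the implication is the right one for your bound, so the mathematics is fine; only the phrase is slightly backwards.
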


Actually, the equivalence classes are the states of an automaton--called the minimal automaton-- which, remarkably, is \emph{the} smallest deterministic automaton recognizing $\Lang$.  By smallest, we mean the one with the minimal number of states. Thus, the notion of size of an automaton $\A$, denoted $|\A|$ in the sequel, is the number of states of $\A$. We emphasized the determinant `the' in the first sentence of the paragraph to stress the fact that there is only one (up to isomorphism) automaton of minimal size representing $\Lang$.

\begin{example}
The automaton $\F'$ is not minimal. But, $\Lang_{\F}$ is recognized by the minimal automaton $\F''$ below.
\begin{center}
\includegraphics[scale=0.7]{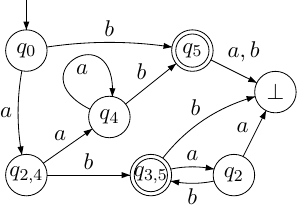}
\end{center}
\end{example}

The size of an automaton serves as an evaluation of its complexity (see for instance~\cite{Yu}).
Due to Myhill-Nerode, one may define the complexity of a regular language to be the size of its
minimal automaton.

\begin{example} There are regular languages of arbitrary large complexity. For instance, on the
alphabet $A = \{ a \}$, consider for all $n >0$ the language $L_n = \{ a^{n} \}$ that consists
of all words on $A$ of length $n$. The linear automaton depicted in the introduction is the
minimal automaton representing $L_n$: it has size $n+2$.
\end{example}

For the proof of Theorem~\ref{th:genus-lower-bound}, we recall the following Proposition (for a proof, see for instance~\cite{Sakarovitch}).

\begin{proposition}\label{pr:projection}
Given two automata $\A = \langle Q, A, q_0, F, \delta\rangle$ and $\A_{min} = \langle Q_{min}, A, q_{0,min}, F_{min}, \delta_{min} \rangle$ representing a common language $\Lang$. Suppose that all states of $\A$  are accessible and that $\A_{min}$ is minimal. Then, there is a function $\rho : Q \to Q_{min}$ such that for all $a \in A$, the following diagram commute:

\begin{center}\hspace{0mm}
\xymatrix{ 1 \ar[r]^{q_0} \ar[rd]_{q_{min}} & Q \ar[rr]^{\delta(-,a)}  \ar[d]_{\rho} && Q \ar[d]^{\rho}\\
&Q_{min} \ar[rr]_{\delta_{min}(-,a)}  &&  Q_{min}}
\end{center}
\end{proposition}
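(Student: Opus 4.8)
The plan is to construct $\rho$ by transporting states along words. Since every state of $\A$ is accessible, for each $q \in Q$ I would choose a word $w_q \in A^*$ with $\delta(q_0, w_q) = q$, and then set $\rho(q) = \delta_{min}(q_{0,min}, w_q)$, the state reached in the minimal automaton upon reading the same word. The content of the proposition is then twofold: that this prescription is independent of the chosen witness $w_q$, and that the resulting map matches the initial states and intertwines the two transition functions.

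The hard part, and really the only nontrivial point, is well-definedness; everything else is a routine check. Suppose $w$ and $w'$ both reach $q$, that is $\delta(q_0, w) = \delta(q_0, w') = q$. Because $\A$ is deterministic and complete and recognizes $\Lang$, for every continuation $u \in A^*$ one has $w \cdot u \in \Lang \iff \delta(q, u) \in F \iff w' \cdot u \in \Lang$; hence $w$ and $w'$ admit no distinguishing extension for $\Lang$, i.e. $w \, R_\Lang \, w'$. By the Myhill-Nerode Theorem the states of $\A_{min}$ are exactly the classes of $R_\Lang$, and reading a word from $q_{0,min}$ lands in the class of that word, so $\delta_{min}(q_{0,min}, w) = \delta_{min}(q_{0,min}, w')$. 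This shows $\rho(q)$ does not depend on the witness, so $\rho : Q \to Q_{min}$ is a genuine function.

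It then remains to verify the two commutativities asserted by the diagram. For the triangle I would take $w_{q_0} = \epsilon$, giving $\rho(q_0) = \delta_{min}(q_{0,min}, \epsilon) = q_{0,min}$, which matches the initial state. For the square, fix $q \in Q$ and $a \in A$ and let $w$ be any word reaching $q$; then $w \cdot a$ reaches $\delta(q, a)$ in $\A$, so by the definition of $\rho$ (now legitimately applied to any witness) together with the inductive extension of $\delta_{min}$ to words, $\rho(\delta(q, a)) = \delta_{min}(q_{0,min}, w \cdot a) = \delta_{min}(\delta_{min}(q_{0,min}, w), a) = \delta_{min}(\rho(q), a)$. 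This is exactly the asserted identity $\rho \circ \delta(-, a) = \delta_{min}(-, a) \circ \rho$, and one checks in the same way that $\rho$ sends $F$ into $F_{min}$, although that is not required by the diagram.
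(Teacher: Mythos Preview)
Your argument is correct and is precisely the standard construction one finds in the textbook the paper cites: the paper does not prove this proposition itself but refers to Sakarovitch for the proof. Defining $\rho(q)$ as the state reached in $\A_{min}$ by any word reaching $q$ in $\A$, and using the Myhill--Nerode description of $Q_{min}$ as $R_{\Lang}$-classes to obtain well-definedness, is exactly the expected route; the verification of the triangle and the square is then immediate as you wrote.
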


%\begin{proof}Ad absurdum. Suppose that $\delta(q_0,u) = q_1 = \delta(q_0,v)$. Then, $\delta(q_0,u \cdot w) = \delta(q_1,w) = \delta(q_0,v \cdot w)$. Since $u \cdot w \in L$,  $\delta(q_1,w) \in F$. Thus, $v \cdot w \in L$, in contradiction with the hypothesis. \hfill $\blacksquare$
%\end{proof}

\section{The genus of a regular language} \label{sec:genus-reg-language}

Let $\A$ be a finite automaton. In the constructions to follow, we
regard $\A$ as a graph where the vertices are the states and the
edges are the transitions\footnote{In particular, two vertices may be
joined by several edges.}. We simply forget about the extra
structures on it (namely, the orientation and the labels of the
edges). We are interested in a class of embeddings of $\A$ into
oriented surfaces. Recall that a {\emph{$2$-cell}} is a
topological two-dimensional disc. An automaton is {\emph{planar}}
if it embeds into a $2$-cell (or equivalently a sphere or a
plane).

% Add a subsection here ?

By means of elementary operations, one can show that $\A$ embeds
into a closed oriented surface $\Sigma$. Among all embeddings that
share that property, choose one such that the complement of the
image of $\A$ in $\Sigma$ is a disjoint union of a finite number
of open $2$-cells. Such an embedding will be called a
{\emph{cellular embedding}}. Again by elementary operations, one
can show that there exists a cellular embedding of $\A$.

As a very simple example, the automaton $\A$ that consists of one
state and one loop embeds in the obvious fashion into the
$2$-sphere. Note that the geometrical realization of $\A$
coincides with the loop.
\begin{wrapfigure}[3]{r}{25mm}
\vspace{-6mm}
\includegraphics{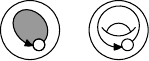}
\end{wrapfigure}
The embedding is cellular because the complement of the loop is
the union of two $2$-cells. The same automaton embeds also into
the torus $T$ as depicted. In this case the embedding is not
cellular because $T \setminus \A$ is a cylinder and not a disjoint
union of $2$-cells.

\begin{example} \label{example:cell-noncell}
Another example is given by an automaton with one state and two
loops. Of the two embeddings depicted into the torus $T$, the top
one is noncellular and the bottom one is cellular. (One should
identify the opposite sides of the square on the left side to
obtain the embedding depicted on the right side.)
\begin{center}
%\vspace{-6mm}
\includegraphics[scale=0.45]{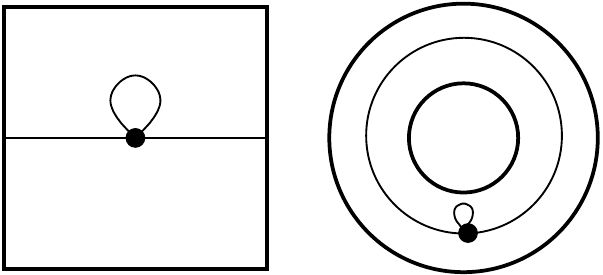} \\
\includegraphics[scale=0.45]{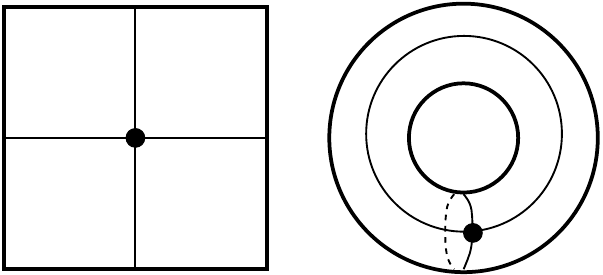}
\end{center}
\end{example}

In this context, the following observation is a tautology:

\begin{lemma} \label{lem:CW-complex}
A cellular embedding of an automaton $\A \subset \Sigma$
determines a finite CW-complex decomposition of the surface
$\Sigma$ in which the $1$-skeleton $\Sigma^{1}$ of $\Sigma$ is
the image of $\A$.
\end{lemma}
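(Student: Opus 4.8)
The plan is to unwind the definitions and verify that the stratification of $\Sigma$ induced by the embedded graph $\A$ satisfies the axioms of a finite CW-complex. Write $G \subset \Sigma$ for the image of $\A$, viewed as the geometric realization of the underlying multigraph; since $\A$ is finite, $G$ is compact, hence closed in the Hausdorff space $\Sigma$, so $\Sigma \setminus G$ is open. First I would declare the $0$-cells to be the images of the states of $\A$ and the $1$-cells to be the interiors of the images of the transitions of $\A$ (each transition, whether an ordinary edge or a loop, contributes one open arc, with the obvious characteristic map $[0,1] \to \Sigma$ sending $\{0,1\}$ into the set of $0$-cells). Because $\A$ embeds, these characteristic maps restrict to homeomorphisms onto the open cells, and the union of the closed $0$- and $1$-cells is exactly $G$; this will therefore be the $1$-skeleton $\Sigma^{1}$.

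Next I would treat the $2$-cells. By the definition of a \emph{cellular} embedding, $\Sigma \setminus G$ is a disjoint union of finitely many open sets $U_{1},\dots,U_{r}$, each homeomorphic to an open $2$-disc. For each $i$ one must produce a characteristic map $\Phi_{i}\colon D^{2} \to \Sigma$ that restricts to a homeomorphism from $\mathrm{int}\,D^{2}$ onto $U_{i}$ and sends $\partial D^{2}$ into $\Sigma^{1}$. This is the one genuinely nontrivial point, and it is the standard ``boundary walk'' construction of topological graph theory: since $\overline{U_{i}}$ is compact with $\overline{U_{i}} \setminus U_{i} \subseteq G$ and $G$ is a graph, one shows that the topological identification $U_{i} \cong \mathrm{int}\,D^{2}$ extends to a continuous surjection $\overline{D^{2}} \to \overline{U_{i}}$ whose restriction to $\partial D^{2}$ traces out a closed edge-walk in $G$. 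I would either cite this fact (e.g.\ Mohar--Thomassen) or sketch it by thickening $G$ to a regular neighborhood $N$ in $\Sigma$, observing that $\Sigma \setminus \mathrm{int}\,N$ is a disjoint union of closed discs, one inside each $U_{i}$, and then enlarging each such disc across the collar to obtain $\Phi_{i}$.

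Finally I would check the CW axioms. The cells are pairwise disjoint and cover $\Sigma$ by construction: every point of $\Sigma$ lies on a state, on the interior of a transition, or in exactly one region $U_{i}$. Since $\A$ is finite and $r < \infty$, there are only finitely many cells, so closure-finiteness is automatic and the weak-topology condition is vacuous (a finite CW-complex carries the quotient topology of its finitely many characteristic maps, and here this agrees with the subspace topology from $\Sigma$ since $\Sigma$ is compact Hausdorff and the union of the characteristic maps is a continuous surjection from a compact space). Hence the cells together with the maps above form a finite CW-complex structure on $\Sigma$ whose $1$-skeleton is $G$, the image of $\A$, as claimed. The only real obstacle is the existence of the $2$-cell characteristic maps; everything else is bookkeeping, which is why the authors call the statement a tautology.
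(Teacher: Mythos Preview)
Your proposal is correct and is a faithful expansion of the paper's own treatment: the paper gives no proof at all, simply declaring the lemma ``a tautology'' in context and pointing to the definition of a CW-complex. Your detailed unpacking of the cell structure and the characteristic maps (including the one nontrivial point about extending over $\partial D^{2}$ via the boundary walk) is exactly the verification the authors leave implicit.
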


A CW-complex is a topological space made up of $k$-dimensional
cells. Here we use $0$-cells (points, corresponding to states),
$1$-cells (topological segments, corresponding to transitions) and
$2$-cells (topological discs). For the precise definition of a
CW-complex decomposition, see for instance \cite[Chap. IV, \S
8]{Bredon}. For instance, the cellular embedding of $\A$ into the
torus $T$ of Example \ref{example:cell-noncell} induces one
CW-complex decomposition of the torus consists of one $0$-cell
(induced by the unique state of $\A$), two $1$-cells (induced by
the two transitions of $\A$) and one $2$-cell (thought of as the
complement of $\A$ in $T$).

Recall that the {\emph{genus}} of a closed oriented surface
$\Sigma$ is the integer $g = \frac{1}{2}\dim
H_{1}(\Sigma;{\mathbb{R}})$. In our context, it is useful to note
that the genus of $\Sigma$ is the maximal number of disjoint
cycles that can be removed from $\Sigma$ such that the complement
remains connected.

\begin{definition}
A cellular embedding of $\A$ into $\Sigma$ is {\emph{minimal}} if
the genus of $\Sigma$ is minimal among all possible surfaces
$\Sigma$ into which $\A$ embeds cellularly.
\end{definition}

\begin{example}
The second embedding of Example \ref{example:cell-noncell} is
cellular: the complement of $\A$ consists in one open $2$-cell. It
is not minimal. Indeed, the automaton embeds into the
$2$-sphere $S^2$: it is realized as the wedge of two circles
$\raisebox{-1mm}{\includegraphics[scale=0.1]{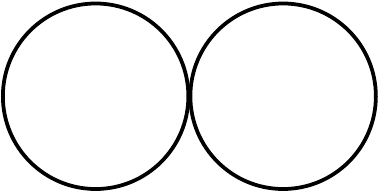}}$
(whose complement in $S^2$ consists of three open $2$-cells).
\end{example}

\begin{definition}
The {\emph{genus}} $g(\A)$ of a finite deterministic automaton $\A$ is
the genus of~$\Sigma$ where $\Sigma$ is a closed oriented surface
into which $\A$ embeds minimally.
\end{definition}

\begin{example}
The genus of the automaton that consists in one state and an arbitrary number
of loops is zero because it embeds into the $2$-sphere.
\end{example}

Let $g_{\A}$ be the smallest number $g_{\Sigma} \in {\mathbb{N}}$
where $\Sigma$ is a closed oriented surface into which $\A$ can be
embedded. Then $g_{\A} \leq g(\A)$ (since all possible embeddings,
included noncellular ones, are considered).

\begin{theorem}[J.W.T. Youngs \cite{Youngs}] \label{th:Youngs} For any automaton $\A$, $g_{\A} = g(\A)$.
In other words, an embedding with minimal genus is cellular.
\end{theorem}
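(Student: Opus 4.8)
\noindent\emph{Proof proposal.} The inequality $g_{\A}\le g(\A)$ has already been observed, so the whole content is the reverse inequality $g(\A)\le g_{\A}$. The plan is to prove the sharper statement announced: \emph{every} embedding of (the underlying graph of) $\A$ into a closed oriented surface $\Sigma$ whose genus equals the minimum $g_{\A}$ is automatically cellular. Granting this, a minimal-genus embedding exists (the genus ranges over the nonnegative integers, so its infimum over all embeddings is attained), it is cellular, hence it is one of the embeddings competing in the definition of $g(\A)$, and therefore $g(\A)\le g_{\A}$. We may assume the underlying graph $G$ of $\A$ is connected: for a general automaton one either argues component by component (using additivity of the genus over connected components) or, as is customary, restricts to the accessible part, which is connected.

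The argument is by contradiction together with a surgery. Suppose an embedding $G\subset\Sigma$ with $\Sigma$ of genus $h$ is \emph{not} cellular. Then some face $F$, i.e.\ some connected component of $\Sigma\setminus G$, is not an open $2$-cell. Its closure $\bar F$ is a compact connected orientable surface with nonempty boundary; by the classification of surfaces, $\bar F$ is a sphere with $b\ge 1$ holes and $p\ge 0$ handles, and since $F$ is not a disc, $p\ge 1$ or $b\ge 2$. Here one should note that the boundary circles of $\bar F$ are embedded circles of $\Sigma$ lying in $G$ (although each maps to a not-necessarily-simple closed walk of the graph), and that $\Sigma\setminus F$ is connected, being $G$ together with the closures of the remaining faces glued along subsets of $G$ --- this is where connectedness of $G$ is used. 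The geometric heart is then to exhibit a simple closed curve $\gamma$ in the interior of $F$ that is \emph{non-separating in $\Sigma$}. If $p\ge 1$, take $\gamma$ to be a handle curve of $\bar F$, already non-separating in $\bar F$; a standard annular-neighbourhood argument shows that a curve disjoint from $G$ and non-separating in $\bar F$ is non-separating in $\Sigma$ (a path in $\bar F\setminus\gamma$ joining the two local sides of $\gamma$ is in particular a path in $\Sigma\setminus\gamma$). If $p=0$ and $b\ge 2$, take $\gamma$ splitting the boundary circles of $\bar F$ into two nonempty families $\mathcal C_{1},\mathcal C_{2}$; then $\gamma$ does separate $\bar F$ but does not separate $\Sigma$, because $\Sigma\setminus F$ is connected and meets circles from both $\mathcal C_{1}$ and $\mathcal C_{2}$, hence reconnects the two sides of $\gamma$ inside $\Sigma\setminus\gamma$.

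Every simple closed curve in an orientable surface is two-sided, so we may cut $\Sigma$ along the non-separating curve $\gamma$ and cap the two resulting boundary circles with discs; the outcome is a closed oriented surface $\Sigma'$ with $\chi(\Sigma')=\chi(\Sigma)+2$, that is, of genus $h-1$. Since $\gamma\subset F$ avoids $G$, the graph $G$ still embeds in $\Sigma'$. Taking $h=g_{\A}$, this contradicts the minimality of $g_{\A}$. Hence a minimal-genus embedding has no non-disc face, i.e.\ it is cellular, and $g_{\A}=g(\A)$.

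The main obstacle is the step producing the non-separating curve: the purely topological fact that a non-disc face always carries a simple closed curve that is non-separating in the whole surface, and in particular the subcase $b\ge 2$, which is precisely where connectedness of the automaton's graph is genuinely needed. The remaining points requiring care are mild: the description of faces as open subsurfaces whose closures are surfaces with boundary, and the routine bookkeeping that cutting along a two-sided non-separating curve and capping off lowers the genus by exactly one.
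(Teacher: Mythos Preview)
The paper does not prove this theorem: it is stated with attribution to Youngs and used as a black box. So there is no ``paper's own proof'' to compare against.

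Your argument is the standard surgery proof and is essentially correct. One point of phrasing deserves care: you write ``its closure $\bar F$ is a compact connected orientable surface with nonempty boundary,'' but the closure of $F$ \emph{in $\Sigma$} need not be a manifold with boundary (boundary walks may revisit vertices and edges, pinching the frontier). What you really want is the intrinsic compactification of the open surface $F$: since $F$ is a finite-type open orientable surface (the complement of a finite graph in a closed surface), it is homeomorphic to the interior of a compact surface $\hat F$ with $b\ge 1$ boundary circles and $p\ge 0$ handles, and there is a continuous map $\hat F\to\Sigma$ restricting to the inclusion on the interior and sending the boundary circles into $G$. Your parenthetical (``each maps to a not-necessarily-simple closed walk of the graph'') shows you see the issue, but it sits uneasily next to the claim that the boundary circles are ``embedded circles of $\Sigma$,'' which is false in general. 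Once you work with $\hat F$ rather than the set-theoretic closure, both cases of your curve-finding step go through exactly as you describe, and the rest (two-sidedness, $\chi$ increases by $2$ after cutting and capping, contradiction with minimality) is routine.
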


We shall use Youngs' result throughout this paper.

\begin{example} \label{example:K5-in-torus}
Consider the example of the graph $K_{5}$, the complete graph on
five vertices. It is well known that $K_5$ is not planar. Embed it
into the torus $T$ as depicted in Fig. \ref{fig:cellular-emb1}.
\begin{figure}[!h]
\begin{center}
\includegraphics[scale=0.70]{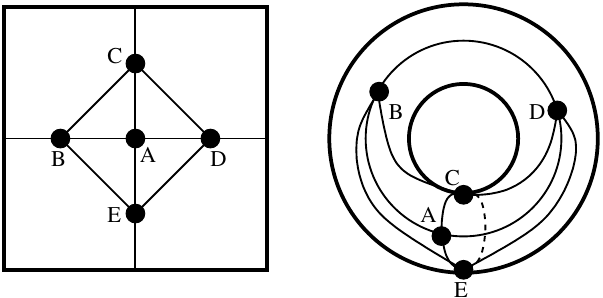}
\end{center}
\caption{A cellular embedding of the graph $K_5$ in the torus
$T$.} \label{fig:cellular-emb1}
\end{figure}
Since the torus has genus $1$, the embedding is minimal. One
verifies that it is also cellular: the complement of $K_5$ in $T$
consists of five disjoint open $2$-cells.

% In fact, there are five other nonequivalent minimal embeddings of
% $K_5$ into the torus.
\end{example}

We can now formally state the definition of the genus of a regular
language.

\begin{definition}
Let $L$ be a regular language. The {\emph{genus}} $g(L)$ of $L$ is
the minimal genus of a complete finite deterministic automaton recognizing~$L$:
$$ g(L) = \min \{ g(\A) \ | \ L = L_{\A}, \ \A\ {\hbox{complete finite deterministic}} \}.$$
\end{definition}

%Recall that the smallest size of its representing automata is the standard invariant used to measure the complexity of a regular language.

There is a simple upper bound for the genus of a deterministic automaton.

\begin{proposition} \label{prop:general-genus-upper-bound}
Let $\A$ be a deterministic automaton with $m$ letters and $n$ states. Then
$$g(\A) \leq 1+ \frac{(m-1)n}{2}.$$
\end{proposition}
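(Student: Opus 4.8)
The plan is to obtain the bound from Euler's formula applied to a genus-realizing cellular embedding of $\A$. Since the genus depends only on the underlying graph $G$ of $\A$ (states as vertices, transitions as edges, loops and parallel edges allowed), I first reduce to the connected case: in general $G$ splits into connected components $G_1,\dots,G_c$, the graph genus is additive over connected components (Battle-Harary-Kodama-Youngs), and the bounds for the $G_i$ will sum to the desired one because $\sum_i |V(G_i)| = n$. So assume $G$ is connected. Since $\A$ is complete and deterministic, every state emits exactly one transition per letter, so $G$ has $V = n$ vertices and $E = mn$ edges.

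By definition, $g(\A)$ is the genus of a closed orientable surface $\Sigma$ into which $G$ embeds cellularly with minimal genus; by Lemma~\ref{lem:CW-complex} this embedding gives a CW-decomposition of $\Sigma$ with $n$ $0$-cells, $mn$ $1$-cells and $F$ $2$-cells, where $F \geq 1$ because $\Sigma \setminus G$ is a nonempty open set that is a disjoint union of finitely many open discs. Euler's formula $V - E + F = 2 - 2\,g(\A)$ then gives $2\,g(\A) = 2 - n + mn - F \leq 1 + (m-1)n$, hence $g(\A) \leq \tfrac{1}{2}\bigl((m-1)n + 1\bigr)$, which for $n \geq 1$ is at most $mn$. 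Restoring the general case, $g(\A) = \sum_{i=1}^{c} g(G_i) \leq \tfrac{1}{2}\sum_{i=1}^{c}\bigl((m-1)n_i + 1\bigr) = \tfrac{1}{2}\bigl((m-1)n + c\bigr) \leq mn$ since $c \leq n$.

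I do not expect a real obstacle here; the statement is elementary once the setup is in place. The only points that need a little care are that in a complete deterministic automaton every transition has its source and target in the same connected component (so each $G_i$ has exactly $m\,n_i$ edges and the sums are clean), and that the reduction to connected graphs is genuinely needed, since a disconnected graph has no cellular embedding and Euler's formula cannot be applied to $G$ as a whole. If one prefers to avoid the additivity result, an alternative is to embed a spanning forest of $G$ in the sphere and then re-insert the remaining $mn - n + c$ edges one at a time, each insertion adding at most one handle, which again yields $g(\A) \leq mn$.
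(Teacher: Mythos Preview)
Your proof is correct and follows the same approach as the paper, which simply cites Euler's formula (\ref{eq:Euler-Poincare}) without further detail. You are in fact more careful than the paper: you handle the connectedness hypothesis needed for a cellular embedding (via additivity of the genus), a point the paper glosses over.
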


\begin{proof}
This follows from Euler's formula (\ref{eq:Euler-Poincare}) (see \S \ref{subsec:preliminary}). \hfill $\blacksquare$ % Recall it before ?
\end{proof}

Given some fixed alphabet, Prop. \ref{prop:general-genus-upper-bound} shows that the genus of a regular language $L$ is smaller than the size of a minimal automaton recognizing $L$ up to some linear factor. Hence the main problem we face is to compute a lower bound for the genus (see Th. \ref{th:genus-growth-language}).

The next two results deal with the completeness of automata and reachable states. They are instrumental in nature: they say that the completion of a automaton of minimal genus and the suppression of its unreachable states do not modify the genus. These facts will be used in the sequel without further notice.

\begin{proposition}\label{pr:complete}
For any regular language $L$ with genus $g$, there is a complete, deterministic automaton of genus $g$ representing $L$.
\end{proposition}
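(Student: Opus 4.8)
The plan is as follows. One reading of the statement is immediate from the definition: $g(L)$ is the minimum of the genera $g(\A)$ over the nonempty set of complete finite deterministic automata recognizing $L$, and a minimum over a nonempty subset of $\N$ is attained. The substantive content, and what is invoked later, is the stronger fact that \emph{completing} a deterministic automaton never changes its genus, so that a genus-minimal automaton recognizing $L$ may always be taken complete; this is what I would actually prove. So I would proceed thus: given any deterministic automaton $\A$ recognizing $L$ (a priori incomplete) of genus $g$, build a complete deterministic automaton $\A^{\bullet}$ recognizing $L$ with $g(\A^{\bullet}) = g$. One inequality is for free: $\A$ is a subgraph of any completion of it, and an embedding of a graph restricts to an embedding of every subgraph, so, using Youngs' Theorem~\ref{th:Youngs} to identify $g(-)$ with the least genus of a surface into which the graph embeds at all, $g(\A) \le g(\A^{\bullet})$. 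Hence everything reduces to completing $\A$ \emph{without raising the genus}.

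For this, fix, by Youngs' Theorem, a minimal --- hence cellular --- embedding of $\A$ into a closed orientable surface $\Sigma$ of genus $g$; then $\Sigma \setminus \A$ is a disjoint union of open $2$-cells $F_{1}, \dots, F_{r}$. The naive completion adds a single trash state $\perp$ and a transition on $a$ from $q$ to $\perp$ whenever $\delta(q,a)$ is undefined, which may force new edges to run across several faces and could in principle raise the genus. Instead I would put \emph{one} trash state $\perp_{i}$ in the interior of each face $F_{i}$ that is needed, declared non-final, all of whose transitions are the self-loops $\delta(\perp_{i},a) = \perp_{i}$ drawn inside $F_{i}$. Each state $q$ of $\A$ lies on the boundary of --- or, if $q$ is isolated, in the interior of --- at least one face $F_{i(q,a)}$, and for each missing pair $(q,a)$ I add a transition labelled $a$ from $q$ into that face, ending at $\perp_{i(q,a)}$. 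All the new edges and self-loops can be drawn pairwise disjointly inside the disc-shaped faces and disjointly from $\A$, so the resulting automaton $\A^{\bullet}$ still embeds into $\Sigma$.

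Then I would check that $\A^{\bullet}$ does the job. It is deterministic and complete: every original state now has exactly one transition per letter, and so does each $\perp_{i}$. It recognizes $L$: the initial state and the transitions among the states of $\A$ are untouched, while a computation that would previously have blocked on a missing transition now enters some $\perp_{i}$ and loops there forever without ever meeting a final state; hence $\Lang_{\A^{\bullet}} = \Lang_{\A} = L$. Finally $\A^{\bullet}$ embeds into $\Sigma$, so $g(\A^{\bullet}) \le g$ by Youngs' Theorem, and together with $g(\A^{\bullet}) \ge g(\A) = g$ this gives $g(\A^{\bullet}) = g$. Applying this to a deterministic automaton recognizing $L$ of minimal genus produces a complete one of genus $g(L)$.

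The only genuine obstacle is the one flagged above: a cross-face routing of the added transitions could really increase the genus, and distributing the trash states one per face is precisely what makes the completion cost-free (the language only ever sees that these states are dead). The remaining book-keeping --- isolated states, parallel new edges from a state to a trash state, several loops inside one face --- is routine and does not disturb planarity within a face.
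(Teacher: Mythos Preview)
Your proof is correct and follows essentially the same approach as the paper: complete the automaton locally inside the faces of a minimal cellular embedding so that the completion still embeds in the same surface. The only cosmetic difference is that the paper distributes trash states one per \emph{incomplete state} (placing $\bot_q$ in a face adjacent to $q$), whereas you distribute them one per \emph{face}; both devices avoid cross-face routing and keep the genus unchanged, and your write-up is in fact more explicit than the paper's about why the added edges and loops fit disjointly inside a disc.
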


\begin{proof}Let $L$ be a regular language with genus $g$. Then, there is a deterministic automaton $\A = \langle Q, A, q_0, F, \delta\rangle$  representing $L$ that embeds cellularly in a surface $\Sigma$ of genus $g(\A) = g$. First, to any state $q$ of $\A$ which would not be complete, add a new trash state $\bot_q$ with the transitions $\delta(q,a) =\bot_q$ for all letter $a$ such that $\delta(q,a)$ is not defined. Second, to each of these new trash states, add loops $\delta(\bot_q,a) = \bot_q$ for all $a \in A$. Clearly, the new transitions embed into $\Sigma$ and do not modify the genus of $\A$.  \hfill $\blacksquare$
\end{proof}

A state $q$ of a (deterministic) automaton $\A = \langle Q, A, q_0, F, \delta\rangle$ is said to be \emph{reachable} if there is a word $w$ such that $\delta(q_0,w) = q$.

\begin{proposition}\label{pr:reachable}For any regular language $L$ with genus $g$, there is a deterministic, complete automaton $\A$ of genus $g$ representing $L$ such that all states of $\A$ are reachable.
\end{proposition}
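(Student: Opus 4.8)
The plan is to start from a complete deterministic automaton $\A = \langle Q, A, q_0, F, \delta \rangle$ recognizing $L$ that realizes the genus, i.e.\ $g(\A) = g(L) = g$; such an automaton exists directly from the definition of $g(L)$ (and, if one prefers to start from an arbitrary genus-minimal automaton, after applying Proposition \ref{pr:complete}). I would then simply discard every state that is not reachable from $q_0$, together with all transitions incident to it, and argue that what remains is an automaton with the required properties.

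The first thing to check is that the restriction $\A'$ of $\A$ to the set $Q' \subseteq Q$ of reachable states is again a complete deterministic automaton recognizing $L$. Determinism is inherited. For completeness, observe that $Q'$ is closed under $\delta$: if $q \in Q'$ is reached by a word $w$ and $a \in A$, then $\delta(q,a)$ is defined (since $\A$ is complete) and is reached by $w \cdot a$, hence lies in $Q'$. Thus every transition leaving a state of $Q'$ stays inside $Q'$, so $\A'$ is complete, and $\delta$ restricts to a function $Q' \times A \to Q'$. Since $q_0 \in Q'$ and no accepting run of $\A$ can visit an unreachable state, $\Lang_{\A'} = \Lang_\A = L$. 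By construction all states of $\A'$ are reachable.

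It remains to see that $g(\A') = g$. On one hand, $\A'$ is a complete deterministic automaton recognizing $L$, so $g(\A') \geq g(L) = g$ by the definition of the genus of a language. On the other hand, the underlying graph of $\A'$ is a subgraph of that of $\A$, so any embedding of $\A$ into a closed oriented surface $\Sigma$ restricts to an embedding of $\A'$ into $\Sigma$; hence $g_{\A'} \leq g_{\A}$. By Youngs' theorem (Theorem \ref{th:Youngs}) this gives $g(\A') = g_{\A'} \leq g_{\A} = g(\A) = g$. Combining the two inequalities yields $g(\A') = g$, and $\A'$ is the desired automaton.

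I do not expect any genuine obstacle here: the only point requiring (minor) care is the closure of the set of reachable states under transitions, which is precisely what makes the deletion of unreachable states compatible with completeness; the genus bookkeeping is then immediate from monotonicity of the embedding genus under passing to subgraphs, together with Youngs' theorem to match cellular and non-cellular genus.
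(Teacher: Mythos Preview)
Your argument is correct and follows essentially the same approach as the paper: remove unreachable states, observe that the resulting subgraph recognizes the same language and has genus at most $g$, and conclude by minimality of $g$. You spell out two points the paper leaves implicit---closure of the reachable set under $\delta$ (ensuring completeness) and the appeal to Youngs' theorem to handle the possibly non-cellular restricted embedding---but the strategy is identical.
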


\begin{proof}Consider an automaton $\A$ of genus $g$ representing $L$. Remove all unreachable states and the corresponding transitions from $\A$. The language recognized by the modified automaton is still $L$. Being a subgraph of $\A$, the new automaton has a genus smaller or equal to $g$.  Since it represents the same language $L$, its genus must be equal to $g$. All its states are reachable. \hfill $\blacksquare$
\end{proof}

\subsection{Combinatorial cycles and faces} \label{subsec:cycles-faces}

In this paragraph, we introduce cycles and faces. A cycle is a notion
that depends only on the abstract graph, while a face depends on a cellular embedding of
the graph. The notion of faces is crucial in the Genus Formula (Theorem \ref{th:genus-formula})
and instrumental in the Genus Growth Theorem (Theorem \ref{th:genus-growth-theorem}).

\begin{definition}
Let $p \geq 1$. A {\emph{walk in $\A$}} is a finite alternating
sequence of vertices (states) and edges $s_0, t_1, s_1, t_2
\ldots, t_p, s_p$ of $\A$ such that for each $j = 1, \ldots, p$,
the states $s_{j-1}$ and $s_{j}$ are the endpoints of the edge
$t_{j}$. The {\emph{length}} of the walk is the number of edges
(counting repetitions). An {\emph{internal vertex}} of the walk is
any vertex in the walk, distinct from the first vertex $s_0$ and
the last vertex $s_p$. The walk is {\emph{closed}} if the first
vertex is the last vertex, $s_0 = s_p$.
\end{definition}

Recall that we regard $\A$ as an unoriented graph: one can walk along an edge
opposite to the original orientation of the transition. The edge
should be nonempty: there should be an actual transition in one
direction or the other. In particular, if there is no transition
from a state $s$ to itself, then the vertex $s$ cannot be repeated
in the sequence defining a walk.

If the underlying graph is simple, then we suppress the notation
of the edges: a walk is represented by a sequence of vertices
$s_0, s_1, \ldots, s_p$ such that any two consecutive vertices are
adjacent.

\begin{definition}
Consider the set $W(p)$ of closed walks of length $p$ in $\A$. The
group of cyclic permutations of $\{1, \ldots, p\}$ acts on $W(p)$.
A {\emph{combinatorial cycle}} of length $p$, or simply a {\emph{$p$-cycle}},
is an orbit of a closed walk of length~$p$.
\end{definition}

In other words, two closed walks represent the same combinatorial cycle
if there is a cyclic permutation that sends one onto the
other. This definition is propped by the fact that we are
interested in geometric cycles only and we do not want to count
them with multiplicities with respect with the start of each node.

\begin{remark}
Our definition of a cycle departs from the traditional one in graph theory:
repetitions of edges and internal vertices may occur. A combinatorial cycle in
which no edge occurs more than one will be called a {\emph{simple}} cycle. (We still
allow repetition of an internal vertex that has several loops.)
\end{remark}

%\begin{definition}
%A {\emph{trail}} in $\A$ is a walk such that no edge occurs more
%than once. A {\emph{path}} in $\A$ is a trail such that no
%internal vertex is repeated.
%\end{definition}
%
%Note that the repetition of an internal vertex $s$ is allowed in a
%trail (provided that there is a transition from $s$ to itself
%which has not occurred before), which is precisely what is
%excluded in a path. For instance, consider the simple graph $K_5$
%with the labelling of vertices given by
%Example~\ref{example:K5-in-torus}. The vertex sequence $BCACD$
%represents a walk that is not a trail; the vertex sequence
%$BCADEA$ represents a trail that is not a path.

%\begin{definition}
%A {\emph{combinatorial cycle of length $p$}} is the orbit of a
%free closed walk of length $p$. If no confusion is likely to
%occur, we simply write cycle for combinatorial cycle. A {\emph{rigid cycle}} is a
%cycle that is the orbit of a free closed walk such that no edge occurs more
%than once.
%\end{definition}

We denote the set of all $p$-cycles in $\A$ by
$Z_{p}(\A)$. Since $\A$ is finite,
$Z_{p}(\A)$ is finite. We set $z_{p} = |Z_{p}(\A)|$.

%\begin{remark} \label{rem:cycle-may-not-bound}

The definitions of walks and cycles are intrinsic to the graph: they do not
depend on an embedding (or a geometric realization) of the graph. However, they
are directly related to topology once an embedding is given.
Let $\A$ be an automaton embedded in a surface $\Sigma$. Each
combinatorial cycle determines a geometric $1$-cycle (in
the sense of singular homology) in $\Sigma$. Therefore combinatorial loops are thought of as
combinatorial analogues of singular $1$-cycles
(in the sense of singular homology).

%\begin{wrapfigure}[3]{r}{25mm}
%\vspace{-6mm}\includegraphics{surface-2.pdf}
%\end{wrapfigure}
%
%However, as usual, a $1$-cycle, and more generally a $p$-cycle,
%does not necessarily bound a face. Consider an automaton made of
%one state and one loop. On the sphere, the loop bounds the gray
%face, on the torus, it does not bound any face. Thus, the notion
%of face is of topological nature. In particular, contrarily to
%$e_0$ and $e_1$, the value of  $e_2$ in Euler formula cannot be
%determined directly from the set-theoretic description of the
%automaton, it refers to a minimal embedding of the graph.
%\end{remark}

In what follows, consider a cellular embedding of $\A$ into a closed oriented surface $\Sigma$.
By definition, the set $\pi_{0}(\Sigma - \A)$ of connected components of $\Sigma - \A$
consists of a finite number of $2$-cells. The image in $\Sigma$ of the set $\A^{1}$ of
edges of $\A$ is the $1$-skeleton $\Sigma^{1}$ of $\Sigma$. With a slight abuse of notation, we shall
denote by the same symbol $\Sigma^{1}$ the collection of embedded edges of $\A$.
Consider an edge $e \in
\Sigma^{1}$ and an open $2$-cell $c \in \pi_{0}(\Sigma - \A)$. It follows from definitions
that if ${\rm{Int}}(e)$ and ${\rm{Fr}}(c)$ intersect nontrivially then $e \subset {\rm{Fr}}(c)$. Since
$\Sigma$ is a $2$-manifold, there is at most one component $c'$ of $\Sigma - \A$, $c' \not= c$,
such that $e \subset {\rm{Fr}}(c)$.

Without loss of generality, we may assume that the embedded edge $e$ is a smooth arc. Let $x$ be a point in $e$. Define a small nonzero normal vector $\overrightarrow{n}$ at $x$. If $\overrightarrow{n}$ and $-\overrightarrow{n}$ point to distinct components $c, c'$ of $\Sigma - \A$, then $e \subset {\rm{Fr}}(c) \cap {\rm{Fr}}(c')$: there are two distinct components separated by $e$. In this case we say that $e$ is
{\emph{bifacial}}. If $\overrightarrow{n}$ and $-\overrightarrow{n}$ point to the same component $c$ of $\Sigma - \A$, then $c$ is the unique component of $\Sigma - \A$ such that $e \subset {\rm{Fr}}(c)$. In this case, one says that the edge $e$ is {\emph{monofacial}}.

We define
a pairing $\langle -, -\rangle: \Sigma^{1} \times \pi_{0}(\Sigma - \A) \to \{ 0, 1, 2\}$ by

$$ \langle e, c \rangle = \left\{ \begin{array}{cl}
0 & {\rm{if}}\ e \cap {\rm{Fr}}(c) = \varnothing \\
1 & {\rm{if}}\ e \subset {\rm{Fr}}(c)\ {\rm{and}}\ e \subset {\rm{Fr}}(c')\ {\hbox{for}}\ c'\in \pi_{0}(\Sigma - \A) - \{ c \}\\
2 & {\hbox{if $c$ is the unique component of $\Sigma - \A$ such that}}\ e \subset {\rm{Fr}}(c).
 \end{array} \right.$$

From the discussion above, it follows that for any edge $e \in \Sigma^{1}$,
\begin{equation}
\sum_{c \in \pi_{0}(\Sigma - \A)} \langle e, c \rangle = 2. \label{eq:two-factor}
\end{equation}

\begin{definition}
Let $k \geq 1$. A component $c$ of $\Sigma - \A$ is a {\emph{$k$-face}} if $$ \sum_{e \in \Sigma^{1}} \langle e,c \rangle = k.$$
The set of $k$-faces is denoted $F_{k}$. We let $f_{k}$ denote the number of elements in $F_{k}$. A {\emph{face}} is a $k$-face for some $k \geq 1$. The set of faces is denoted $F$.
\end{definition}

The following properties hold:
\begin{enumerate}
\item[(1)] The sets $F_{k}, k \geq 1$ are disjoint;
\item[(2)] All sets $F_{k}$ but finitely many are empty;
\item[(3)] $F = \displaystyle  \bigcup_{k \geq 1} F_{k} =  \pi_{0}(\Sigma - \A)$.
\end{enumerate}

\begin{definition}
Let $k \geq 1$. A {\emph{combinatorial $k$-gon}} in $\Sigma$ is a
$k$-cycle of $\A$ that bounds a $k$-face of $\Sigma - \A$. A $2$-gon will also be called a {\emph{bigon}}. A cycle of length $1$ will be called
a {\emph{loop}}.
\end{definition}

\begin{lemma} \label{lem:one-gon-is-bifacial}
Any $1$-gon has a bifacial edge.
\end{lemma}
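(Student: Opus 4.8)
The plan is to reduce the statement to the arithmetic of the pairing $\langle -, - \rangle$ together with the definitions of ``$k$-face'' and ``bounds''. First I would unpack the hypothesis: a $1$-gon consists of a $1$-cycle of $\A$ — that is, a single loop edge $e$ based at some vertex $v$, the closed walk $v, e, v$ of length $1$ — together with a $1$-face $c$ of $\Sigma - \A$ that this cycle bounds. Here ``bounds'' is read through the CW-structure supplied by Lemma~\ref{lem:CW-complex}: the combinatorial boundary walk of the closed $2$-cell $\overline{c}$, traversed as a closed walk in $\A$, equals $v, e, v$ up to cyclic rotation and reversal.

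Next I would record the only arithmetic input. Since $c$ is a $1$-face, $\sum_{e' \in \Sigma^{1}} \langle e', c \rangle = 1$, and every summand lies in $\{0, 1, 2\}$; hence exactly one edge $e_0$ satisfies $\langle e_0, c \rangle = 1$, while $\langle e', c \rangle = 0$ for every other edge $e'$. In particular $c$ has no monofacial edge (no edge $e'$ with $\langle e', c \rangle = 2$). It then remains to identify $e_0$ with the edge $e$ of the $1$-gon: the edges $e'$ with $\langle e', c \rangle \neq 0$ are precisely those occurring in the boundary walk of $\overline{c}$, and since that walk is $v, e, v$, the edge $e$ is the unique such edge. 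Therefore $e_0 = e$ and $\langle e, c \rangle = 1$, which by the very definition of the pairing means that $e$ lies on the frontier of two distinct components of $\Sigma - \A$, i.e. $e$ is bifacial.

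Equivalently, and perhaps more transparently, I would argue by contraposition. Suppose the loop $e$ of the $1$-gon were monofacial; then both local sides of $e$ belong to a single component $c'$ of $\Sigma - \A$, so the boundary walk of $\overline{c'}$ traverses $e$ twice, giving $\langle e, c' \rangle = 2$ and hence $\sum_{e'} \langle e', c' \rangle \geq 2$, so $c'$ is a $k$-face with $k \geq 2$. As $c'$ is the only component adjacent to $e$, the $1$-cycle $\{ e \}$ cannot bound any $1$-face, contradicting the existence of the $1$-gon. Together with Equation~(\ref{eq:two-factor}) this closes the argument.

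The main obstacle is not mathematical but expository: one must pin down precisely what ``a cycle bounds a face'' means in the CW-setting and verify that the edge set of the bounding walk of $c$ coincides with $\{\, e' \in \Sigma^{1} : \langle e', c \rangle \neq 0 \,\}$ with multiplicities matching the pairing values. Once this correspondence is made explicit, the conclusion is pure bookkeeping with the values $\{0, 1, 2\}$.
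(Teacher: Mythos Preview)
Your argument is correct and, in fact, more direct than the paper's. You observe that the conclusion is essentially encoded in the arithmetic of the pairing: a $1$-face $c$ satisfies $\sum_{e'} \langle e', c\rangle = 1$ with summands in $\{0,1,2\}$, so no edge on $\mathrm{Fr}(c)$ can have $\langle e', c\rangle = 2$; the loop $e$ of the $1$-gon lies on $\mathrm{Fr}(c)$, hence $\langle e, c\rangle = 1$, and $e$ is bifacial by definition. The only point requiring care, which you rightly flag, is matching the informal ``bounds'' with the statement that $e$ actually lies on $\mathrm{Fr}(c)$; once that is granted the rest is bookkeeping.

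The paper proceeds quite differently: it proves the contrapositive \emph{topologically}, showing that a monofacial (i.e.\ nonseparating) simple closed curve $\alpha$ represents a nonzero class in $H_1(\Sigma)$, by constructing a dual loop $\beta$ with algebraic intersection $[\beta]\cdot[\alpha] = \pm 1$; since a $1$-gon bounds a disc and is therefore contractible, it cannot be monofacial. This is the classical ``nonseparating implies homologically nontrivial'' argument. The paper's route yields the slightly more general statement that \emph{any} contractible embedded loop is separating (not just one that bounds a face of the given CW-decomposition), at the cost of invoking the homology intersection pairing. Your route stays entirely within the combinatorics the paper has already set up and exploits the definition of $k$-face more sharply; it is shorter and avoids homology altogether, but gives only the stated lemma.
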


The proof follows from the more general fact that a contractible simple closed curve is separating. See \S \ref{subsec:one-gon-is-bifacial} for a proof.

%\vskip 0.2cm

\begin{wrapfigure}[6]{r}{24mm}
\vspace{-0mm}
\includegraphics[scale=0.5]{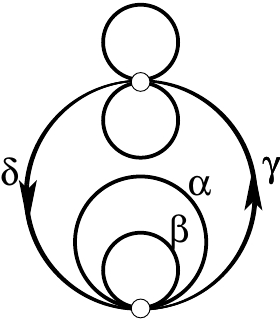}
\end{wrapfigure}

The automaton depicted opposite has two states; each state has three outgoing transitions. It is cellularly embedded into the plane. All edges are bifacial. The edge $\alpha$ is a loop contractible in $\Sigma$ but is not a $1$-gon; $\alpha\beta$ is a cycle of length~$2$ that is a bigon; $\gamma \delta$ is a cycle of length $2$, contractible in $\Sigma$, that is not a bigon.

According to Lemma \ref{lem:one-gon-is-bifacial}, a cycle of length $1$ is monofacial if and only if it is not a $1$-gon (if and only if it represents a nontrivial element in $1$-homology). A bigon may have monofacial edges, even in a cellular embedding: the cellular embedding of Example \ref{example:cell-noncell} provides such an instance.

\begin{remark}
For any $k \geq 1$, $f_{k} \leq z_{k}$. Indeed, just as in
homology, a combinatorial cycle does not necessarily bound a
combinatorial face. For instance, in the $K_5$ embedded in the
torus as in Example~\ref{example:K5-in-torus}, the simple cycle $BCDB$ of
length $3$ is not a $3$-gon. In fact, it does not bound any $2$-cell.
\end{remark}

\begin{lemma}[$1$-gon lemma] \label{lem:one-gon-lemma}
There exists a minimal embedding of $\A$ such that any cycle of length one is a $1$-gon and in particular, is bifacial.
\end{lemma}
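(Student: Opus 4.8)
The plan is to start from any minimal embedding of $\A$ and to modify it, one offending loop at a time, so that every length-one cycle becomes a $1$-gon, without raising the genus. By Youngs' theorem (Theorem \ref{th:Youngs}) every minimal embedding is cellular, so I may work with a cellular embedding throughout; the issue is purely that a loop edge $\alpha$ at a vertex $s$ may be monofacial, i.e. (by Lemma \ref{lem:one-gon-is-bifacial} and the remark following it) $\alpha$ represents a nontrivial class in $H_1(\Sigma;\mathbb{R})$, so the single $2$-cell $c$ incident to $\alpha$ meets $\alpha$ on both sides. The key observation is that a loop at a vertex is, in the cyclic rotation system at $s$, an edge that appears twice consecutively in the boundary walk of $c$ (on the two sides); the local picture near $s$ is a monogon attached to the surface by a handle.

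First I would set up the combinatorial framework: encode the cellular embedding by its rotation system (a cyclic ordering of the half-edges at each vertex), so that faces are the orbits of the face-tracing permutation, and genus is read off from Euler's formula $2-2g = V - E + F$. A monofacial loop $\alpha$ at $s$ contributes a pattern ``$\ldots \alpha \alpha \ldots$'' at $s$ enclosing a sub-arc of the rotation at $s$; cutting the surface along $\alpha$ and re-gluing — equivalently, performing the local move that ``untwists'' the loop by detaching it from the handle and re-embedding it as a small contractible monogon in the face $c$ immediately to one side — keeps the $1$-skeleton (hence $\A$) unchanged as a graph, keeps the embedding cellular, and changes the counts by $E \mapsto E$, $V\mapsto V$, and either splits $c$ into two faces ($F\mapsto F+1$, so $g\mapsto g-1$, contradicting minimality unless this never happens) or, more carefully, one shows the move strictly decreases genus, so it cannot apply at a minimal embedding — wait, that would already prove the lemma trivially, which is suspicious. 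The correct statement is subtler: I would instead argue that one can choose, among all minimal embeddings, one minimizing the number of monofacial loops, and then show that if any monofacial loop remains, a genus-preserving surgery removes it, contradicting minimality of that count. The surgery: near the monofacial loop $\alpha$ bounding a Möbius-like neighborhood inside the face $c$, slide $\alpha$ across $c$; because $c$ is a $2$-cell, $\alpha$ can be homotoped within $c\cup\alpha$ to a contractible curve, and this homotopy is realized by a sequence of elementary embedding moves that each preserve cellularity and genus (they only re-route $\alpha$ through $c$, never creating a handle, since $c$ is a disc). After the move $\alpha$ is a $1$-gon; one must check no previously-bifacial loop became monofacial, which holds because the move is supported in a neighborhood of $\alpha\cup c$ and affects only faces incident to $\alpha$.

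The main obstacle will be making the surgery genuinely genus-preserving and confirming it terminates: I need to rule out that fixing one monofacial loop creates another, and to be careful that the ``slide $\alpha$ through the disc $c$'' operation is legitimate when $c$'s boundary walk traverses $\alpha$ twice and possibly traverses other loops at $s$ as well. I would handle this by an explicit rotation-system computation at the single vertex $s$: list the half-edges at $s$ in cyclic order, locate the block between the two occurrences of $\alpha$, and move $\alpha$ to sit immediately adjacent to itself with an empty block on one side — this is a reshuffling of the cyclic order at $s$ only, and I would verify directly that it leaves $F$ unchanged (the face $c$ is replaced by a face $c'$ with a $1$-gon pinched off, net $F$ unchanged because the pinched monogon is the ``new'' contractible face while $c$ loses one edge-incidence — the bookkeeping via the $\langle e,c\rangle$ pairing of equation (\ref{eq:two-factor}) is the clean way to see $\sum_k f_k$ is controlled). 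Since the move strictly decreases the quantity ``number of half-edge pairs realizing a monofacial loop'' while preserving $g(\A)$, iterating finitely many times produces the desired minimal embedding in which every length-one cycle is a $1$-gon, hence bifacial by Lemma \ref{lem:one-gon-is-bifacial}. $\blacksquare$
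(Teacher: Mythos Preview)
Your proposal rests on a conflation that breaks the argument: you treat ``monofacial loop'' and ``loop that is not a $1$-gon'' as synonyms. They are not. A loop $\alpha$ at $s$ is a $1$-gon exactly when its two half-edges are \emph{consecutive} in the rotation at $s$; it is monofacial when both half-edges lie in the \emph{same} face orbit of $\phi = \sigma\theta$. A loop whose half-edges are non-consecutive but fall into distinct face orbits is bifacial yet not a $1$-gon, and this genuinely occurs in minimal embeddings: take two triangles glued at a cut vertex $q$ and embed the result in the sphere; the outer hexagonal face visits $q$ at two corners, and inserting the two half-edges of a new loop at $q$ into those two corners gives a genus-$0$ (hence minimal) embedding in which the loop is bifacial but bounds no $1$-face. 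The paper's own remark after Lemma~\ref{lem:one-gon-is-bifacial} is misleading on exactly this point, and you have inherited the confusion.

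This conflation damages two steps. First, the observation you dismiss as ``suspicious'' is correct and is half of the proof: your rotation move (make the two half-edges adjacent) applied to a \emph{monofacial} loop does \emph{not} leave $F$ unchanged --- the single face cycle $(d_1,P,d_2,Q)$ splits into three orbits $(d_2)$, $(d_1,Q)$, $(P)$, so $F$ jumps by $2$ and the genus drops by $1$. Hence a minimal embedding contains no monofacial loop, full stop; there is nothing to ``slide through $c$'', and the homotopy you sketch is in any case impossible since a monofacial simple loop is nontrivial in $H_1(\Sigma)$. Second, your termination measure (number of monofacial loops) is then identically zero on minimal embeddings and cannot drive an induction. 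The case that actually needs work is the \emph{bifacial non-$1$-gon} loop, which you never isolate; there the same rotation move \emph{is} genus-preserving (two faces become one merged face plus the new $1$-gon, so $F$ really is unchanged), and iterating it over all such loops yields the lemma. The paper's proof is the geometric version of this last step: keep $\Sigma$ fixed, delete the loop edge and redraw it as a small $1$-gon inside a cell adjacent to $s$; the surface is literally unchanged, so minimality persists, and Youngs' theorem forces the new embedding to be cellular --- which is precisely what rules out the monofacial case a posteriori, since deleting a monofacial edge would leave an annular region.
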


See \S \ref{sec:one-gon-lemma} for a proof. In the sequel, we shall frequently use Lemma \ref{lem:one-gon-lemma} without further notice.

\begin{wrapfigure}[3]{r}{10mm}
\vspace{-8mm}
\includegraphics[scale=0.25]{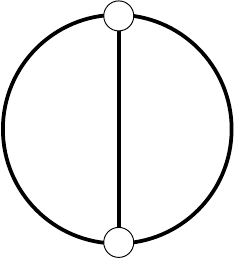}
\end{wrapfigure}

By contrast, there exists automata for which there is no embedding such that every cycle of length two is a bigon. A simple example can be constructed using the subgraph opposite.

\vskip 0.2cm

Our definition of a combinatorial cycle mimics that of a geometric
$1$-cycle $c$ in the sense of a singular $1$-chain such that $\partial
c = 0$. We remark that in order to represent a singular $1$-cycle by
a combinatorial cycle, the combinatorial cycle in question may have
repetitions of edges and internal vertices.

\begin{wrapfigure}[6]{r}{50mm}
\vspace{-7mm}
\includegraphics[scale=0.5]{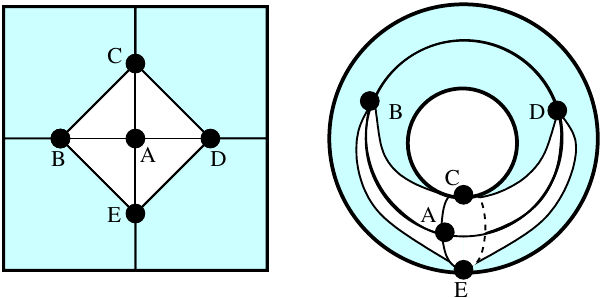}
%\begin{center}
%\includegraphics[scale=0.3]{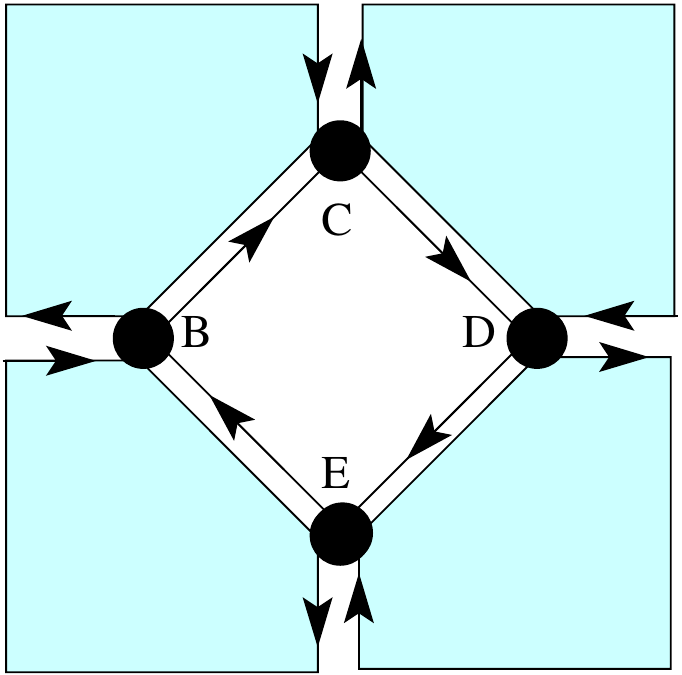}
%\end{center}
\end{wrapfigure}

%\newpage

For instance, consider anew the cellular embedding of
Example~\ref{example:K5-in-torus}. As mentioned, the complement of
$K_5$ in $T$ has five components which are open $2$-cells. Four of
them are $3$-faces bounded respectively by $ABCA$, $ACDA$, $ADEA$
and $AEBA$. The fifth component is an open $2$-cell: removing the
other four open $2$-cells and the edges $BD$ and $CE$
yields a $2$-cell. (Removing the four open $2$-cells yields a punctured torus, which is a regular neighborhood of the wedge of a meridian and a longitude; removing the edges $BD$ and $CE$ amounts to cutting transversally the meridian and the longitude respectively, yielding a topological $2$-cell.)

\begin{wrapfigure}[8]{r}{45mm}
\vspace{-5mm}
\begin{center}
\includegraphics[scale=0.3]{Cellular-emb5.pdf}
\end{center}
\end{wrapfigure}
This fifth $2$-cell is a bit more complicated to describe: it is not bounded by any simple cycle. It is bounded by the closed walk
$BCEBDECDB$ which represents a combinatorial cycle of length $8$. It
therefore represents an $8$-gon. Note that the monofacial edges $CE$ and $BD$ are
travelled twice in opposite orientations.

\subsection{Digression: face embeddings and strong face embeddings}

This paragraph is not necessary to understand our results and their proofs (and hence
may be skipped on a first reading). Indeed they do not depend on the notions
introduced here. In particular, they do
not depend on whether the Strong Embedding Conjecture (or a related conjecture)
is true or not. It is true however that for a graph that has a strong embedding in
a surface of minimal genus, then the Genus Formula has a particularly simple form.
(However, it is known in general that this needs not always be the case. There exist
$2$-connected graphs of genus $1$ that have no strong cellular embedding in a torus, see \cite{Xuong}.)
We include this paragraph for clarification.

As we have seen, $k$-gons that appear in cellular embeddings need not be
simple. We may request them to be, at the expense of a more restrictive definition.

\begin{definition}
A {\emph{face embedding}} of a graph $\A$ into a closed oriented
surface $\Sigma$ is a cellular embedding of $\A$ into $\Sigma$
such that each $k$-face in $\Sigma - \A$ is bounded by a simple
$k$-gon.
\end{definition}

%\begin{example}
%\begin{minipage}{\linewidth}

\begin{wrapfigure}[5]{r}{30mm}
\vspace{-10mm}
\includegraphics[scale=0.65]{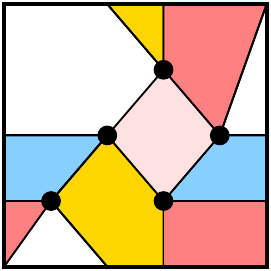}
\end{wrapfigure}
Opposite is depicted another embedding of $K_5$ into the torus
(the opposite sides in the square are identified as usual). This
embedding is a face embedding of $K_{5}$: the complement of $K_5$
consists of five $4$-faces. Hence this embedding is not equivalent
to the cellular embedding of Example \ref{example:K5-in-torus}.
%\end{minipage}
%\end{example}

Note that a face embedding does not rule out the possibility that
an edge be monofacial. Recall that $e$ is bifacial if there is a
component $c$ in $\Sigma - \A$ such that $\langle e, c \rangle = 1$.

\begin{definition}
Let $e$ be an edge of embedded graph $\A$ into a closed oriented
surface $\Sigma$.
A {\emph{strong face embedding}} of a graph $\A$ into a closed
oriented surface $\Sigma$ is a face embedding of $\A$ into
$\Sigma$ such that every {\emph{edge}} is bifacial.
\end{definition}

For instance, in Example \ref{example:K5-in-torus}, all but the edges $BD$ and
$CE$ are bifacial. The face embedding of $K_5$ above is
strong. The second embedding of Example \ref{example:cell-noncell}
is a face embedding that is not strong.

This definition is related to that of strong cellular embedding: a
strong cellular embedding of $\A$ is an embedding in $\Sigma$ such
that the closure of each connected component $\Sigma - \A$ is a
closed $2$-cell. Equivalently, every $k$-face of $\Sigma - \A$ is bounded by
a true simple cycle without repetition of an internal vertex.
A strong cellular embedding is a strong face
embedding. The converse does not hold in general.

The strong cellular embedding conjecture \cite{Jaeger} is that every $2$-connected graph
has a strong cellular embedding into some closed surface (orientable or not).
Even though the question is theoretically simpler, we do not know whether
every $2$-connected graph has a strong face embedding into some closed surface
(orientable or not).

\section{Genus Formula}

Our first main result is a closed formula for the genus of a
regular language.

\begin{theorem}[Genus formula] \label{th:genus-formula}
Let $\A$ be a deterministic automaton with $m$ letters. Then for
any cellular embedding of $\A$,
\begin{equation}
g(\A) \leq 1 - \frac{m+1}{4m}f_1 - \frac{1}{2m}f_{2} +
\frac{m-3}{4m}f_{3} + \frac{2m-4}{4m}f_{4} + \frac{3m-5}{4m}f_{5}
+ \cdots
% & = 1 + \sum_{k = 1}^{\infty} \frac{k(m-1) - 2m}{4m}f_{k}.
\label{eq:genus-formula}
\end{equation}
with equality if and only if the embedding is minimal.
\end{theorem}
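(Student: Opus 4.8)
The proof should rest on two classical facts: Euler's formula $V - E + F = 2 - 2g$ for a cellular embedding, and the handshake-type identity (\ref{eq:two-factor}) which, summed over all edges, gives $\sum_e \sum_c \langle e,c\rangle = 2E$. On the other hand, summing the defining equation of a $k$-face over all faces yields $\sum_{k \geq 1} k f_k = \sum_c \sum_e \langle e,c\rangle = 2E$, so that $2E = \sum_{k\geq 1} k f_k$ and $F = \sum_{k \geq 1} f_k$. The plan is to substitute these into Euler's formula together with the relation coming from the automaton being deterministic with $m$ letters.

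First I would record the combinatorial constraints. For a complete deterministic automaton on $n$ states with $m$ letters there are exactly $E = mn$ edges (each state has exactly one outgoing transition per letter), and $V = n$, hence $V = E/m$. Plugging $V = E/m$ and $F = \sum_k f_k$, $2E = \sum_k k f_k$ into $V - E + F = 2 - 2g$ gives
\begin{equation}
2 - 2g = \frac{E}{m} - E + \sum_{k \geq 1} f_k = \sum_{k\geq1} f_k - \frac{m-1}{m}E = \sum_{k \geq 1} f_k - \frac{m-1}{2m}\sum_{k\geq1} k f_k. \nonumber
\end{equation}
Therefore $2g = 2 - \sum_{k\geq1}\bigl(1 - \frac{(m-1)k}{2m}\bigr) f_k = 2 - \sum_{k\geq1}\frac{2m - (m-1)k}{2m} f_k$, i.e.
\begin{equation}
g = 1 - \sum_{k \geq 1} \frac{2m - (m-1)k}{4m}\, f_k. \nonumber
\end{equation}
Then I would check that the coefficient $\frac{2m-(m-1)k}{4m}$ for $k=1$ is $\frac{m+1}{4m}$, for $k=2$ is $\frac{2}{4m}=\frac{1}{2m}$, for $k=3$ is $\frac{-(m-3)}{4m}$, for $k=4$ is $\frac{-(2m-4)}{4m}$, for $k=5$ is $\frac{-(3m-5)}{4m}$, and in general the sign flips so that the displayed formula (\ref{eq:genus-formula}) is exactly this identity rearranged with the $k \geq 3$ terms moved to the other side. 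So for a cellular embedding one in fact has \emph{equality} in the displayed formula with $g$ replaced by the genus $g_\Sigma$ of the particular surface $\Sigma$ of that embedding.

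Now the theorem as stated is an inequality $g(\A) \leq \cdots$ with equality iff the embedding is minimal. This follows immediately: for any cellular embedding into $\Sigma$, the right-hand side equals $g_\Sigma$ exactly, and $g(\A) = \min_\Sigma g_\Sigma$ over cellular embeddings, hence $g(\A) \leq g_\Sigma = \text{RHS}$, with equality precisely when the chosen embedding achieves the minimum, i.e.\ is minimal. I would note here that Youngs' theorem (Theorem~\ref{th:Youngs}) guarantees that a genus-minimal embedding is cellular, so the minimum over cellular embeddings is the same as $g(\A)$ and the "equality iff minimal" clause is meaningful and sharp.

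The only genuine subtlety — and the step I expect to need the most care — is the bookkeeping identity $\sum_k k f_k = 2E$. This is not simply ``each edge lies on two faces'': monofacial edges exist (as the $K_5$-in-torus example and Example~\ref{example:cell-noncell} show), so an edge may border only one face. The point of the pairing $\langle e,c\rangle \in \{0,1,2\}$ and of (\ref{eq:two-factor}) is precisely to handle this: a monofacial edge contributes $2$ to the single face it touches, a bifacial edge contributes $1$ to each of the two faces it touches, and in both cases $\sum_c \langle e,c\rangle = 2$. Summing (\ref{eq:two-factor}) over all $e \in \Sigma^1$ and exchanging the order of summation with the definition of $f_k$ gives $2E = \sum_e 2 = \sum_e \sum_c \langle e,c\rangle = \sum_c \sum_e \langle e,c\rangle = \sum_c (\text{the } k \text{ for which } c \in F_k) = \sum_{k\geq1} k f_k$. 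Once this is in hand the rest is the elementary substitution above, and the use of $E = mn$, $V = n$ which hold because $\A$ is complete and deterministic. I would finish by remarking that completeness is used here (an incomplete automaton has fewer edges and the clean relation $V = E/m$ fails), consistent with the paper's standing assumption that all deterministic automata are complete, and that Propositions~\ref{pr:complete} and~\ref{pr:reachable} ensure this is no loss of generality for the language-level statements.
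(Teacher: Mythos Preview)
Your proof is correct and follows essentially the same route as the paper: Euler's formula combined with the relations $e_0 = e_1/m$ (from completeness), $e_2 = \sum_k f_k$, and $2e_1 = \sum_k k f_k$ (the paper's Lemma~\ref{le:kfke}, whose proof via the pairing $\langle e,c\rangle$ you reproduce in your final paragraph), followed by the observation that the right-hand side equals $g_\Sigma$ so that $g(\A)\le g_\Sigma$ with equality iff the embedding is minimal. The only cosmetic difference is that the paper packages the identity $2e_1=\sum_k k f_k$ as a separate lemma before invoking it, whereas you derive it inline.
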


The faces $f_{1}, f_{2}, \ldots$ are determined by the cellular
embedding of $\A$. It follows from \S \ref{subsec:cycles-faces}
that for each cellular embedding, there is some $M > 0$ such that $f_{k} = 0$ for all $k \geq M$.
In particular, the  sum $\sum_{k = 1}^{\infty} \frac{k(m-1) - 2m}{4m}f_{k}$
that appears on the right hand
side of $(\ref{eq:genus-formula})$ is finite.

\begin{remark}
In the case when $(\ref{eq:genus-formula})$ is an equality, it is {\emph{not}} claimed that the embedding is unique.
Thus inequivalent minimal embeddings for $\A$
lead to distinct formulas for the genus of $\A$.
\end{remark}

\begin{remark}
In the case when $(\ref{eq:genus-formula})$ is an equality,
it is {\emph{not}} claimed that the automaton
$\A$ is the minimal state automaton (in the sense
of Myhill-Nerode). Indeed, the automaton with the least number of
states does not have necessarily minimal genus (see below \S
\ref{sec:state-versus-genus}).
\end{remark}

The Genus Formula is proved
in \S \ref{sec:genus-formula-proof}.

\section{Genus growth}

Let us begin with a simple example. Any language on a $1$-letter alphabet is represented by a
planar deterministic automaton. Indeed, these deterministic automata have a quasi-loop (planar) shape:
\begin{center}
\includegraphics[scale=0.8]{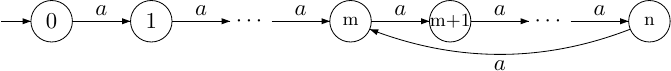}
\end{center}

%Actually, it is a consequence of our first main result.
Actually, there is an indirect proof of the result. For a unary alphabet, $e_1 \leq e_0$.
Since there is at least one face, Euler's relation states that $1 \leq 2 - 2g$, that is $g \leq 1/2$.
And thus $g = 0$. The remark shows that to get graphs of higher genus, one should augment the
number of edges. Then the relation $e_1 = m e_0$ forces to augment the size
of alphabets. Thus a general study of the genus of automata depends on the size of alphabet.
Consider now the case of a $2$-letter alphabet.

%We state a few direct consequences of the Genus formula. Since we already know from Euler's formula (\ref{eq:Euler-Poincare}) that a deterministic automaton of finite size with one letter must be planar, we consider the next case.

\begin{proposition} \label{prop:two-letters}
Let $(\A_n)_{n \in {\mathbb{N}}^{\times}}$ be a sequence of deterministic finite automata of size $n$ on the same alphabet. Assume that the number $m$ of letters is two. For any cellular embedding of $\A_n$,
\begin{enumerate}
\item[$\bullet$] either there exists $M > 0$ such that $\underset{n \geq 1}{\sup}\ \bigl(f_1(n) + f_2(n) + f_3(n)\bigl) \leq M$
\item[$\bullet$] or $\underset{n \to +\infty}{\lim} \sum_{k \geq 5} \frac{k-4}{8} f_{k}(n) = +\infty$.
\end{enumerate}
\end{proposition}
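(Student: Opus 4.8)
The plan is to start from the Genus Formula (Theorem~\ref{th:genus-formula}) specialized to $m = 2$, and to combine it with the two combinatorial constraints that govern any cellular embedding: Euler's relation and the edge-counting identity $e_1 = m\,e_0 = 2n$. Writing out the genus formula for $m=2$, the coefficient of $f_k$ is $\frac{k(m-1)-2m}{4m} = \frac{k-4}{8}$, so at a minimal embedding of $\A_n$ we have
$$ g(\A_n) = 1 - \tfrac{3}{8}f_1 - \tfrac{1}{4}f_2 + 0\cdot f_3 + \tfrac{1}{8}f_4 + \tfrac{2}{8}f_5 + \cdots = 1 - \tfrac{3}{8}f_1 - \tfrac{1}{4}f_2 + \sum_{k\geq 5}\tfrac{k-4}{8}f_k. $$
The terms $f_1, f_2, f_3$ are the ones with nonpositive coefficient (the $f_3$ coefficient vanishes), so they are exactly the faces that do \emph{not} help the genus grow. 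The dichotomy in the statement is then the natural trichotomy-collapsed-to-dichotomy: either these ``cheap'' faces stay bounded, or, if they do not, then since $g(\A_n) \geq 0$ the positive contribution $\sum_{k\geq 5}\frac{k-4}{8}f_k$ must itself blow up to compensate.

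The key steps, in order. First I would fix, for each $n$, the given cellular embedding of $\A_n$ and record the Euler relation $e_0 - e_1 + f = 2 - 2g$, i.e. $n - 2n + \sum_{k\geq 1} f_k = 2 - 2g(\A_n)$ (using that minimal embeddings are cellular by Youngs' Theorem~\ref{th:Youngs}, and that $\sum_k f_k = f$ is the total face count). Second, I would use the $k$-face identity together with equation~(\ref{eq:two-factor}): summing $\langle e,c\rangle$ over all edges and all faces gives $\sum_{k\geq 1} k f_k = \sum_{e}\sum_{c}\langle e,c\rangle = 2e_1 = 4n$. Third, I would treat these as two linear relations in the $f_k$'s: subtracting $4$ times the Euler relation from the degree-sum relation, or more directly manipulating them, yields an identity of the form $\sum_{k\geq 1}(k-4)f_k = 4n - 4f$ combined with $f = 2 - 2g + n$, which rearranges to isolate $3f_1 + 2f_2 + f_3 - \sum_{k\geq 5}(k-4)f_k$ in terms of $n$ and $g$ — this is, up to the factor $8$, exactly the genus formula again, which is reassuring and confirms consistency. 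Fourth, with $g(\A_n)\geq 0$ fixed and the relation between $f_1+f_2+f_3$ and $\sum_{k\geq5}\frac{k-4}{8}f_k$ in hand, I would argue the dichotomy: suppose the first alternative fails, i.e. $\sup_n (f_1(n)+f_2(n)+f_3(n)) = +\infty$; pass to a subsequence along which $f_1(n_j)+f_2(n_j)+f_3(n_j)\to\infty$; since the genus formula reads $\sum_{k\geq5}\frac{k-4}{8}f_k = g(\A_{n_j}) - 1 + \tfrac38 f_1 + \tfrac14 f_2 \geq \tfrac18(3f_1 + 2f_2) - 1$ and also $\geq -1$, I need a lower bound that grows with $f_1+f_2+f_3$; the only missing piece is controlling $f_3$ (coefficient zero), which I handle below.

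The main obstacle, and the only genuinely delicate point, is the face $f_3$: it has coefficient $0$ in the genus formula, so a blow-up of $f_3$ alone is not directly detected by the right-hand side, and the naive bound above only controls $f_1$ and $f_2$. The resolution is to bring Euler's relation back in: from $f = n + 2 - 2g$ and $g\geq 0$ we get $f \leq n+2$, hence $f_3 \leq f \leq n+2$; but more usefully, from $\sum_k k f_k = 4n$ we get $3f_3 \leq \sum_{k\leq 3}k f_k \leq 4n$ and $f_1 + 2f_2 + 3f_3 + \sum_{k\geq 4}4 f_k \leq \sum_k k f_k = 4n$ only when $k\geq 4$ terms are bounded below by $4f_k$, so $f_1 + f_2 + f_3 \leq \tfrac13\sum_{k\leq 3}k f_k \leq \tfrac13(4n - 4\sum_{k\geq 4}f_k)$; comparing with $f = n+2-2g$ shows that if $f_1+f_2+f_3$ is large (comparable to $n$) while $g$ stays bounded, then $\sum_{k\geq 4}f_k$ must be of order $n$ as well, and then a second application of $\sum_k k f_k = 4n$ forces most of those faces to be $4$-faces or $5$-faces, pinning $\sum_{k\geq5}\frac{k-4}{8}f_k$. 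The cleanest execution is probably: rearrange the genus formula to $\sum_{k\geq5}\frac{k-4}{8}f_k = g(\A_n) - 1 + \frac38 f_1 + \frac14 f_2$, note separately via Euler that $f_3$ large forces (through $\sum k f_k = 4n$ and $\sum f_k = n+2-2g$) that $f_1 + f_2$ or $\sum_{k\geq 5}(k-4)f_k$ is large, and in either sub-case conclude the limit is $+\infty$ along the subsequence; since the subsequence was arbitrary extracted from a divergent sequence, the full limit statement follows. I expect the bookkeeping with these two linear relations — not any topology beyond Youngs and the Genus Formula — to be where the real work lies.
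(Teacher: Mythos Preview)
There is a concrete arithmetic slip that derails the whole argument. For $m=2$ the coefficient of $f_k$ in the Genus Formula is $\frac{k-4}{8}$, so
\[
k=3:\ -\tfrac{1}{8},\qquad k=4:\ 0,\qquad k=5:\ \tfrac{1}{8}.
\]
You wrote $0\cdot f_3 + \tfrac{1}{8}f_4 + \tfrac{2}{8}f_5 + \cdots$, an off-by-one from $k=3$ on. The correct specialization is
\[
g_\Sigma \;=\; 1 - \tfrac{3}{8}f_1 - \tfrac{1}{4}f_2 - \tfrac{1}{8}f_3 + 0\cdot f_4 + \sum_{k\geq 5}\tfrac{k-4}{8}f_k,
\]
valid for \emph{any} cellular embedding (this is the right-hand side of (\ref{eq:genus-formula}), equal to the genus of the ambient surface, hence $\geq 0$; you do not need minimality here).

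Once this is fixed, your ``main obstacle'' evaporates: $f_3$ carries the strictly negative coefficient $-\tfrac{1}{8}$, so it is controlled directly by the formula, and the entire last paragraph of Euler-relation bookkeeping to tame $f_3$ is unnecessary. The argument then collapses to the paper's one-liner: if neither alternative holds, then along some subsequence $f_1+f_2+f_3$ is unbounded while $\sum_{k\geq 5}\tfrac{k-4}{8}f_k$ stays bounded, forcing the right-hand side---hence the nonnegative genus of the ambient surface---to be negative, a contradiction. Your overall strategy (contrapositive via the Genus Formula and $g\geq 0$) is exactly the paper's; the only problem is the miscomputed coefficient, which led you to invent and then laboriously solve a nonexistent difficulty.
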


\begin{proof}
Suppose neither condition is satisfied. Then $$f_1(n) \underset{n \to+\infty}{\to} +\infty \ \ {\rm{or}}\ \ f_2(n) \underset{n \to +\infty}{\to} +\infty\ \ {\rm{or}}\ \ f_3(n) \underset{n \to +\infty}{\to} +\infty$$ (since these are sequences of nonnegative integers) and $\sum_{k \geq 5} f_{k}(n)$ remains bounded. For $m = 2$,
the second, third and fourth terms respectively in the genus inequality (\ref{eq:genus-formula}) are negative or zero. The fifth term is always zero for $m = 2$. It follows easily that $g(n)$ is negative for $n$ large enough, which is a contradiction. \hfill $\blacksquare$
\end{proof}

\begin{corollary} \label{cor:two-letters}
%%Let $(\A_n)_{n \in {\mathbb{N}}^{\times}}$ be a sequence of deterministic automata of size $n$ on the same $2$-letter alphabet. If $\sum_{k \geq 5} z_k(n)$ remains bounded as $n \to +\infty$, then the genus $g(n)$ of $\A_n$ also remains bounded.
 Let $(L_n)_{n \in \mathbb{N}}$ be a sequence of regular languages on two letters. If for each $n$, $L_n$ is recognized by a deterministic automaton $\A_{n}$ of size $n$ having a cellular embedding such that $\sum_{k \geq 5} \frac{k-4}{8} f_k(n)$ remains bounded as $n \to +\infty$, then the genus $g(L_n)$ of $L_n$ also remains bounded.
\end{corollary}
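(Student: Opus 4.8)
The plan is to derive the bound on $g(L_n)$ directly from Proposition \ref{prop:two-letters} applied to the sequence of recognizing automata $(\A_n)_n$, combined with the Genus Formula (Theorem \ref{th:genus-formula}). First I would fix, for each $n$, the deterministic automaton $\A_n$ of size $n$ recognizing $L_n$ together with the cellular embedding provided by the hypothesis, so that $\sum_{k \geq 5} \frac{k-4}{8} f_k(n)$ stays bounded, say by some constant $C$, as $n \to +\infty$. Since this rules out the second alternative of Proposition \ref{prop:two-letters}, the first alternative must hold: there is $M > 0$ with $f_1(n) + f_2(n) + f_3(n) \leq M$ for all $n$. So both the "small faces" $f_1, f_2, f_3$ and the "tail contribution" $\sum_{k\geq 5}\frac{k-4}{8} f_k(n)$ are uniformly bounded.

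Next I would plug these bounds into the Genus Formula for $m = 2$. Writing $(\ref{eq:genus-formula})$ with equality for a minimal embedding of $\A_n$ — and recalling that by Theorem \ref{th:Youngs} the genus is realized cellularly, while the hypothesis only gives us \emph{some} cellular embedding, so we use the inequality form of $(\ref{eq:genus-formula})$ for the chosen embedding to get an upper bound on $g(\A_n)$ — the coefficient of $f_k$ is $\frac{k(m-1) - 2m}{4m} = \frac{k - 4}{8}$. Thus $g(\A_n) \leq 1 + \sum_{k \geq 1} \frac{k-4}{8} f_k(n)$. The terms $k = 1,2,3$ contribute coefficients $-\frac{3}{8}, -\frac{2}{8}, -\frac{1}{8}$, which are negative, so $\sum_{k=1}^{3}\frac{k-4}{8} f_k(n) \leq 0$; the term $k = 4$ contributes $0$; and the tail $\sum_{k \geq 5}\frac{k-4}{8} f_k(n) \leq C$ by hypothesis. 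Hence $g(\A_n) \leq 1 + C$ for all $n$. Since $g(L_n) \leq g(\A_n)$ by definition of the genus of a regular language, we conclude $g(L_n) \leq 1 + C$, i.e. $g(L_n)$ remains bounded as $n \to +\infty$.

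The one point requiring a little care — and the only place I expect any subtlety — is the logical direction of Proposition \ref{prop:two-letters}: it is stated as a dichotomy on \emph{every} cellular embedding, and here we only control one chosen embedding per $n$. But that is exactly enough: for the chosen embeddings, the hypothesis kills the second alternative, forcing the first, and the first alternative is precisely the uniform bound on $f_1 + f_2 + f_3$ that we feed into the Genus Formula. There is no genuine obstacle; the corollary is essentially a repackaging of Proposition \ref{prop:two-letters} together with the definition $g(L_n) = \min_{\A} g(\A)$ and the monotonicity $g(L_n) \leq g(\A_n)$. I would close by remarking that the constant bounding $g(L_n)$ is explicit: if $f_1(n) + f_2(n) + f_3(n) \leq M$ and $\sum_{k \geq 5}\frac{k-4}{8} f_k(n) \leq C$, then $g(L_n) \leq 1 + C$, independently of $M$.
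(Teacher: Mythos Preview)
Your proof is correct and follows the same route as the paper: invoke Proposition~\ref{prop:two-letters} to bound $f_1+f_2+f_3$, plug into the Genus Formula with $m=2$, and conclude via $g(L_n)\leq g(\A_n)$. Your closing remark is in fact sharper than the paper's version: since the coefficients $\frac{k-4}{8}$ are nonpositive for $k\leq 4$, the bound $g(\A_n)\leq 1+C$ follows directly from the inequality form of the Genus Formula and the hypothesis alone, so the appeal to Proposition~\ref{prop:two-letters} is not strictly needed for the conclusion (though it matches the paper's presentation).
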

\begin{proof}
By hypothesis, the number $f_{k}(n)$ of $k$ faces in a cellular embedding of $\A_{n}$ into a surface $\Sigma_n$ verify  $\sum_{k \geq 5} \frac{k-4}{8} f_{k}(n) < + \infty$. By Prop. \ref{prop:two-letters},  $f_1(n)$, $f_2(n)$ and $f_3(n)$ are bounded. The genus formula then shows that $g(\A_{n})$ remains bounded. Since $g(L_n) \leq g(\A_{n})$, the conclusion follows. \hfill $\blacksquare$
\end{proof}

Each $\A_n$ in general has several nonequivalent cellular embeddings. But for {\emph{any}} cellular embedding, the alternative of Prop. \ref{prop:two-letters} holds for the various numbers of faces $f_{k}(n)$ (determined by the embedding). Corollary \ref{cor:two-letters} states a sufficient condition for the genus of a {\emph{language}} to be bounded.
%It may be not completely satisfactory because the hypothesis involves a minimal embedding of an automaton. In practice, finding even one minimal embedding may be difficult to achieve.
The interest in this result lies in the fact that it discriminates between the respective contribution of the faces to the genus.

%Corollary \ref{cor:two-letters} gives a sufficient condition for the genus to be bounded in terms of the cycles of length $k \geq 5$.

%Note that this condition depends only on the {\emph{abstract}} automaton $\A_n$ and not on a particular embedding.

There is a similar result when the number of letters is three.

\begin{proposition} \label{prop:three-letters}
Let $(\A_n)_{n \in {\mathbb{N}}^{\times}}$ be a sequence of deterministic finite automata of size $n$ on the same alphabet. Assume that the number $m$ of letters is three. For any minimal embedding of $\A$,
\begin{enumerate}
\item[$\bullet$] either there exists $M > 0$ such that $\underset{n \geq 1}{\sup}\ \bigl(f_1(n) + f_2(n) \bigl) \leq M$
\item[$\bullet$] or $\underset{n \to +\infty}{\lim} \sum_{k \geq 4} \frac{k-3}{6} f_{k}(n) = +\infty$.
\end{enumerate}
%In particular, if $\sum_{k \geq 4} z_k(n)$ remains bounded as $n \to +\infty$, then the genus $g(L_n)$ of the regular language $L_n$ recognized by $\A_n$ also remains bounded.
\end{proposition}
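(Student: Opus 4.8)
The plan is to mimic exactly the proof of Proposition \ref{prop:two-letters}, now specialized to $m=3$, and relying on the genus formula (\ref{eq:genus-formula}) together with the fact that for a minimal embedding it is an equality. First I would write out the genus formula for $m=3$: the coefficients $\frac{k(m-1)-2m}{4m} = \frac{2k-6}{12} = \frac{k-3}{6}$, so that
\[
g(\A_n) = 1 - \frac{4}{12}f_1(n) - \frac{1}{6}f_2(n) + 0 \cdot f_3(n) + \sum_{k \geq 4} \frac{k-3}{6}f_k(n).
\]
In particular the $f_1$ and $f_2$ terms are strictly negative, the $f_3$ term vanishes, and all terms with $k \geq 4$ are nonnegative. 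This is the structural feature that makes the argument work, paralleling the $m=2$ case where the negative-or-zero terms were the second, third and fourth.

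Next I would argue by contradiction exactly as before. Suppose neither alternative holds. Then $\sum_{k\geq 4}\frac{k-3}{6}f_k(n)$ does not tend to $+\infty$, hence (being a sequence of nonnegative reals) has a bounded subsequence, say bounded by some constant $C$ along an increasing sequence of indices $n_j$. On the other hand the failure of the first alternative means $\sup_n (f_1(n)+f_2(n)) = +\infty$, and since these are sequences of nonnegative integers, at least one of $f_1(n)$, $f_2(n)$ is unbounded; by passing to a further subsequence of the $n_j$ we may assume $f_1(n_j) + f_2(n_j) \to +\infty$. Then along this subsequence
\[
g(\A_{n_j}) = 1 - \tfrac{1}{3}f_1(n_j) - \tfrac{1}{6}f_2(n_j) + \sum_{k\geq 4}\tfrac{k-3}{6}f_k(n_j) \leq 1 - \tfrac{1}{6}\bigl(f_1(n_j)+f_2(n_j)\bigr) + C \to -\infty,
\]
contradicting $g(\A_{n_j}) \geq 0$. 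Hence one of the two alternatives must hold, which is the claim.

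One subtlety worth being careful about, and the place I expect the only real friction, is the quantifier structure: the statement is phrased "for any minimal embedding," and the dichotomy is a statement about the single given sequence of embeddings, so the subsequence extraction must be done within that fixed sequence — there is no need to compare different embeddings of the same $\A_n$. The only reason minimality (rather than mere cellularity, as in Proposition \ref{prop:two-letters}) is invoked here is to have equality in (\ref{eq:genus-formula}) so that the chain of inequalities above really pins $g(\A_{n_j})$ from above; with a non-minimal embedding we would only get $g(\A_n) \leq (\text{RHS})$, which does not yield the contradiction. So I would make sure to state explicitly that equality is used. Everything else is the same bookkeeping with nonnegative integer sequences as in the $m=2$ case, and no new idea is required.
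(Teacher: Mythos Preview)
Your proof is correct and is exactly what the paper intends: it simply says the proof is ``similar to those of Proposition~\ref{prop:two-letters},'' and your specialization of the genus formula to $m=3$ together with the subsequence extraction is precisely that argument (indeed, your handling of the subsequences is more careful than the paper's own sketch for $m=2$).

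One correction to your closing remark, though: your claim that minimality is essential because the inequality $g(\A_n)\leq(\text{RHS})$ ``does not yield the contradiction'' is backwards. If the right-hand side tends to $-\infty$ along a subsequence, then $g(\A_{n_j})\leq(\text{RHS})\to-\infty$ already contradicts $g\geq 0$; equality is not needed. This is why Proposition~\ref{prop:two-letters} is stated for arbitrary cellular embeddings, and the same would work here---the restriction to minimal embeddings in the statement of Proposition~\ref{prop:three-letters} is not actually required by the argument.
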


\begin{corollary} \label{cor:three-letters}
 Let $(L_n)_{n \in \mathbb{N}}$ be a sequence of regular languages on two letters. If for each $n$, $L_n$ is recognized by a deterministic automaton $\A_{n}$ of size $n$ having a cellular embedding such that $\sum_{k \geq 4} \frac{k-3}{6} f_k(n)$ remains bounded as $n \to +\infty$, then the genus $g(L_n)$ of $L_n$ also remains bounded.
\end{corollary}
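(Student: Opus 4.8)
The statement to prove is Corollary \ref{cor:three-letters}, which parallels Corollary \ref{cor:two-letters}. The plan is to mimic exactly the proof structure of Corollary \ref{cor:two-letters}, using Proposition \ref{prop:three-letters} in place of Proposition \ref{prop:two-letters}.

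First I would note that by hypothesis, each $L_n$ is recognized by a deterministic automaton $\A_n$ of size $n$ with a cellular embedding into a surface $\Sigma_n$ for which $\sum_{k \geq 4} \frac{k-3}{6} f_k(n)$ stays bounded as $n \to +\infty$. This rules out the second alternative of Proposition \ref{prop:three-letters}, so the first alternative must hold: there is $M > 0$ with $\sup_n (f_1(n) + f_2(n)) \leq M$. In particular $f_1(n)$ and $f_2(n)$ are bounded.

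Next I would plug this into the Genus Formula (Theorem \ref{th:genus-formula}) with $m = 3$. The right-hand side of (\ref{eq:genus-formula}) is $1 - \frac{m+1}{4m}f_1 - \frac{1}{2m}f_2 + \sum_{k \geq 3} \frac{k(m-1)-2m}{4m} f_k$; for $m=3$ the $f_3$ coefficient vanishes and the tail is $\sum_{k \geq 4} \frac{2k-6}{12} f_k = \sum_{k\geq 4} \frac{k-3}{6} f_k$, which is exactly the quantity assumed bounded. Since $f_1(n), f_2(n)$ are bounded and contribute with bounded (negative) coefficients, every term on the right-hand side is bounded, hence $g(\A_n)$ is bounded. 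Finally, $g(L_n) \leq g(\A_n)$ gives the conclusion.

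I do not expect a genuine obstacle here: the argument is a direct transcription of the two-letter case, the only subtlety being the bookkeeping that for $m = 3$ the $f_3$ term drops out of the genus formula and the remaining tail coincides with the hypothesized bounded sum. One should also take care that Proposition \ref{prop:three-letters} is stated for minimal embeddings whereas the corollary only assumes a cellular embedding; since the Genus Formula is an inequality for any cellular embedding and one may pass to a minimal embedding (whose genus is no larger) to apply the proposition, this causes no difficulty — but it is the one point I would state explicitly rather than gloss over. (Alternatively, note that $g(L_n) \le g(\A_n) \le$ the right-hand side of (\ref{eq:genus-formula}) for the given embedding, and apply Proposition \ref{prop:three-letters} to a minimal embedding of $\A_n$, for which the hypothesis on the tail still propagates via the genus formula.)

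\begin{proof}
By hypothesis, for each $n$ the language $L_n$ is recognized by a deterministic automaton $\A_n$ of size $n$ admitting a cellular embedding into a surface $\Sigma_n$ whose numbers of $k$-faces satisfy $\sum_{k \geq 4} \frac{k-3}{6} f_k(n) < +\infty$, i.e. this quantity is bounded as $n \to +\infty$. Passing if necessary to a minimal embedding of $\A_n$ (whose genus is no larger, and for which the Genus Formula (\ref{eq:genus-formula}) is an equality), we may apply Proposition \ref{prop:three-letters}: since the second alternative is excluded, there exists $M > 0$ such that $\sup_{n \geq 1} (f_1(n) + f_2(n)) \leq M$. Hence $f_1(n)$ and $f_2(n)$ are bounded.

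Now apply the Genus Formula with $m = 3$. The coefficient of $f_3$ in (\ref{eq:genus-formula}) is $\frac{m-3}{4m} = 0$, and for $k \geq 4$ the coefficient is $\frac{k(m-1)-2m}{4m} = \frac{2k-6}{12} = \frac{k-3}{6}$. Therefore
$$ g(\A_n) \leq 1 - \frac{1}{3} f_1(n) - \frac{1}{6} f_2(n) + \sum_{k \geq 4} \frac{k-3}{6} f_k(n). $$
All three terms on the right are bounded in $n$: $f_1(n), f_2(n) \geq 0$ are bounded, and the last sum is bounded by hypothesis. Hence $g(\A_n)$ remains bounded. Since $g(L_n) \leq g(\A_n)$, the genus $g(L_n)$ also remains bounded. \hfill $\blacksquare$
\end{proof}
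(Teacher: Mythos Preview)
Your proof is correct and mirrors the paper's approach, which simply declares the argument analogous to that of Corollary~\ref{cor:two-letters}. One minor simplification worth noting: since the $f_1$ and $f_2$ terms in the genus inequality carry nonpositive coefficients, the bound $g(\A_n) \leq 1 + \sum_{k\geq 4}\frac{k-3}{6}f_k(n)$ follows directly for the \emph{given} cellular embedding, so invoking Proposition~\ref{prop:three-letters} (and the minimal-versus-cellular discrepancy you flagged) is in fact unnecessary.
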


The proofs of Prop. \ref{prop:three-letters} and Cor. \ref{cor:three-letters} are similar to those of Proposition \ref{prop:two-letters} and Corollary \ref{cor:two-letters}.

We state our main result on the genus growth of automata and languages.

\begin{theorem}[Genus Growth] \label{th:genus-growth-theorem}
Let $(\A_{n})_{n \in {\mathbb{N}}^{\times}}$ be a sequence of
deterministic finite automata with $m$ letters and $n \geq 1$
states. Let $g(n)$ be the genus of $\A_{n}$. Assume
\begin{enumerate}
\item[$(1)$] $m \geq 4$. \item[$(2)$] The numbers
$z_{k}(n)$ of cycles of length $1$ and $2$ in $\A_{n}$ are
negligible with respect to the size $n$ of $\A_n$: $\underset{n \to
+\infty}{\lim}\frac{z_{1}(n)}{n} = \underset{n \to
+\infty}{\lim}\frac{z_{2}(n)}{n} = 0$.
\end{enumerate}
Then for any $\varepsilon > 0$, there exists $N > 0$ such that for
all $ n \geq N$,
$$g(n) \geq 1 +  \left( \frac{m-3}{6m} - \varepsilon \right) mn.$$
\end{theorem}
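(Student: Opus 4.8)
The plan is to read the lower bound directly off Euler's relation, after using Youngs' Theorem~\ref{th:Youngs} to pass to a \emph{cellular} minimal embedding, and then showing that the hypotheses force the number of faces to be small.

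First I would fix, for each $n$, a minimal embedding of $\A_n$ into a closed oriented surface $\Sigma_n$ of genus $g(n)$. By Theorem~\ref{th:Youngs} it is cellular, so by Lemma~\ref{lem:CW-complex} it induces a CW-decomposition of $\Sigma_n$ with $v=n$ vertices, $e=mn$ edges (the automaton being complete deterministic, $\delta$ is a total function $Q\times A\to Q$, so there are exactly $mn$ transitions) and $f=\sum_{k\ge1}f_k(n)$ faces. Euler's relation $v-e+f=2-2g(n)$ then reads
\[ 2g(n)=2+(m-1)n-f, \]
so a lower bound on $g(n)$ is the same thing as an upper bound on $f$.

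To bound $f$, I would sum identity~$(\ref{eq:two-factor})$ over the $mn$ edges, obtaining $\sum_{k\ge1}k\,f_k(n)=2mn$. Isolating the $1$- and $2$-faces and using $k\ge3$ for the remaining ones,
\[ 3\sum_{k\ge3}f_k(n)\le\sum_{k\ge3}k\,f_k(n)=2mn-f_1(n)-2f_2(n), \]
whence $f\le f_1(n)+f_2(n)+\tfrac13\bigl(2mn-f_1(n)-2f_2(n)\bigr)=\tfrac13\bigl(2mn+2f_1(n)+f_2(n)\bigr)$. Using $f_k\le z_k$ (\S\ref{subsec:cycles-faces}), this gives $f\le\tfrac13\bigl(2mn+2z_1(n)+z_2(n)\bigr)$. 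Substituting into the displayed Euler relation and simplifying $(m-1)n-\tfrac{2mn}{3}=\tfrac{(m-3)n}{3}$ yields
\[ g(n)\ge1+\frac{(m-3)\,n}{6}-\frac{2z_1(n)+z_2(n)}{6}. \]
Finally, hypothesis~$(2)$ gives $\tfrac{z_1(n)}{n}\to0$ and $\tfrac{z_2(n)}{n}\to0$, hence $\tfrac{2z_1(n)+z_2(n)}{6n}\to0$; given $\varepsilon>0$, I would pick $N$ so that $\tfrac{2z_1(n)+z_2(n)}{6}<\varepsilon mn$ for $n\ge N$, turning the last display into $g(n)\ge1+\tfrac{(m-3)n}{6}-\varepsilon mn=1+\bigl(\tfrac{m-3}{6m}-\varepsilon\bigr)mn$, the desired inequality. (One could equally well run this computation off the Genus Formula~\ref{th:genus-formula}, which for a minimal embedding is just a repackaging of the two relations used above.)

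The argument is short, and its only genuine content is the middle step: a cellular embedding can offset the $mn$ edges only by having many faces, but the incidence count $\sum_k k f_k=2mn$ makes faces of length $\ge3$ ``expensive'', so the sole way to create many faces is to create many $1$- and $2$-gons, and hypothesis~$(2)$ --- through $f_1\le z_1$ and $f_2\le z_2$ --- is precisely what forbids this. Without such a hypothesis the statement fails: attaching many loops and bigons keeps an automaton planar. Hypothesis~$(1)$, $m\ge4$, does not enter the estimates but is what makes the bound grow, since the leading coefficient $\tfrac{m-3}{6m}$ vanishes for $m=3$.
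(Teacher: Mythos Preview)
Your proof is correct and reaches the same bound with the same constant. The underlying mechanism is identical to the paper's: Euler's relation plus the edge--face incidence identity $\sum_k k f_k=2mn$, together with $f_1\le z_1$, $f_2\le z_2$ and hypothesis~(2).

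Where you differ from the paper is in packaging. The paper works through the Genus Formula (Theorem~\ref{th:genus-formula}) and two auxiliary lemmas: Lemma~\ref{lem:constante-alpha} establishes $\sum_{k\ge3}\frac{k(m-1)-2m}{4m}f_k\ge\frac{m-3}{12m}\sum_{k\ge3}kf_k$ (call the right-hand sum $B(n)$), and Lemma~\ref{lem:rel-negligible-second} shows $f_j(n)/B(n)\to0$ for $j=1,2$, after which the pieces are reassembled. You collapse all of this into the single line $3\sum_{k\ge3}f_k\le\sum_{k\ge3}kf_k$ and compare $f_1,f_2$ directly to $n$ rather than to $B(n)$, which avoids introducing $A(n)$ and $B(n)$ altogether. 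Both routes are the same estimate read two ways; yours is shorter and exposes more clearly that the only content is ``short faces are scarce by hypothesis, long faces are expensive by the incidence count''. The paper's detour through $B(n)$ buys nothing extra here, though the Genus Formula itself is reused elsewhere (e.g.\ in Theorem~\ref{th:genus-lower-bound}).
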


The Genus Growth Theorem is proved in \S \ref{sec:genus-growth-proof}.

% \section{Examples}  % maybe useful...

We begin with examples showing that we cannot easily dispense with the hypotheses of Theorem \ref{th:genus-growth-theorem}.

\begin{example}[Quasi-loop automaton]
Any quasi-loop finite deterministic automaton with alphabet of cardinality $m \leq 1$ is planar.
This of course does not contradict the Genus Growth Theorem because Hypothesis $(1)$ does not hold.
\end{example}

\begin{example}[Genus $1$ automaton]
Let $n \geq 3$. Define an automaton $\A_{n}$ as follows. Consider the set $\tilde{S}_n = \{ (i,j) \ | \ 0 \leq i,j \leq n \}$ inside the square $C = [0,n] \times [0,n]$. The quotient $T$ of $C$ under the identifications $(0,t) = (1,t)$ and $(t,0) = (t,1)$ is a torus. The image $S_n$ of $\tilde{S}_n$ in $T$ is the set of states. Note that there are exactly $n^2$ states. For each $(i,j) \in (\mathbb{Z}/n\mathbb{Z})^2$, define two outgoing transitions $(i,j) \to (i+1,j)$ (mod $n$) and $(i,j) \to (i,j+1)$ (mod $n$). Choosing arbitrary initial and final states yields a finite deterministic complete automaton $\A_n$ with $n^2$ states. Clearly $g(\A_n) \leq 1$ for any $n \geq 3$. (This also follows from Cor. \ref{cor:two-letters}.) This does not contradict the Genus Growth Theorem because the alphabet has only two letters. The same example with an extra outgoing transition $(i,j) \to (i+1, j+1)$ (i.e., with an extra letter for the alphabet) still yields an automaton $\B_n$ with $g(\B_n) \leq 1$ for any $n \geq 3$. (This also follows from Cor. \ref{cor:three-letters}.) This still does not contradict the Genus Growth Theorem: the alphabet has only three letters.
\end{example}

\begin{example}[Another genus $1$ automaton]
Start with the previous example $\B_n$. To each state $(i,j)$, add an outgoing transition pointing to $(i,j)$. This yields a deterministic complete automaton with $n^2$ states and an alphabet that consists now of $4$ letters. There is an obvious cellular (minimal) embedding in the torus $T$ as before. This does not contradict the Genus Growth Theorem because now the number of cycles of length one (loops) is $n^2$ (the number of states), so the second hypothesis is not satisfied.
\end{example}

\begin{center}
\includegraphics[scale=0.2]{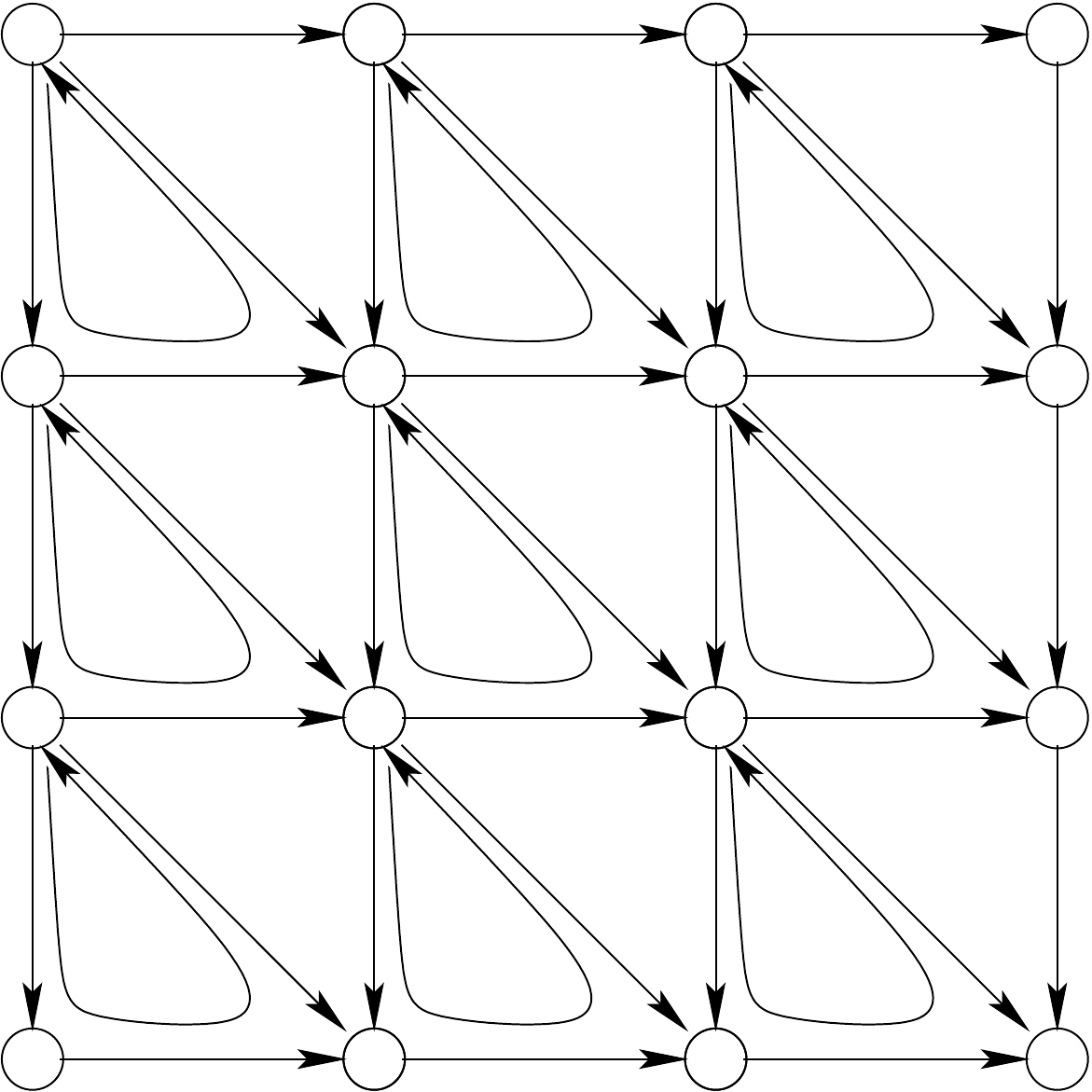}
\end{center}

A natural consequence of the Genus Growth Theorem for automata is an estimation of the genus of regular languages.

\begin{theorem}[Genus Growth of Languages] \label{th:genus-growth-language}
Let $(L_{n})_{n \in \mathbb{N}}$ be a sequence of regular languages on $m$ letters, $m \geq 4$.
Suppose that for each $n$ large enough, any automaton recognizing $L_{n}$ has at least $n$ states
and that the number of its cycles of length $1$ and $2$ are negligible with respect to $n$. Then
for any $\varepsilon > 0$, there is $N > 0$ such that for all $n \geq N$,
$$ 1 +  \left( \frac{m-3}{6m} - \varepsilon \right) mn \leq g(L_n) \leq 1+\frac{(m-1)n}{2}$$
\end{theorem}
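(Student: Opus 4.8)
The plan is to derive Theorem~\ref{th:genus-growth-language} as a direct consequence of the Genus Growth Theorem for automata (Theorem~\ref{th:genus-growth-theorem}), the general upper bound (Proposition~\ref{prop:general-genus-upper-bound}), and the reduction results (Propositions~\ref{pr:complete} and~\ref{pr:reachable}). The key point is that the hypotheses of Theorem~\ref{th:genus-growth-language} are stated in terms of \emph{every} automaton recognizing $L_n$, so they automatically apply to whichever automaton realizes the genus $g(L_n)$.

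First I would fix $\varepsilon > 0$ and, for each $n$, pick a complete deterministic automaton $\A_n$ recognizing $L_n$ with $g(\A_n) = g(L_n)$; such an automaton exists by the definition of $g(L_n)$, and by Proposition~\ref{pr:reachable} we may assume all its states are reachable (this does not change the genus). By hypothesis $\A_n$ has at least $n$ states; let $n_A = |\A_n| \geq n$. Again by hypothesis, the numbers $z_1(\A_n)$ and $z_2(\A_n)$ of cycles of length $1$ and $2$ are negligible with respect to $n$, hence with respect to $n_A$ (since $n_A \geq n$, if $z_i(\A_n)/n \to 0$ then $z_i(\A_n)/n_A \to 0$ as well; one must be slightly careful because $n_A$ depends on $n$, but $z_i(\A_n)/n_A \leq z_i(\A_n)/n$). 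Thus the sequence $(\A_n)$ satisfies the hypotheses of Theorem~\ref{th:genus-growth-theorem} with $n$ replaced by $n_A$. Applying that theorem (with $\varepsilon$) gives $N_0$ such that for $n \geq N_0$, $g(L_n) = g(\A_n) \geq 1 + \left(\frac{m-3}{6m} - \varepsilon\right) m\, n_A \geq 1 + \left(\frac{m-3}{6m} - \varepsilon\right) mn$, using $n_A \geq n$ together with $\frac{m-3}{6m} - \varepsilon > 0$ for $\varepsilon$ small and $m \geq 4$ (if $\varepsilon \geq \frac{m-3}{6m}$ the lower bound is vacuous and there is nothing to prove). This yields the left inequality.

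For the right inequality, I would take the minimal-state automaton $\A_n^{\min}$ of $L_n$, which is complete deterministic with $|\A_n^{\min}| = n_{\min}$ states. But wait --- the size parameter here must be compatible with the stated bound $mn$; the cleanest reading is that $L_n$ is assumed to have a recognizing automaton of size exactly $n$ (e.g.\ the minimal automaton, or some distinguished one), so that $g(L_n) \leq g(\A) \leq m\,|\A| = mn$ by Proposition~\ref{prop:general-genus-upper-bound} applied to that size-$n$ automaton. In either formulation, the upper bound is immediate from Proposition~\ref{prop:general-genus-upper-bound}: the genus of $L_n$ is at most the genus of any recognizing automaton, which for an $n$-state, $m$-letter automaton is at most $mn$.

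The main obstacle --- really the only subtlety --- is the bookkeeping around the size parameter: Theorem~\ref{th:genus-growth-theorem} is indexed by the \emph{number of states} of the automaton in hand, whereas Theorem~\ref{th:genus-growth-language} is indexed by a lower bound $n$ on the size of \emph{all} automata recognizing $L_n$. One must check that the genus-realizing automaton $\A_n$ can have many more than $n$ states and that this only \emph{helps} the lower bound (because the coefficient $\frac{m-3}{6m} - \varepsilon$ is positive), and that the negligibility hypothesis transfers correctly when the denominator changes from $n$ to $n_A$. Once one is careful that all quantifiers ($\forall \varepsilon$, $\exists N$, $\forall n \geq N$) are threaded through in the right order, the argument is a routine composition of the already-established results. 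No genuinely new topology or combinatorics is needed here; the content is entirely in Theorems~\ref{th:genus-formula} and~\ref{th:genus-growth-theorem}.
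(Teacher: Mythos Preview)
Your proposal is correct and follows essentially the same approach as the paper: the paper's own proof is a two-line remark that the upper bound follows from Proposition~\ref{prop:general-genus-upper-bound} and the lower bound is a direct consequence of the Genus Growth Theorem~\ref{th:genus-growth-theorem}. Your version simply fills in the bookkeeping (choosing a genus-realizing automaton, handling $n_A \geq n$, transferring the negligibility hypothesis) that the paper leaves implicit.
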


\begin{proof}
The upper bound for the genus follows from Prop. \ref{prop:general-genus-upper-bound}. The
lower bound is a direct consequence of the Genus Growth Theorem. \hfill $\blacksquare$
\end{proof}

In particular, under the hypothesis of Theorem \ref{th:genus-growth-language}, the genus $g(L_{n})$
grows linearly in the size $n$ of the minimal automaton $\A_{n}$ representing~$L_n$.

%The series of results of this section also show that the process of recoding any finite alphabet with a $2$-letter alphabet $\{ 0, 1 \}$ has a topological cost (which is partially hidden from the purely set-theoretical viewpoint).

We take up the question of explicitly constructing such a sequence of regular languages
in \S \ref{subsec:hierarchy}. There we detail an explicit construction, that shows that
there is a hierarchy of regular languages based on the genus (Th.~\ref{th:hierarchy}).

Another application of the Genus Growth Theorem is the estimation of the genus of
product automata in \S \ref{subsec:product-automata}.

\begin{theorem}[Genus Lower Bound]~\label{th:genus-lower-bound} Given an alphabet of size $m \geq 4$,
for any regular language $\Lang$, if $\Lang$ is recognized by a
minimal (complete) automaton of size at least $n$ without loops
nor cycles of length $2$, then
$$ 1 +  \left( \frac{m-3}{6} \right) n \leq g(\Lang) \leq 
1+\frac{(m-1)n}{2}$$
\end{theorem}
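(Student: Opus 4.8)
The plan is to reduce to a \emph{per-automaton} estimate: every complete deterministic automaton $\A$ recognizing $\Lang$ satisfies $g(\A) \ge 1 + \tfrac{m-3}{6}\,|\A|$. Since the minimal automaton has at least $n$ states, every automaton recognizing $\Lang$ has at least $n$ states, so this estimate yields $g(\Lang) = \min_{\A} g(\A) \ge 1 + \tfrac{m-3}{6}\,n$; the upper bound $g(\Lang) \le mn$ follows from Proposition \ref{prop:general-genus-upper-bound}. To prove the per-automaton estimate I first reduce to accessible automata: deleting the unreachable states of $\A$ yields a subautomaton $\A'$ still recognizing $\Lang$, with $|\A'| \ge n$ and $g(\A') \le g(\A)$ (a subgraph embeds into any surface into which the ambient graph embeds; cf.\ Proposition \ref{pr:reachable}), so it suffices to bound $g(\A')$ from below.

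The crux is to transfer the hypothesis, which concerns only the \emph{minimal} automaton, to the arbitrary accessible automaton $\A'$. Here I would use the projection $\rho : Q' \to Q_{min}$ of Proposition \ref{pr:projection}, which commutes with all transitions. A loop $\delta(q,a) = q$ in $\A'$ would force $\delta_{min}(\rho q, a) = \rho q$, a loop in $\A_{min}$; hence $\A'$ has no loop, i.e.\ $z_1(\A') = 0$. A $2$-cycle of $\A'$ built from two distinct transitions $q \xrightarrow{a} q' \xrightarrow{b} q$ with $q \ne q'$ projects to the closed walk $\rho q \xrightarrow{a} \rho q' \xrightarrow{b} \rho q$: if $\rho q = \rho q'$ this is a loop of $\A_{min}$, and if $\rho q \ne \rho q'$ the transitions $(\rho q, a, \rho q')$ and $(\rho q', b, \rho q)$ are distinct edges forming a genuine $2$-cycle of $\A_{min}$ --- both are excluded by hypothesis. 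Thus $z_1(\A') = z_2(\A') = 0$.

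Now count faces. By Youngs' Theorem \ref{th:Youngs}, fix a minimal --- hence cellular --- embedding of $\A'$ into a surface of genus $g(\A')$; it has $v = |\A'|$ vertices and $e = m|\A'|$ edges. A $k$-face is bounded by a $k$-cycle, so $f_k \le z_k(\A') = 0$ for $k \in \{1,2\}$. Summing the incidence relation $(\ref{eq:two-factor})$ over all edges and using the definition of $k$-faces gives $\sum_{k \ge 3} k f_k = 2e = 2m|\A'|$, whence the total number of faces obeys $f = \sum_{k \ge 3} f_k \le \tfrac13 \sum_{k \ge 3} k f_k = \tfrac{2m|\A'|}{3}$. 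Substituting $f_1 = f_2 = 0$ and $\sum_{k \ge 3} k f_k = 2m|\A'|$ into the Genus Formula (Theorem \ref{th:genus-formula}, which is an equality for a minimal embedding) gives
$$ g(\A') = 1 + \frac{(m-1)\,|\A'|}{2} - \frac{f}{2}, $$
and with $f \le \tfrac{2m|\A'|}{3}$ and $m \ge 4$ one obtains $g(\A') \ge 1 + \tfrac{m-3}{6}\,|\A'| \ge 1 + \tfrac{m-3}{6}\,n$, as desired. (Equivalently, plug $v = |\A'|$, $e = m|\A'|$ and the bound on $f$ into Euler's relation $v - e + f = 2 - 2g$.)

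I expect the second step to be the main obstacle. The paper's notion of combinatorial cycle allows repeated edges, so a literal reading of ``no cycles of length $2$'' is vacuous --- every edge can be walked back and forth --- and one must phrase everything in terms of genuine $2$-cycles made of two distinct transitions, which are precisely the ones relevant to the inequality $f_2 \le z_2$; one must also treat separately the case where $\rho$ collapses the two endpoints of a $2$-cycle, which degenerates it to a loop downstairs rather than a $2$-cycle. Once the loop-free and genuine-$2$-cycle-free property is secured for every accessible representative of $\Lang$, the remaining Euler/Genus-Formula bookkeeping is routine.
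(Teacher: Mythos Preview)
Your proof is correct and follows essentially the same route as the paper: pass to an accessible automaton, use the projection $\rho$ onto the minimal automaton (Proposition~\ref{pr:projection}) to transfer the ``no loops, no $2$-cycles'' hypothesis, and then apply the Genus Formula with $f_1=f_2=0$ together with the bound $f\le \tfrac{1}{3}\sum_{k\ge 3} k f_k$ (the paper packages this last step as Lemma~\ref{lem:constante-alpha}, but it is the same estimate). The one omission is that a $2$-cycle in the underlying undirected multigraph may also arise from two parallel transitions $q\xrightarrow{a}q'$ and $q\xrightarrow{b}q'$ oriented the \emph{same} way; the paper treats this shape separately, but your projection argument handles it identically (it yields either a loop or two parallel edges in $\A_{min}$).
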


Compared to Theorem~\ref{th:genus-growth-language}, the genus of the language depends on a condition dealing with only one automaton, namely the minimal automaton. The proof is given is \S \ref{sec:genus-size}.

\subsection{The genus of product automata} \label{subsec:product-automata}

 It is well know that the size of the product automaton corresponding to the union of
 two deterministic automata $\A$ and $\B$ is bounded by $m \times n$, the product of
 the size of $\A$ and the size of $\B$. This bound is actually a lower bound as presented by
 S.~Yu in~\cite{Yu}. By Prop. \ref{prop:general-genus-upper-bound}, up to a linear factor due to
 the size of the alphabet, $m\times n$ is also an upper bound on the genus of the
 product automaton. We prove that it is also a lower bound.

\begin{corollary} \label{cor:product-automata}
There is a family $(\A_m,\B_n)_{m\in\N, n\in \N}$ of  planar automata $\A_m$ and $\B_n$
of respective size $m$ and $n$ such that the deterministic minimal
automata $\A_n \cup \B_m$ has genus $O(m \times n)$.
\end{corollary}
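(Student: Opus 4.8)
The plan is to reduce to the Genus Lower Bound (Theorem~\ref{th:genus-lower-bound}). Concretely, I will exhibit, over a four-letter alphabet, planar deterministic automata $\A_m$ and $\B_n$ of sizes $m$ and $n$ (for $m,n$ large enough) for which the minimal automaton of $\Lang_{\A_m}\cup\Lang_{\B_n}$ has exactly $mn$ states and has neither loops nor cycles of length~$2$. Theorem~\ref{th:genus-lower-bound} will then force $g(\Lang_{\A_m}\cup\Lang_{\B_n})\geq 1+\tfrac{1}{6}\,mn$, while Proposition~\ref{prop:general-genus-upper-bound} gives $g(\Lang_{\A_m}\cup\Lang_{\B_n})\leq 4mn$, which is the asserted $\Theta(mn)$ behaviour (the non-trivial half being the lower bound).

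For the construction I would take $A=\{a,b,c,d\}$ and, for $m,n\geq 5$, let $\A_m$ be the cyclic counter on $\mathbb{Z}/m\mathbb{Z}$ with final state~$0$ and transitions $\delta(i,a)=\delta(i,c)=i+1$, $\delta(i,b)=i$, $\delta(i,d)=i-1$, and symmetrically let $\B_n$ be the cyclic counter on $\mathbb{Z}/n\mathbb{Z}$ with final state~$0$ and transitions $\delta(j,b)=\delta(j,c)=\delta(j,d)=j+1$, $\delta(j,a)=j$. These are complete deterministic automata of sizes $m$ and $n$, recognizing $\Lang_{\A_m}=\{w\mid |w|_a+|w|_c-|w|_d\equiv 0\ (\mathrm{mod}\ m)\}$ and $\Lang_{\B_n}=\{w\mid |w|_b+|w|_c+|w|_d\equiv 0\ (\mathrm{mod}\ n)\}$, and each is minimal, since from a state $i$ the word $a^{(-i)\bmod m}$ (resp.\ $b^{(-j)\bmod n}$) reaches the final state while taking every other state off it. Both are planar: as multigraphs each is a cycle with a few parallel edges and loops adjoined, which can be drawn as nested arcs and small loops without crossings.

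Next I would analyse the product automaton recognizing $\Lang_{\A_m}\cup\Lang_{\B_n}$: its state set is $(\mathbb{Z}/m\mathbb{Z})\times(\mathbb{Z}/n\mathbb{Z})$, of cardinality $mn$, and each letter acts by translation by the vector $v_a=(1,0)$, $v_b=(0,1)$, $v_c=(1,1)$, $v_d=(-1,1)$. For $m,n\geq 5$ these four vectors are pairwise distinct, nonzero, and no two are opposite, so the underlying graph is simple (no loop, no multiple edge) and no two states are joined by transitions in both directions; hence it has neither loops nor cycles of length~$2$. The one genuinely substantial step --- and the one I expect to be the main obstacle --- is to show that this $mn$-state automaton is already minimal, i.e.\ that no two of its states are Myhill--Nerode equivalent; this is exactly the statement that our pair realizes the tightness of the state-complexity bound $\mathrm{sc}(L\cup M)\leq\mathrm{sc}(L)\,\mathrm{sc}(M)$ for union~\cite{Yu}. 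I would check it by hand: since $v_a=(1,0)$ and $v_b=(0,1)$ move the two coordinates independently, given distinct states $(i_1,j_1)\neq(i_2,j_2)$ with, say, $i_1\neq i_2$, the word $a^{(-i_1)\bmod m}\,b^{q}$ with $q$ chosen so that $j_2+q\not\equiv 0\ (\mathrm{mod}\ n)$ is accepted from $(i_1,j_1)$ but rejected from $(i_2,j_2)$, and the case $i_1=i_2$, $j_1\neq j_2$ is symmetric. This product automaton is therefore the minimal automaton $\A_m\cup\B_n$ of $\Lang_{\A_m}\cup\Lang_{\B_n}$.

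Finally I would put the pieces together: for $m,n\geq 5$, the language $\Lang=\Lang_{\A_m}\cup\Lang_{\B_n}$ is recognized by a complete deterministic minimal automaton of size $mn$ having no loops and no cycles of length~$2$, over a $4$-letter alphabet, so Theorem~\ref{th:genus-lower-bound} gives $g(\Lang)\geq 1+\tfrac{4-3}{6}\,mn=1+\tfrac{mn}{6}$; combined with Proposition~\ref{prop:general-genus-upper-bound} this yields $g(\Lang_{\A_m}\cup\Lang_{\B_n})=\Theta(mn)$, which proves the corollary (with the family indexed by $m,n\geq 5$; the finitely many remaining small values are irrelevant to the asymptotic statement).
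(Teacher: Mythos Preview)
Your proof is correct and follows the same strategy as the paper: exhibit planar cyclic automata $\A_m$, $\B_n$ over a four-letter alphabet whose product (for the union) is a minimal automaton of size $mn$ with no loops and no length-$2$ cycles, then invoke the genus lower bound. The paper's version is terser (the automata are given by figures and it cites Theorem~\ref{th:genus-growth-theorem} rather than Theorem~\ref{th:genus-lower-bound}), but the substance is identical and your explicit verification of minimality and of the cycle conditions fills in details the paper leaves to the reader.
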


\begin{proof}

Let $\A_m$ be the $m$-state automaton defined as follows.

\vspace{-1ex}
%\begin{figure}[h]
\begin{center}
\includegraphics[scale=0.8]{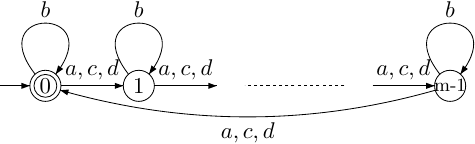}
\end{center}
%\end{figure}

\vspace{-3ex}
Let $\B_n$ be the $n$-state automaton defined as follows.
\begin{center}
\includegraphics[scale=0.8]{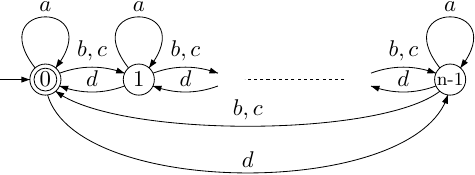}
\end{center}
The minimal automaton $\A_m \cup \B_n$ has size $m \times n$ and it contains neither loops,
nor bigons. Thus, Theorem~\ref{th:genus-growth-theorem} applies and leads to the conclusion. \hfill $\blacksquare$
\end{proof}

\subsection{A hierarchy of regular languages} \label{subsec:hierarchy}

Is there always  a planar deterministic representation of a regular language? For finite languages, the answer is positive. Indeed, finite languages are represented by trees (which are planar). In general, as evidenced by the Genus Growth Theorem for regular languages, the answer is a clear ``no''.  This section is devoted to an explicit constructive proof.

\begin{theorem}[Genus-Based Hierarchy] \label{th:hierarchy}
There are regular languages of arbitrarily large genus.
\end{theorem}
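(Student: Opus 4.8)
The plan is to exhibit, for every integer $g \geq 0$, an explicit regular language $L_g$ whose genus is at least $g$; in fact I would aim to produce languages of genus $\gg g$ so that the statement follows with plenty of room to spare. The cleanest route is to leverage the Genus Growth Theorem for languages (Theorem~\ref{th:genus-growth-language}) together with a family of languages whose minimal automata are forced to be large and, crucially, to have few short cycles. The design constraint is threefold: (i) the alphabet must have at least $4$ letters; (ii) the minimal automaton of $L_n$ must have at least $n$ states (so the Myhill--Nerode index grows); and (iii) the number of cycles of length $1$ and $2$ in \emph{any} recognizing automaton must be $o(n)$. Once such a family is in hand, Theorem~\ref{th:genus-growth-language} gives $g(L_n) \geq 1 + \bigl(\frac{m-3}{6m}-\varepsilon\bigr) mn \to \infty$, which proves the theorem.

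The concrete construction I would use is a ``counter modulo $n$'' language with $m \geq 4$ independent letters whose weights are chosen so that no short word is a nontrivial loop or bigon. For instance, fix $m = 4$ letters $a_1, a_2, a_3, a_4$ with integer weights $w_1, w_2, w_3, w_4$ that are pairwise incongruent and such that $w_i \not\equiv 0$, $w_i \pm w_j \not\equiv 0 \pmod{n}$ for the relevant residues (for large $n$ one can take e.g. $w_i = i$ or a Sidon-type set), and let
$$ L_n = \{\, u \in \{a_1,\dots,a_4\}^* \ \mid \ \textstyle\sum_{\text{letters of } u} w_{(\cdot)} \equiv 0 \pmod n \,\}. $$
The minimal automaton is the obvious cycle-of-residues automaton $\mathbb{Z}/n\mathbb{Z}$ with transition $q \mapsto q + w_i$, which has exactly $n$ states and is minimal (all residues are Myhill--Nerode-inequivalent since the weights generate $\mathbb{Z}/n\mathbb{Z}$). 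A loop at state $q$ would need a letter $a_i$ with $w_i \equiv 0 \pmod n$, and a bigon would need $w_i + w_j \equiv 0 \pmod n$ along a two-edge closed walk; the weight conditions rule both out for $n$ large. The delicate point is condition (iii): it must hold for \emph{every} deterministic automaton recognizing $L_n$, not just the minimal one. Here I would invoke Proposition~\ref{pr:projection}: any accessible automaton $\A$ recognizing $L_n$ projects onto $\A_{min}$ via $\rho$ commuting with transitions; a short closed walk in $\A$ maps to a short closed walk in $\A_{min}$, and since $\A_{min}$ has no loops and no bigons, any length-$1$ or length-$2$ closed walk in $\A$ must be collapsed by $\rho$ — but the structure of $\rho$ (it is essentially a refinement of the residue partition) bounds how many states can be so collapsed, forcing $z_1(\A), z_2(\A) = o(n)$. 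Actually the stronger Theorem~\ref{th:genus-lower-bound} is tailor-made here: it only requires the \emph{minimal} automaton to be loop-free and bigon-free, which we have verified directly, and then concludes $g(L_n) \geq 1 + \frac{m-3}{6} n$.

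The main obstacle, and where I would spend the most care, is verifying that the minimal automaton truly has no loops and no cycles of length $2$, robustly and for all sufficiently large $n$ — this is a purely number-theoretic condition on the chosen weight set, and one must make sure the choice survives as $n$ grows (a fixed weight set like $\{1,2,3,5\}$ works once $n$ exceeds $\max_i w_i + \max_j w_j$, since then no $w_i$ and no $w_i \pm w_j$ is $\equiv 0 \pmod n$ unless it is literally zero, which it is not). A secondary subtlety is confirming minimality and completeness of the residue automaton so that Theorem~\ref{th:genus-lower-bound} applies verbatim. Modulo these checks, the argument is immediate: picking $n_g$ large enough that $1 + \frac{m-3}{6} n_g \geq g$, the language $L_{n_g}$ has genus at least $g$, and since $g$ was arbitrary there is no finite bound on the genus of regular languages. $\blacksquare$
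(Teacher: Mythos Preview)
Your argument is correct and follows the same strategy as the paper: produce an explicit family of languages on an alphabet of size $\geq 4$ whose minimal automata grow in size and contain neither loops nor length-$2$ cycles, then invoke Theorem~\ref{th:genus-lower-bound}. The only difference is the concrete family --- the paper uses the union language $\Lang_{\A_3} \cup \Lang_{\B_n}$ from \S\ref{subsec:product-automata} (whose minimal product automaton has $3n$ states), while you use a weighted modular counter on $\mathbb{Z}/n\mathbb{Z}$ --- and in both cases the verification of the loop/bigon hypotheses is routine; your intermediate detour through Theorem~\ref{th:genus-growth-language} becomes unnecessary once you pivot to Theorem~\ref{th:genus-lower-bound}.
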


\begin{proof} Consider the alphabet $A = \{a, b, c, d\}$. Consider the
automata $\A_3$ and $B_n$ defined previously (in \S\ref{subsec:product-automata}) of
sizes $3$ and $n$ respectively.

%\vspace{-1ex}
%%\begin{figure}[h]
%\begin{center}
%\includegraphics[scale=0.8]{genus-pics-2.pdf}
%\end{center}
%%\end{figure}
%
%\vspace{-3ex}
%Let $\B_n$ be the $n$-state automaton defined as follows.
%\begin{center}
%\includegraphics[scale=0.8]{genus-pics-3.pdf}
%\end{center}
The minimal automaton $U_n$ representing the language $\A_3 \cup \B_n$ has size $3 \times n$ (see the figure below\footnote{To avoid an inextricable drawing, we made some transitions point to a shadow of their target. Alternatively, view the drawing (with crossings) on a torus.}) and it contains neither loops nor bigons whenever $n \geq 3$. Thus Theorem~\ref{th:genus-lower-bound} applies and achieves the proof. \hfill $\blacksquare$
\end{proof}

\begin{center}
\includegraphics[scale=0.7]{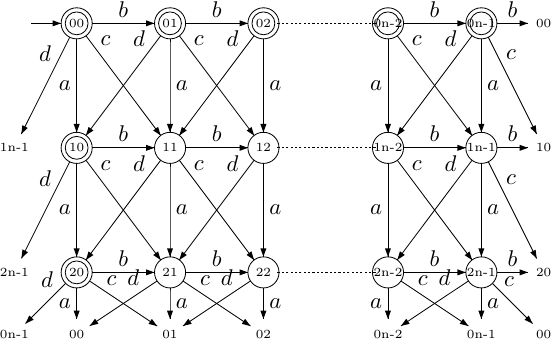}
\end{center}

\subsection{The exponential genus growth of determinization}

We prove that determinization leads to an exponential genus growth, as
this is the case for state-complexity (see for
instance~\cite{GaoYu}). Consider the following family of  automata
$(\A_n)_{n \in \N^\times}$. The alphabet $\mathscr{A}_n = \{ x_1,
\ldots, x_n\}$ is a set of cardinality $n$. The states of $\A_n$
consist of one initial state $s_0$, $n$ states (one state for each
letter) $s_1, \ldots, s_n$ and one trash state. All states except
the initial state and the trash state are final. The transitions
of $\A_n$ are defined as follows:
\begin{enumerate}
\item[$\bullet$] From the initial state $s_0$ to each state $s_i$
($1 \leq i \leq n$), there are $n-1$ transitions whose labels lie
in $\mathscr{A}_n - \{ x_i \}$. \item[$\bullet$] From each state $s_i$
($1 \leq i \leq n$) to itself, there are $n-1$ transitions whose
labels lie in $\mathscr{A}_n - \{ x_i \}$. \item[$\bullet$] From each
state $s_i$ ($1 \leq i \leq n$) to the trash state, there is one
transition whose label is $x_i$. (One can add $n$ loops with
labels $x_1, \ldots, x_n$ to the trash state so that the resulting
automaton is complete.)
\end{enumerate}

%Let $L \subset \Sigma^{*}$ be the following language. A word
%$w \in \Sigma^{*}$ lies in $L$ if and only if at least one letter $x \in \Sigma$ is missing in $w$.
%It follows from the definition that $A$ recognize $L$: $L_A = L$.
If follows from the definition that the language recognized by
$\A_n$ is the set of words containing at most $n-1$ distinct
letters.  It is also clear from the definition that for any $n
\geq 2$, $\A_n$ is planar and nondeterministic. (Note that the
fact that we include or not the trash state with or with its loops
is irrelevant.)

\begin{theorem} \label{th:exp-blowup-det}
The determinization of $\A_n$ is minimal and has genus
$$g_n \geq 1 + \left(\frac{n}{4} -1\right) 2^{n-1}.$$
\end{theorem}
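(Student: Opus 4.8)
The plan is to analyze the powerset automaton $\det(\A_n)$ precisely enough to apply Theorem~\ref{th:genus-lower-bound} (or the Genus Growth Theorem). First I would identify the reachable states of $\det(\A_n)$. Starting from $\{s_0\}$, reading a letter $x_j$ sends us to $\{ s_i : i \neq j \}$, i.e.\ to the subset of $\{s_1,\dots,s_n\}$ indexed by the complement of $\{j\}$. More generally, reading a word $w$ whose set of distinct letters is $J \subseteq \{x_1,\dots,x_n\}$ from any state $\{s_i : i \in I\}$ (with $I \ne \varnothing$) lands in $\{ s_i : i \notin J\}$ together with the trash state exactly when some $x_i$ with $i \in I$ was read. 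A short induction shows the reachable final states are precisely the subsets $\{ s_i : i \in I\}$ for all nonempty $I \subseteq \{1,\dots,n\}$, giving $2^{n}-1$ final states, plus the initial state $\{s_0\}$ and the trash state; the dominant term is $2^{n}$ states, and more precisely one checks there are at least $2^{n-1}$ states of the relevant kind. I would also verify minimality: two distinct nonempty subsets $I \ne I'$ are distinguished by a word, since if $i \in I \setminus I'$ then reading $x_i$ from $I$ reaches a non-trash final configuration unless... — more carefully, the Myhill--Nerode classes of "set of words seen so far" are distinct, so no two reachable states can be merged; this is where I would be careful to state the distinguishing extensions explicitly.

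The second step is to check the hypotheses of Theorem~\ref{th:genus-lower-bound}: the alphabet has size $m = n \geq 4$ (so we restrict to $n \geq 4$, absorbing small cases into the constant), and $\det(\A_n)$ has no loops and no cycles of length $2$ among the "massive" part of the automaton. This requires some care: the trash state does carry loops, and there may be a bounded number of short cycles near $\{s_0\}$; the point is that the count of length-$1$ and length-$2$ cycles is $O(\mathrm{poly}(n))$ while the number of states is $\Omega(2^{n})$, so the cycles of length $\le 2$ are negligible, which is exactly what the Genus Growth Theorem requires. (If one wants to invoke Theorem~\ref{th:genus-lower-bound} literally, one would instead argue that the minimal automaton has no loops nor bigons outside a negligible set, or apply Theorem~\ref{th:genus-growth-language} to the subsequence.) Plugging $m = n$ and "size $\ge 2^{n-1}$" into the lower bound $1 + \big(\frac{m-3}{6m} - \varepsilon\big) m \cdot (\text{size})$ with a fixed small $\varepsilon$ yields $g_n \ge 1 + \big(\frac{n-3}{6} - \varepsilon n\big) 2^{n-1}$, and a direct tightening of the face-counting argument of the Genus Formula (Theorem~\ref{th:genus-formula}) — using that every face has length $\ge 3$ since there are no loops or bigons — gives the cleaner bound $g_n \ge 1 + \big(\frac{n}{4} - 1\big) 2^{n-1}$ stated in the theorem. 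Here one uses $e_1 = \frac{1}{2}\sum_k k f_k \ge \frac{3}{2} f$ together with $e_1 = m \, e_0$ and Euler's formula $e_0 - e_1 + f = 2 - 2g$.

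The main obstacle I anticipate is the bookkeeping in the last step: getting the constant $\frac{n}{4} - 1$ rather than the weaker $\frac{n-3}{6} - \varepsilon n$ requires not just the generic Genus Growth Theorem but a sharpened Euler-characteristic estimate tailored to this family, exploiting that $e_1 = n \cdot e_0$ exactly (the automaton is complete with $n$ letters) and that essentially all faces have length $\ge 3$. Concretely, from $e_0 - e_1 + f \le 2 - 2g$ (with equality in a minimal cellular embedding by Youngs, Theorem~\ref{th:Youngs}) and $f \le \frac{2}{3} e_1 = \frac{2n}{3} e_0$, one gets $2g \ge 2 - e_0 + e_1 - f \ge 2 + (n-1) e_0 - \frac{2n}{3} e_0 = 2 + \frac{n-3}{3} e_0$; to reach $\frac{n}{4}$ one instead keeps $e_1 = n e_0$ and bounds $f$ via a slightly more refined inequality (for instance $2 f \le e_1$ would already suffice if all faces had length $\ge 4$, which is \emph{not} quite true, so the honest argument must quantify how many $3$-faces can occur). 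The secondary obstacle is the explicit minimality proof for $\det(\A_n)$, but this is routine Myhill--Nerode bookkeeping once the reachable-state description is in hand. I would therefore structure the write-up as: (i) reachable states and their count; (ii) minimality; (iii) verification of the no-short-cycles / negligibility hypothesis; (iv) the sharpened Euler computation yielding the stated bound.
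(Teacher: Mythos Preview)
Your proposal has a genuine gap at the crucial point. The claim that ``the count of length-$1$ and length-$2$ cycles is $O(\mathrm{poly}(n))$'' is false: in the determinized automaton, a state $S$ (viewed as a subset of the alphabet) has a loop for every letter $x \notin S$, so a state of cardinality $k$ contributes $n-k$ loops. Summing over all $2^{n}$ states gives $\sum_{k} \binom{n}{k}(n-k) = n\cdot 2^{n-1}$ loops, which is exactly half of all transitions. Loops are therefore \emph{not} negligible, and neither Theorem~\ref{th:genus-lower-bound} (which requires no loops in the minimal automaton) nor Theorem~\ref{th:genus-growth-theorem} (which requires $z_1(n)/n \to 0$) can be invoked.

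The paper circumvents this by a step your outline does not contain. In a minimal embedding every loop may be taken to be a $1$-gon (Lemma~\ref{lem:one-gon-lemma}), hence bifacial; deleting a bifacial loop merges two adjacent $2$-cells into one and leaves the genus of the surface unchanged. After deleting all $n\cdot 2^{n-1}$ loops, the remaining underlying graph is the $n$-dimensional hypercube: every non-loop transition goes from a subset $S$ to $S \setminus \{x\}$, so edges join only levels of adjacent cardinality. This graph is bipartite of girth $4$, whence in \emph{any} cellular embedding every face has even length $\geq 4$. That gives $2e_1 = \sum_{k \geq 4} k f_k \geq 4 e_2$, so $e_2 \leq \tfrac{1}{2} e_1$, and Euler's formula with $e_0 = 2^{n}$, $e_1 = n\cdot 2^{n-1}$ yields $2g \geq 2 - e_0 + \tfrac{1}{2} e_1 = 2 + (\tfrac{n}{4}-1)2^{n}$ directly. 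This is precisely the ``$2f \leq e_1$ if all faces have length $\geq 4$'' inequality you flagged and then set aside as ``not quite true''; it \emph{is} true once the loops have been excised, and that excision is the idea your argument is missing.
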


For instance, $g_4 \geq 1$ so the determinization
$\A_{4}^{\rm{det}}$ of $\A_{4}$ is not planar. This can be seen by
Kuratowski's theorem (as it is can be seen $\A_{4}^{\rm{det}}$
contains the utility graph $K_{3,3}$). It is not hard to embed
$\A_{3}^{\rm{det}}$ into a plane so $g_3 = 0$. Of course the
meaning of the theorem is that the genus of $\A_{n}^{\rm{det}}$
grows at least exponentially in $n$.

%\vspace{-4ex}
\begin{figure}[!ht]
\raisebox{5mm}{\includegraphics[width=6cm]{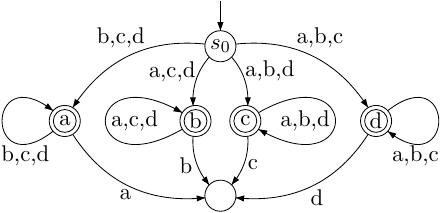}}
\includegraphics[width=6cm]{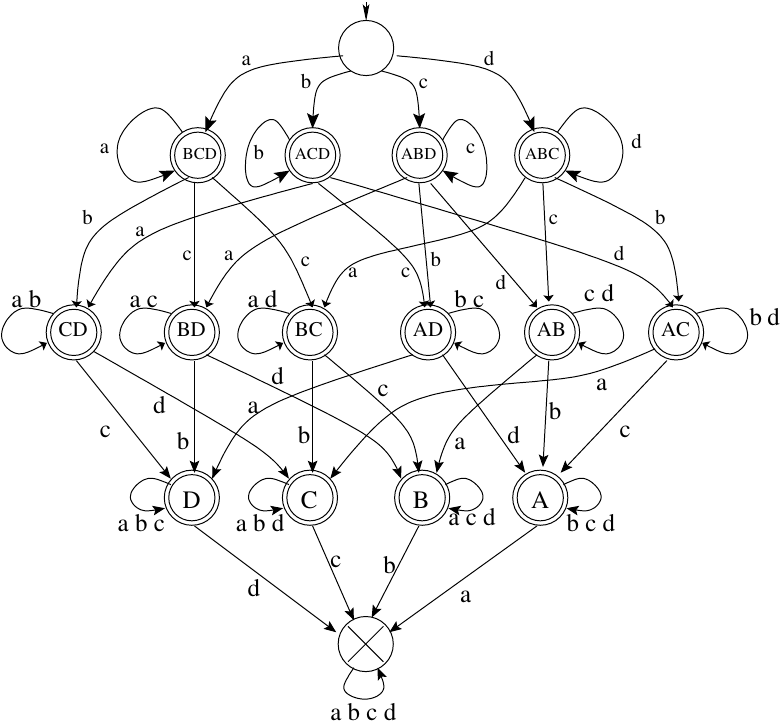}
\caption{The automaton $\A_4$ and its determinized form.}
\label{fig:det-auto}
\end{figure}%

%\vspace{-10mm}
\begin{proof}

Let us describe an isomorphic variant $\A^{\rm{det}}_n$ of the
determinized form of $\A$ by the powerset method.  The states of
$\A^{\rm{det}}_n$ consist of all subsets of $\Sigma_n$. The
initial state of $\A^{\rm{det}}_n$ is ${\mathscr{A}}_n$ itself. The trash
state is the empty set. Any state but the trash state  and the
initial state is a final state.

Therefore, the number $e_{0}$ of states of $\A^{\rm{det}}_n$ is $2^n$. %Each non empty subset except the initial state is also a final state (as the corresponding subset is non empty, hence it contains one letter and all such states are final in the original $A$).
The transitions are described as follows. For each letter $x \in {\mathscr{A}}_n$, there is one transition from $S$ to the state $S - \{x\}$. % (The trash state has no outgoing transitions, only incoming transitions.)
%This describes completely $\A^{\rm{det}}_n$.
Minimality follows from definitions: there are no
indistinguishable states.

Let us consider the number $e_{1}^{o}$ of transitions of
$A^{\rm{det}}$ that are loops. By definition, each state labelled
by a subset of cardinality $k$ contributes exactly $n-k$ loops. We
conclude that
\begin{equation}
 e_{1}^{o} = \sum_{k = 0}^{n} {n\choose k} \ (n-k) = \sum_{k=0}^{n} {n \choose k}\ k = n\cdot 2^{n-1}.
 \label{eq:number-loops}
\end{equation}

It follows that exactly half of the transitions are loops:

\begin{equation}
e_{1} = 2e_{1}^{o}. \label{eq:transition-loop}
\end{equation}

Consider now a minimal embedding of $\A_{n}^{\rm{det}}$ into a closed oriented surface $\Sigma$.
Consider one loop $l$ in $\Sigma^{1}$. Since it is bifacial, it is the intersection
of exactly two distinct adjacent closed $2$-cells. Therefore removing the loop (while
keeping the state) amounts to merging two $2$-cells into one
$2$-cell. The union of states and transitions (minus $l$) still
induces a $CW$-complex decomposition of $\Sigma$.
Therefore, according to Euler's relation, the genus of $\Sigma$ is
unaffected. We can therefore remove all loops from $\Sigma^{1}$. Thus we
can assume that $e_1 = e_{1}^{o}$ (from
(\ref{eq:transition-loop})) and $f_1 = 0$.

\begin{lemma} \label{lem:properties}
For the new graph minimally embedded in $\Sigma$, the following properties hold:
\begin{enumerate}
\item[$\bullet$] $f_2 = 0$; \item[$\bullet$] For any $k \geq 1$,
$f_{2k+1} = 0$; \item[$\bullet$] For any $k \geq n$, $f_{2k} = 0$.
\end{enumerate}
\end{lemma}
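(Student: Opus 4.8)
The plan is to analyze the structure of the loop-free graph obtained from $\A_n^{\rm{det}}$ after removing all loops, and to argue that its minimal cellular embedding cannot contain faces with the forbidden boundary lengths. The key observation is that this graph is \emph{bipartite} with a natural grading: after removing loops, every remaining transition goes from a subset $S$ of cardinality $k$ to a subset $S - \{x\}$ of cardinality $k-1$, so the vertices are partitioned into $n+1$ levels $V_0, V_1, \ldots, V_n$ by cardinality, and every edge joins adjacent levels. First I would record that there are no multiple edges between two distinct vertices (the transition $S \to S - \{x\}$ is the unique edge between $S$ and $S-\{x\}$, and there is no edge in the other direction since $x \notin S-\{x\}$), so the graph is simple; in particular there is no cycle of length $1$ (which is why $f_1 = 0$ already) and no cycle of length $2$, since a $2$-cycle would require two edges between the same pair of vertices. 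A face of length $k$ is bounded by a closed walk of length $k$, hence by a cycle of length $\le k$ of the appropriate parity; since every cycle in a bipartite graph has even length, there are no closed walks of odd length at all, giving $f_{2k+1} = 0$ for all $k \ge 1$, and the absence of $2$-cycles gives $f_2 = 0$.

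Next I would handle the third bullet, $f_{2k} = 0$ for $k \ge n$, which is the genuinely combinatorial part. A $2k$-face is bounded by a closed walk $w_0, w_1, \ldots, w_{2k} = w_0$ in the graph, and because of the grading the sequence of cardinalities $|w_0|, |w_1|, \ldots$ performs a walk on the path $0, 1, \ldots, n$ that goes up or down by exactly one at each step and returns to its start after $2k$ steps. The boundary walk of a face, traversed consistently around the face, is a closed walk; I want to bound its length. The point is that such a walk is confined to an interval of at most $n+1$ consecutive integers, so as a lattice walk of bounded range that returns to its origin it cannot have arbitrarily restricted length — but actually what we need is the reverse: I would argue that any face boundary, being an embedded closed curve (or a closed walk bounding a disc), corresponds to a reduced closed walk, and a reduced closed walk in this graded graph that stays within levels $0, \ldots, n$ has length at most $2n$ — heuristically, the deepest excursion a disc boundary can make descends from level $n$ to level $0$ and back, costing $2n$ steps, and any further backtracking would be cancellable in the boundary walk of a $2$-cell. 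I expect this step to require a careful argument about why the boundary walk of a $2$-cell can be taken reduced (no immediate backtrack along the same edge) unless an edge is monofacial; since after removing loops all remaining edges are bifacial (each bifacial edge separates two distinct $2$-cells) I can invoke that here to get genuine cycles rather than walks with backtracking.

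The main obstacle I anticipate is precisely pinning down the bound $f_{2k} = 0$ for $k \ge n$: one must rule out long face boundaries, and the naive ``bipartite implies even'' argument only handles parity, not length. The cleanest route is probably to show that a $2k$-gon bounding a $2$-cell, with all its edges bifacial, is a simple cycle or at worst a closed walk traversing each edge at most twice in opposite directions, and then that any cycle in the graded graph on levels $\{0, 1, \ldots, n\}$ has length at most $2n$ because it is determined by its sequence of cardinalities, which forms a closed lattice walk on $\{0, \ldots, n\}$; a simple cycle's cardinality sequence, having no repeated vertices, visits at most all $2^n$ vertices, so this crude count is not enough and I would instead need the sharper observation that along a simple cycle the cardinality must strictly alternate direction a bounded number of times — this is the subtle point and where I would spend the effort. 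Once Lemma \ref{lem:properties} is established, plugging $f_1 = f_2 = 0$, $f_{\text{odd}} = 0$, and the range restriction into the Genus Formula (Theorem \ref{th:genus-formula}) together with Euler's relation and the edge count $e_1 = e_1^o = n \cdot 2^{n-1}$ will yield the stated lower bound $g_n \ge 1 + (n/4 - 1)2^{n-1}$.
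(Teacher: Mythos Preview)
Your treatment of the first two bullets is correct: the loop-free graph is the hypercube $Q_n$ (subsets of $\mathscr{A}_n$, edges joining subsets that differ in one element), which is simple and bipartite, so there are no $2$-cycles and no odd closed walks, whence $f_2 = 0$ and $f_{2k+1} = 0$. The paper itself offers nothing beyond ``left to the reader,'' so for these two items your argument \emph{is} the argument.

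The genuine gap is in the third bullet. Your proposed bound --- that a simple cycle in this graded graph has length at most $2n$ --- is false: $Q_n$ contains simple cycles of every even length from $4$ up to $2^n$ (it is Hamiltonian), so grading by cardinality does not limit cycle length, and ``the cardinality must strictly alternate direction a bounded number of times'' is simply not a property of cycles in $Q_n$. Your fallback, that after removing the loops all remaining edges are bifacial so that face boundaries become reduced walks, is also unjustified: removing a bifacial loop merges its two adjacent faces, and any edge that previously separated those two faces becomes monofacial. Indeed the third bullet is literally false for $n=2$ (the loop-free graph is a $4$-cycle, whose unique cellular embedding in the sphere has $f_4 = 2$, while the lemma would assert $f_{2k}=0$ for $k \ge 2$); for larger $n$ it does hold in a minimal embedding, but proving it directly seems to require already knowing the exact genus of $Q_n$, which is precisely what Theorem~\ref{th:exp-blowup-det} is bounding. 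Fortunately the third bullet is never actually used downstream: the only consequence drawn from the lemma is $2e_1 \ge 4(f_4 + f_6 + \cdots) = 4e_2$, and this follows from $f_1 = f_2 = f_3 = 0$ alone, since then every face has size at least $4$; the truncation of the sum at $f_{2n-2}$ is purely cosmetic. So your first two bullets already suffice for the application.
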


\noindent{\emph{Proof}}. These observations are consequences of the particular structure
of the original graph $\A_{n}^{\rm{det}}$:
they follow from the definition of $A^{\rm{det}}$ and are left to the
reader. \hfill $\blacksquare$\\

\noindent We return to the proof of Theorem \ref{th:exp-blowup-det}. We have
$$ 2e_1 = f_1 + 2f_2 + 3f_3 + 4f_4 \cdots = 4f_4 + 6f_6 + \cdots + (2n-2)f_{2n-2}.$$
The first equality is relation (\ref{eq:degree-formula}) and the
second equality follows from Lemma \ref{lem:properties}. Since all
numbers are nonnegative numbers, we have
$$ 2e_1 \geq 4 (f_4 + f_6 + \cdots + f_{2n-2}) = 4\ e_2.$$
Thus $e_2 \leq \frac{1}{2} e_1$. From Euler's relation, we deduce
that
$$ 2g = 2 - e_0 + e_1 - e_2 \geq 2 - e_0 + e_1 - \frac{1}{2} e_1 = 2 - e_0 + \frac{1}{2} e_1.$$
Substituting values for $e_0$ and $e_1$, we obtain
$$ 2g \geq 2 + \left(\frac{n}{4} - 1\right)2^{n}.$$
This is the desired result. \hfill $\blacksquare$

\end{proof}

\section{State-minimal automata versus genus-minimal automata} \label{sec:state-versus-genus}

Minimal automata--as given by Myhill-Nerode Theorem--have the remarkable properties
to be unique up to isomorphism, leading to a fruitful relation between rational languages
and automata. In this section, we show that state-minimality is a notion orthogonal to genus-minimality.
First, consider the following proposition:

\begin{proposition} There are deterministic automata with a genus strictly lower than the genus of
their corresponding minimal automaton.
\end{proposition}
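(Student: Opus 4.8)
The plan is to exhibit a single explicit automaton $\A$ together with its Myhill--Nerode minimal automaton $\A_{min}$, and to show $g(\A) < g(\A_{min})$ by giving an upper bound for $g(\A)$ and a lower bound for $g(\A_{min})$. The natural place to look is among the examples already built in the paper: the determinized automata $\A_n^{\rm{det}}$ of Theorem~\ref{th:exp-blowup-det} are state-minimal but have exponential genus, so the idea is to build a \emph{non-minimal} deterministic automaton recognizing the same language (words using at most $n-1$ distinct letters from an $n$-letter alphabet) whose genus is strictly smaller. More concretely, I would fix a small value, say $n = 4$, so that $g(\A_4^{\rm{det}}) \geq 1$ (indeed $\A_4^{\rm{det}}$ contains $K_{3,3}$, as noted after Theorem~\ref{th:exp-blowup-det}, hence is not planar), and then construct a redundant but planar deterministic automaton $\A$ recognizing the same language $L_4$.

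The key steps, in order, are: (1) describe $L$ and its minimal automaton $\A_{min}$ explicitly, and record the lower bound $g(\A_{min}) \geq 1$ (or, for larger alphabets, the exponential bound from Theorem~\ref{th:exp-blowup-det}); (2) construct a deterministic complete automaton $\A$ with $L_{\A} = L$ that is planar, by ``unfolding'' or adding redundant states so that the transition graph can be drawn without crossings --- for instance a tree-like prefix structure that tracks the set of letters seen so far but is laid out as a planar DAG with a single absorbing trash state, rather than as the powerset lattice; (3) exhibit a planar embedding of $\A$ explicitly (a picture suffices, as elsewhere in the paper), giving $g(\A) = 0$; (4) conclude $g(\A) = 0 < 1 \leq g(\A_{min})$. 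Alternatively, and perhaps more cheaply, one can sidestep the determinized examples entirely: take any regular language whose minimal automaton is a fixed small non-planar deterministic automaton (e.g. built so that its transition graph is $K_5$ or $K_{3,3}$, as suggested by the example $\K_5$ and the graph $K_5$ discussed in the text), and then produce a larger deterministic automaton for the same language that \emph{is} planar by subdividing or duplicating states --- subdivision of edges strictly can only decrease or preserve genus, and duplicating states to break a $K_5$ minor can bring the genus down to $0$.

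The main obstacle I expect is step (2)/(3): it is not automatic that adding redundant states \emph{lowers} the genus --- by Proposition~\ref{pr:reachable} and Proposition~\ref{pr:complete}, removing unreachable states or completing an automaton does not change the genus, and one must be careful that the extra states genuinely buy planarity rather than merely preserving the obstruction. The delicate point is therefore to argue that the specific redundant automaton $\A$ really embeds in the plane, which is why I would make the construction completely explicit and supply an honest planar drawing (together with a short argument, via Euler's formula or Kuratowski's theorem, that no $K_5$ or $K_{3,3}$ minor survives). Once a concrete planar $\A$ is in hand and the minimal automaton is pinned down to be non-planar, the strict inequality $g(\A) < g(\A_{min})$ is immediate, and the proposition follows. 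The cleanest write-up is: \emph{let $\A$ be [explicit planar automaton], let $L = L_{\A}$, let $\A_{min}$ be its minimal automaton; then $g(\A) = 0$ by the displayed embedding while $g(\A_{min}) \geq 1$ since $\A_{min}$ contains a $K_{3,3}$ (resp. $K_5$) subdivision; hence $g(\A) < g(\A_{min})$.}
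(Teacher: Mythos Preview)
Your second alternative is exactly what the paper does: it takes the automaton $\K_5$ (already introduced earlier in the text), whose underlying graph is the complete graph $K_5$ and hence has genus~$1$, and exhibits a six-state planar automaton $\K'$ recognizing the same language; the whole proof is two pictures and one line. So your plan is correct and, in its cheaper form, essentially identical to the paper's argument.

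Your first route via $\A_4^{\rm{det}}$ is sound in principle but needlessly heavy: you would have to build and draw a planar deterministic automaton for the language ``words on four letters using at most three distinct letters,'' which is substantially more work than duplicating a single state of $\K_5$. The paper's example is the minimal-effort instantiation of the strategy you describe.
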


\begin{proof}Let  $\K_5$, $\K'$ be the automata:

\vspace{1ex}\includegraphics[scale=0.8]{genus-pics-4.pdf}
\hspace{2cm}\includegraphics[scale=0.8]{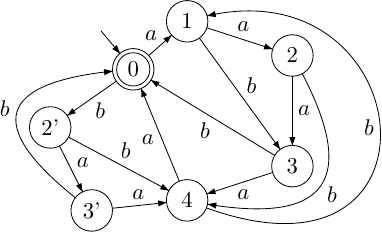}

Clearly, $\K_5$ and $\K'$ represent the same language $\Lang$, $\K_5$ is minimal, $\K_5$ has genus $1$ and $\K'$ is planar.
\end{proof}

\begin{example}Minimal automata need not have maximal genus.
For instance, the following deterministic automaton
\begin{center}
\includegraphics[scale=0.8]{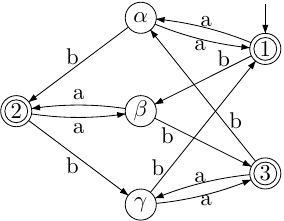}
\end{center}
has genus 1, but its corresponding minimal automaton
\raisebox{-2mm}{\includegraphics[scale=0.7]{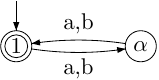}}
has genus 0.
\end{example}

\begin{wrapfigure}[6]{r}{30mm}
\vspace{-9.5mm}
\includegraphics[scale=0.65]{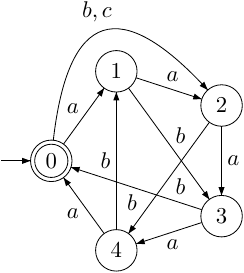}
\end{wrapfigure}

Contrarily to set-theoretic state minimization (Myhill-Nerode
theorem), there is no isomorphism between genus-minimal automata,
even within the class of genus-minimal automata having minimal
state size. Consider the language of the minimal automaton
$\K_5^*$ opposite. It is represented by the two automata
$\K_{5,1}^*$ (middle) and $\K_{5,2}^*$ (right) below. (To save
space we have omitted all loops based at each state $i$, $0 < i
\leq 4$, with label $c$.)

\vspace{2ex}

\hspace{0.4cm} \includegraphics[scale=0.6]{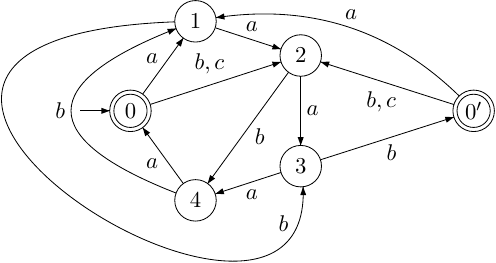}
\hspace{0.4cm} \includegraphics[scale=0.6]{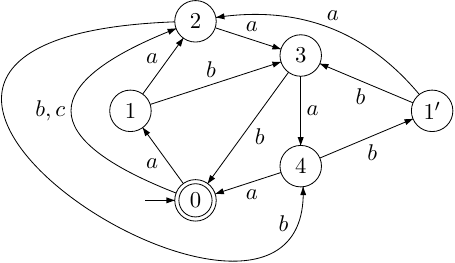}

\vspace{2ex} \noindent $\K_{5,1}^*$ and $\K_{5,2}^*$ are both
planar, thus have minimal genus. They have minimal size within the
set of automata of minimal genus. Indeed, there is only one
automaton with a lower number of states, it is $\K_5^*$, and it is
not planar. Finally, $\K_{5,1}^*$ and $\K_{5,2}^*$ are not
isomorphic: $\K_{5,1}^*$ contains two non-trivial edges (that are
not loops) labelled $c$ where $\K_{5,2}^*$ contains only one. To
sum up, the two automata $\K_{5,1}^*$ and $\K_{5,2}^*$  (a)
represent the same language, (b) have minimal genus, (c) have
minimal size given that genus, (d) have non isomorphic underlying
graphs.

\section{Nondeterministic planar representation}

The genus of a regular language $L$ was defined in \S \ref{sec:genus-reg-language} as the minimal genus of a {\emph{deterministic}} automaton recognizing $L$. In this section, we point out that that the word ``deterministic" is crucial in the previous sentence. The following result is essentially proved by R.V. Book and A.K. Chandra \cite[Th. 1a \& 1b]{BoCh}. (See also \cite{BezPal}.)

\begin{theorem}[Planar Nondeterministic Representation]
For any regular language $L$, there exists a planar nondeterministic automaton $\A$ recognizing $L$.
\end{theorem}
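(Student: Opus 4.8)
The plan is to prove this by an explicit universal construction: given an arbitrary regular language $L$, recognized by some (possibly complicated, high-genus) deterministic automaton, we produce a planar \emph{nondeterministic} automaton for $L$ by first passing to a canonical normal form and then "opening up" the state graph into a planar shape using nondeterminism to reglue the pieces. Concretely, I would start from the minimal deterministic automaton $\A_{min} = \langle Q, A, q_0, F, \delta \rangle$ for $L$, with $|Q| = k$ states, and aim to build a planar nondeterministic automaton whose runs are in bijection with runs of $\A_{min}$.

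The key idea, following \cite{BoCh}, is a layered (or "unrolled") construction. I would introduce a new vertex for each state of $\A_{min}$ placed along a horizontal line, and then encode each transition $\delta(q_i, a) = q_j$ by a small gadget that does not force edge crossings. The obstruction to planarity in a deterministic automaton is exactly that many edges must emanate from and arrive at each single vertex in a cyclically incompatible order; nondeterminism lets us break one state into several copies and route edges through the copies, so that locally each copy has bounded degree and the copies can be arranged in a planar "grid" or "brick wall" pattern. More precisely, for each state $q$ one creates a vertical chain of $|A|$ (or $|Q|$) auxiliary copies of $q$, linked by $\epsilon$-transitions (or, if $\epsilon$-transitions are to be avoided, by reading a dummy symbol — but since the statement only asks for a nondeterministic automaton, $\epsilon$-moves, or equivalently the standard trick of splitting letters, are permissible); an incoming $a$-transition to $q$ enters the chain at the top, an outgoing $a$-transition from $q$ leaves the chain at the level corresponding to $a$. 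Because every auxiliary vertex now has degree at most a fixed small constant and the chains are laid out as parallel vertical segments with the transition edges running monotonically between consecutive columns, the whole diagram embeds in the plane with no crossings. One then checks that a word $w$ is accepted by this gadget automaton if and only if there is a computation threading the chains in the order dictated by $\delta$, i.e. iff $w \in L$.

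The steps, in order, would be: (1) fix $\A_{min}$ and label its states $q_1, \ldots, q_k$; (2) define the planar "skeleton": $k$ vertical chains (columns), the $j$-th chain having one node per letter plus entry/exit nodes, connected internally by $\epsilon$-arrows; (3) define the inter-column edges realizing $\delta$, and argue that they can be drawn monotonically left-to-right without crossings (this is where one must be slightly careful about the cyclic order of edges at each node, and it is the one genuinely geometric point); (4) designate the initial node (top of the $q_0$-column) and final nodes (those associated to $F$); (5) prove $\Lang_\A = L$ by exhibiting a run-preserving correspondence between accepting paths of $\A$ and accepting paths of $\A_{min}$; (6) conclude planarity from the explicit layout, invoking, if desired, Kuratowski/Euler-style bookkeeping only as a sanity check. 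Since the paper allows nondeterministic automata to be incomplete and does not forbid $\epsilon$-transitions in the nondeterministic setting, no extra care about a trash state is needed.

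The main obstacle I anticipate is step (3): verifying that the family of inter-column transition edges admits a simultaneous crossing-free drawing. A naive layout will produce crossings whenever two transitions "overtake" each other between columns; the fix is to route them through intermediate dummy vertices (so that between any two consecutive columns the edges are pairwise non-nesting and non-crossing by construction — essentially a sorting-network / planar-permutation-routing argument), which inflates the number of states and transitions but keeps the automaton finite, nondeterministic, and planar. Controlling this blow-up is precisely what \cite{BezPal} quantifies; for the bare existence statement here it suffices to observe that the resulting automaton is still finite. Given that the theorem is attributed to \cite{BoCh} and the excerpt explicitly flags that it is "essentially proved" there, I would keep the argument at the level of this construction sketch and refer to \cite{BoCh, BezPal} for the routing details, emphasizing that the whole point of the section is the \emph{contrast} with the deterministic case rather than a new proof.
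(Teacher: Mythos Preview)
Your approach differs from both proofs the paper actually gives (both of which follow \cite{BoCh}), and the gap you flag in step~(3) is where essentially all the work lies.

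The paper's first proof is structural induction on regular expressions: one fixes the class $\mathcal{C}$ of planar NFAs with exactly one initial and one (distinct) final state, handles the base cases, and shows $\mathcal{C}$ is closed under union, concatenation, and Kleene's $+$ by explicit planar glueings using $\varepsilon$-transitions, which are then contracted by merging their endpoints. No routing argument is needed: planarity is preserved at every inductive step by design. The second proof builds a universal automaton $\A_n$ on $n$ states (one transition labelled $\sigma_{ij}$ from $i$ to $j$ for each pair $i,j$) and shows by induction on $n$ that $\A_n$ has a planar equivalent $\B_n$: given a planar $\B_n$, place a fresh vertex inside every bounded face of the underlying planar graph and connect it to all vertices on that face's boundary to obtain $\B_{n+1}$; any $n$-state NFA is then recovered from $\B_n$ by relabelling and parallelizing transitions.

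Your layered/grid construction is a third route, and the difficulty you isolate is exactly the content of the proof. ``Route through intermediate dummy vertices'' is not sufficient as stated: a dummy vertex traversed by two distinct routed transitions creates spurious accepting paths unless the routings are kept disjoint, and keeping them disjoint while remaining planar is precisely the nontrivial step you defer. This can be made rigorous (and is closer in spirit to the quantitative analysis in \cite{BezPal} than to the \cite{BoCh} constructions you invoke), but as written the proposal outsources the only hard step. The paper's two arguments sidestep the routing problem altogether: the first never leaves the planar world, the second exploits the face structure of an already planar object to grow it.
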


\begin{proof}
 We include two proofs for the convenience of the reader. Both follow closely \cite{BoCh} with minor modifications. Let $L = L(R)$ be a regular language given by a regular expression $R$. We shall show that
$L = L(\A)$ for some planar nondeterministic automaton $\A$.

The proof follows the recursive definition of a regular expression. An expression that is not the empty string is regular if and only if it is constructed from a finite alphabet using the operations of union, concatenation and Kleene's ${\hphantom{a}}^+$-operation.
Consider the class ${\mathcal{C}}$ of planar finite nondeterministic automata that have exactly one initial state, exactly one final state such that the initial state and the final state are distinct.

Clearly, $\mathcal{C}$ contains an automaton that recognizes the regular expressions $R = \varnothing$ (take $\A$ to be the automaton with two states, one initial, one final and no transition) and $R = a \in A$ (take the automaton with two states, one initial, one final and one $a$-labelled transition from the initial state to the final state).

Next, we show that the class ${\mathcal{C}}$ is closed under the three operations mentioned above. Suppose given two subexpressions $R$ and $S$ recognized by $\A$ and $\B$ in $\mathcal{C}$ respectively.

Consider union:  first we construct an automaton $\A + \B$ with $\varepsilon$-transitions that recognizes $R + S$.

\begin{center}
\begin{picture}(0,0)%
\includegraphics{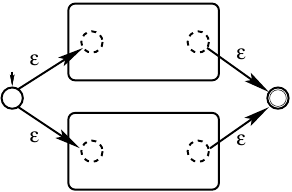}%
\end{picture}%
\setlength{\unitlength}{2072sp}%
\begingroup\makeatletter\ifx\SetFigFont\undefined%
\gdef\SetFigFont#1#2#3#4#5{%
  \reset@font\fontsize{#1}{#2pt}%
  \fontfamily{#3}\fontseries{#4}\fontshape{#5}%
  \selectfont}%
\fi\endgroup%
\begin{picture}(4420,2901)(221,-4414)
\put(2341,-2131){\makebox(0,0)[lb]{\smash{{\SetFigFont{10}{12.0}{\familydefault}{\mddefault}{\updefault}{\color[rgb]{0,0,0}$\A$}%
}}}}
\put(2341,-3841){\makebox(0,0)[lb]{\smash{{\SetFigFont{10}{12.0}{\familydefault}{\mddefault}{\updefault}{\color[rgb]{0,0,0}$\B$}%
}}}}
\end{picture}%
\end{center}

Define an $\varepsilon$-removal operation as follows. Consider  an $\varepsilon$-transition that goes from state $q_1$ to state $q_2$.  (We assume that $q_1 \not= q_2$.) We suppress the $\varepsilon$-transition and merge the two states $q_1$ and $q_2$ into one state $q$. Ascribe all incoming and outgoing transitions at $q_1$ and $q_2$ respectively, to the new state $q$.
The $\varepsilon$-removal is best visualized by pulling the state $q_2$ back to the state $q_1$, or by pushing the state $q_1$ forward to the state $q_2$ before actually merging them\footnote{Note that the result of the $\varepsilon$-removal operation does not depend on the orientation of the transitions.}.

\begin{center}
\includegraphics[scale=0.35]{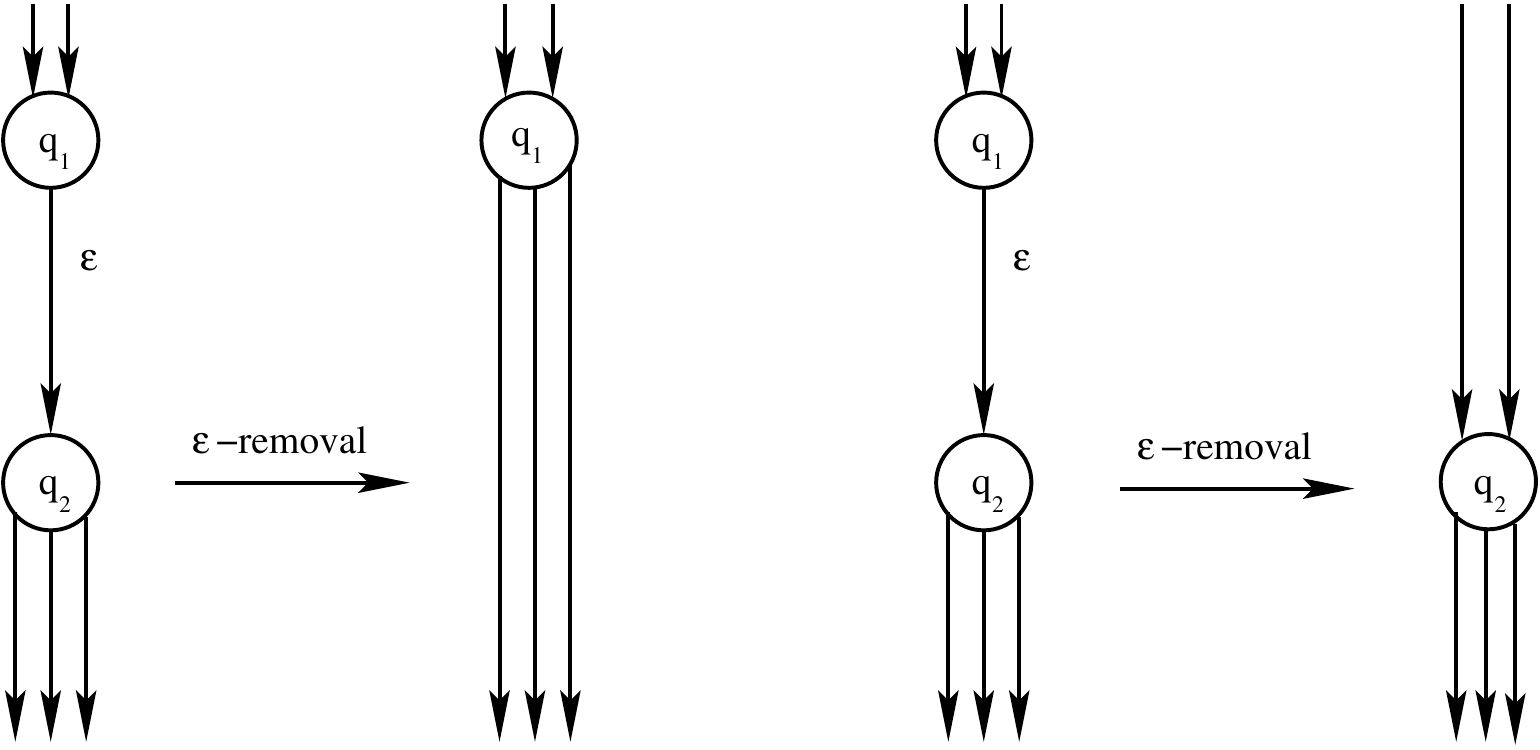}
\end{center}

We apply this operation four times (in any order) to the automaton above.
Clearly the result is an automaton that remains in~$\mathcal{C}$.

Consider concatenation: the following planar automaton with one $\varepsilon$-transition recognizes the expression $R\cdot S$.

\begin{center}
\begin{picture}(0,0)%
\includegraphics{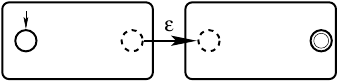}%
\end{picture}%
\setlength{\unitlength}{2072sp}%
\begingroup\makeatletter\ifx\SetFigFont\undefined%
\gdef\SetFigFont#1#2#3#4#5{%
  \reset@font\fontsize{#1}{#2pt}%
  \fontfamily{#3}\fontseries{#4}\fontshape{#5}%
  \selectfont}%
\fi\endgroup%
\begin{picture}(5151,1236)(1228,-2749)
\put(5131,-2176){\makebox(0,0)[lb]{\smash{{\SetFigFont{10}{12.0}{\familydefault}{\mddefault}{\updefault}{\color[rgb]{0,0,0}$\B$}%
}}}}
\put(2341,-2131){\makebox(0,0)[lb]{\smash{{\SetFigFont{10}{12.0}{\familydefault}{\mddefault}{\updefault}{\color[rgb]{0,0,0}$\A$}%
}}}}
\end{picture}%
\end{center}

Next we remove the $\varepsilon$-transition by the $\varepsilon$-removal operation. This provides us with the desired automaton in $\mathcal{C}$.

Finally consider Kleene's operation: suppose that the automaton $\A$ recognizes the expression $R$. The following planar automaton with three $\varepsilon$-transitions recognizes $R^{+} = \cup_{k \geq 1} R^k$.

\begin{center}
\begin{picture}(0,0)%
\includegraphics{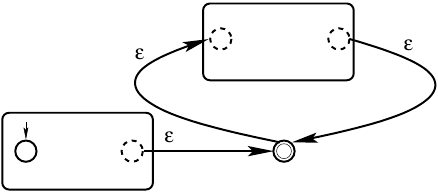}%
\end{picture}%
\setlength{\unitlength}{2072sp}%
\begingroup\makeatletter\ifx\SetFigFont\undefined%
\gdef\SetFigFont#1#2#3#4#5{%
  \reset@font\fontsize{#1}{#2pt}%
  \fontfamily{#3}\fontseries{#4}\fontshape{#5}%
  \selectfont}%
\fi\endgroup%
\begin{picture}(6665,2901)(1228,-2749)
\put(2341,-2131){\makebox(0,0)[lb]{\smash{{\SetFigFont{10}{12.0}{\familydefault}{\mddefault}{\updefault}{\color[rgb]{0,0,0}$\A$}%
}}}}
\put(5221,-466){\makebox(0,0)[lb]{\smash{{\SetFigFont{10}{12.0}{\familydefault}{\mddefault}{\updefault}{\color[rgb]{0,0,0}$\A$}%
}}}}
\end{picture}%
\end{center}

We remove the $\varepsilon$-transitions as before. This leaves us with the desired automaton in $\mathcal{C}$. This finishes the first proof.

The second proof is short but clever. Define $\A_n$ be the following deterministic finite automaton with set of states $[n] = \{1, \ldots, n\}$ and alphabet $A_n = \{ \sigma_{ij} \ | \ 1 \leq i,j \leq n\}$. For $1 \leq i,j \leq n$, set a transition with symbol $\sigma_{ij}$ from $i$ to $j$. We take $1$ to be the initial state and $2$ to be the final state.

Claim 1. {\emph{The automaton $\A_n$ has the universal property that any nondeterministic $n$-state automaton $\A = ([n], A, 1, \delta, 2)$ can be recovered (up to equivalence) by ``parallelization" of the transitions of $\A_n$. }}

Proof of the claim. Build an $n$-state automaton $\C_n$ by replacing each transition
\raisebox{-2mm}{\includegraphics[scale=0.8]{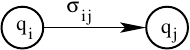}} in $\A_n$ by $T_{ij} = \{ \sigma \in A \ | \ j \in \delta(i, \sigma) \}$. (If $T_{ij}$ is empty, then we remove the transition $\sigma_{ij}$. Otherwise, we have $|T_{ij}|$ distinct ``parallel" transitions from $i$ to $j$.) The automaton $\C_n$ is equivalent to~$\A$. \hfill $\blacksquare$

Claim 2. {\emph{If $\A_n$ has an equivalent planar automaton $\B_n$ then any nondeterministic $n$-state automaton $\A = ([n], A, 1, \delta, 2)$ has an equivalent planar automaton.}}

Proof of the claim. We process the same proof as above with $\B_n$ instead of $\A_n$, observing that parallelization preserves planarity. \hfill $\blacksquare$

It remains to construct a planar automaton $\B_n$ equivalent to $\A_n$. The construction goes by induction on $n$. For $n = 3$, as a graph, $\A_3$ is the complete graph on three vertices, hence is planar. So we take $\B_3 = \A_3$. Suppose we have constructed a planar automaton $\B_n$, equivalent to $\A_n$, together with an embedding of $\B_n$ into $\mathbb{R}^{2}$ and a surjective map $\alpha:Q_n \to [n]$ from the set of states of $\B_n$ to the set of states of $\A_n$. We have to construct $\B_{n+1}$. Consider $\B_n \subset \mathbb{R}^2$. For any pair of distinct states $q,q'$ of $\B_n$, merge the transitions from $q$ to $q'$ and from $q'$ to $q$ into one unoriented edge. (If there is no transition, we do not perform any operation.) Finally we remove loops at each state. We obtain in this fashion an undirected simple graph $G_n$ whose vertices are exactly the states of $\B_n$. For each face $f$ of $\mathbb{R}^{2} - G_n$, place one vertex $v$ inside $f$ except for the exterior face (the unbounded component of $\mathbb{R}^2 - G_n$), and connect it to all vertices of the face $f$ and itself. We obtain a new graph $G_{n+1}$. See the figure below for the recursive constructive of $G_3, G_4$ and $G_5$.

\begin{center}
\includegraphics[scale=0.8]{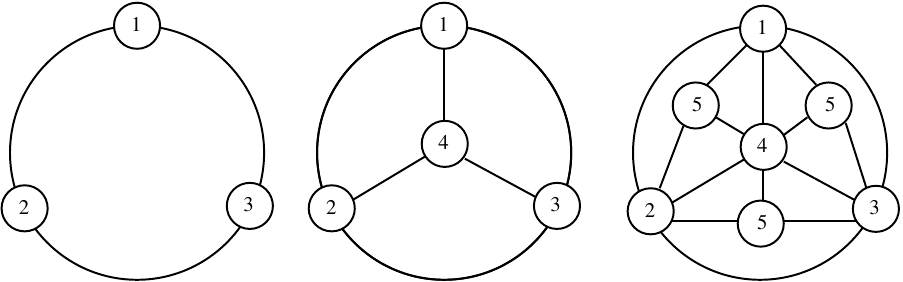}
\end{center}

We extend $\alpha$ by setting $\alpha(v) = n+1$. We restore all previous (oriented) transitions between any pair of vertices, we label the new loop at $v$ by the symbol $\sigma_{n+1,n+1}$ and we unfold each newly created edge from $v$ to any other (old) vertex $w$ into two transitions with opposite orientations with symbols $\sigma_{n+1,\alpha(w)}$ and $\sigma_{\alpha(w),n+1}$ respectively.

\begin{center}
\begin{picture}(0,0)%
\includegraphics{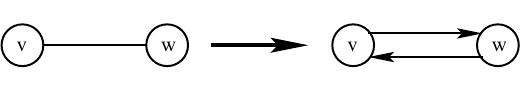}%
\end{picture}%
\setlength{\unitlength}{2072sp}%
\begingroup\makeatletter\ifx\SetFigFont\undefined%
\gdef\SetFigFont#1#2#3#4#5{%
  \reset@font\fontsize{#1}{#2pt}%
  \fontfamily{#3}\fontseries{#4}\fontshape{#5}%
  \selectfont}%
\fi\endgroup%
\begin{picture}(7927,1411)(-925,-2240)
\put(4951,-2131){\makebox(0,0)[lb]{\smash{{\SetFigFont{10}{12.0}{\familydefault}{\mddefault}{\updefault}{\color[rgb]{0,0,0}$\sigma_{\alpha(w),\alpha(v)}$}%
}}}}
\put(4951,-1096){\makebox(0,0)[lb]{\smash{{\SetFigFont{10}{12.0}{\familydefault}{\mddefault}{\updefault}{\color[rgb]{0,0,0}$\sigma_{\alpha(v),\alpha(w)}$}%
}}}}
\end{picture}%
\end{center}

This yields a new automaton $\B_{n+1}$. It is clear that the recursive step does not affect the initial state and the final state of $\B_{n+1}$ (that were already constructed together with $\B_3$). The automaton $\B_{n+1}$ is planar since $G_{n+1}$ is planar and the unfolding of the edges preserves planarity. It remains to see that $\B_{n+1}$ is equivalent to $\A_{n+1}$.
It follows from the definition of $\B_n$ that for $q, q' \in Q_n$ and $\sigma_{ij} \in A_n$, there is a $\sigma_{ij}$-transition from $q$ to $q'$ if and only if $\alpha(q) = i$ and $\alpha(q') = j$.
 It follows that every word recognized by $\B_{n}$ is also recognized by $\A_{n}$. To prove the converse, one shows that for any sequence $x_1 = 1, x_2, \ldots, x_k = 2$ in $[n]$ (which is a word in the language recognized by $\A_n$), there is a path\footnote{A path is a walk such that no edge occurs more than once and no internal vertex is repeated.} $y_1, \ldots, y_k$ in $\B_n$ such that $\alpha(y_j) = j$ for each $1 \leq j \leq k$. This is proved by induction on $k \leq m$ by using the facts that it is true for $m=3$ and that $G_{n}$ contains isomorphic copies of $G_{n-1}$. \hfill $\blacksquare$

\end{proof}

\section{Proof of the 1-gon Lemma} \label{sec:one-gon-lemma}

\subsection{Proof of Lemma \ref{lem:one-gon-is-bifacial}} \label{subsec:one-gon-is-bifacial}

Geometrically, a bifacial embedded loop is nothing else than a separating simple closed curve with a basepoint. It suffices to prove that a contractible simple closed curve is separating.
Consider an embedded loop $\alpha$ in $\Sigma^{1}$ based at $q \in \Sigma^{0}$. Assume that $\alpha$ is monofacial (nonseparating). Consider a small segment $I$ transversal (say, normal) to $\alpha$ such that $I \cap \alpha = \{ q \}$. Since $\alpha$ is monofacial, the endpoints of $I$ lie in the same connected component of $\Sigma - \A$. Hence $I$ extends to a loop $\beta$ such that $\beta \cap (\Sigma - \A) = \beta \cap \alpha = \{ q \}$. It follows that the algebraic $1$-homology intersection $[\beta] \cdot [\alpha] = \pm 1$. In particular, $[\alpha] \not= 0$ in $H_{1}(\Sigma)$. Thus $\alpha$ is not contractible. \hfill $\blacksquare$

\subsection{Proof of the 1-gon Lemma}

Consider a state $q$ of $\A \subset \Sigma$ that has at least one noncontractible loop.
Consider a small enough open disc $D$ in $\Sigma$ centered in $q$ such that the following properties hold:
1) $D \cap (\Sigma - \A)$ is a disjoint union of open cells; 2)
The intersection $D \cap \A$ is a wedge of semi-open arcs intersecting in
their common endpoint $q$; 3)
 Each arc $\alpha$ is bifacial: there are exactly two adjacent
cells $c, c' \in C = \{ c_1, \ldots, c_r \}$ such
 that $\alpha \subset {\rm{Fr}}(c) \cap {\rm{Fr}}(c')$.

Let $A$ be the set of arcs.  The orientation of $\Sigma$ induces a circular ordering $\alpha_1, c_1, \alpha_2,
 c_2, \ldots, \alpha_r, c_r$ of $A \cup D$ where the arcs and cells alternate and
such that any two consecutive cells are adjacent.

We fix now an arc $\alpha_1$ and perform successively the following operations on the arcs following the circular ordering. If the arc $\alpha_j$ does not belong to a loop (i.e. is part of a transition that is not a loop), we do not do anything.  Otherwise there is another arc $\beta$ belonging to the same loop. If the two arcs are enumerated consecutively in the circular ordering, we remove the whole loop inside $\Sigma$ and replace it by a small $1$-gon $\ell$ based at $q$ such that $\ell - q$ lies entirely in the open cell $c_{j}$.  At the end of the process, we have replaced all cycles of length $1$ by contractible loops, hence by $1$-gons. This does not change the surface hence it does not affect the genus of the embedding. In particular, if the embedding is minimal, the new embedding remains minimal (hence cellular), with the desired properties. Now by Lemma \ref{lem:one-gon-is-bifacial} each $1$-gon consists of one bifacial edge.
\hfill $\blacksquare$

\section{Proof of the Genus Formula} \label{sec:genus-formula-proof}

\subsection{Preliminary results} \label{subsec:preliminary}

Consider a minimal embedding of an automaton $\A$ into a closed oriented surface $\Sigma$. We let $e_{0}$ denote the number of $0$-cells (points, i.e. states), $e_1$ the number of $1$-cells (open transitions) and $e_2$ the number of $2$-cells (that is, the number of connected components of $\Sigma - \A$).
The first classical result is Euler's formula (\cite{Euler}, \cite[Chap. IV, \S 13]{Bredon}) that relates the genus to the CW-decomposition
of $\Sigma$. In our context, since $\Sigma$ is oriented and minimal, the formula takes the following form.

\begin{lemma}[Euler's formula]
\begin{equation}
\chi(\Sigma) = 2 - 2g(\A) = e_0 - e_1 + e_2. \label{eq:Euler-Poincare}
\end{equation}
\end{lemma}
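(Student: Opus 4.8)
The plan is to separate the two equalities in the statement and recall the standard algebraic topology behind each. Throughout, $\Sigma$ is a closed oriented surface, which I take to mean in particular \emph{connected}; this is consistent with our standing conventions, since we may assume $\A$ connected (Propositions \ref{pr:complete} and \ref{pr:reachable}) and a cellular embedding of a connected graph lives in a connected surface. A minimal embedding is cellular by definition, so Lemma \ref{lem:CW-complex} applies; moreover, by the very definition of $g(\A)$, the genus of the ambient surface of a minimal embedding is $g(\A)$, i.e. $\tfrac{1}{2}\dim H_1(\Sigma;\mathbb{R}) = g(\A)$ (compatibly with Youngs' Theorem \ref{th:Youngs}).

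For the right-hand equality $2 - 2g(\A) = e_0 - e_1 + e_2$: Lemma \ref{lem:CW-complex} endows $\Sigma$ with the structure of a finite CW-complex having $e_0$ vertices, $e_1$ edges, $e_2$ faces and no cell of dimension $\geq 3$. I would then invoke the standard fact that the Euler characteristic of a finite CW-complex, $\chi(\Sigma) = \sum_i (-1)^i \dim_{\mathbb{R}} H_i(\Sigma;\mathbb{R})$, equals the alternating sum of its cell numbers — this is immediate from cellular homology together with the elementary linear-algebra lemma that a bounded chain complex of finite-dimensional vector spaces and its homology have the same Euler characteristic (see \cite[Chap.~IV, \S 13]{Bredon}). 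Hence $\chi(\Sigma) = e_0 - e_1 + e_2$.

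For the left-hand equality $\chi(\Sigma) = 2 - 2g(\A)$: here I would use that $\Sigma$ is a closed connected orientable surface, so $H_0(\Sigma;\mathbb{R}) \cong \mathbb{R}$ (connectedness), $H_2(\Sigma;\mathbb{R}) \cong \mathbb{R}$ (orientability, via the fundamental class / Poincar\'e duality), and $H_i(\Sigma;\mathbb{R}) = 0$ for $i \geq 3$ (as $\Sigma$ is a $2$-manifold). Combining with the previous paragraph and $\tfrac{1}{2}\dim H_1(\Sigma;\mathbb{R}) = g(\A)$,
$$ e_0 - e_1 + e_2 = \chi(\Sigma) = 1 - \dim H_1(\Sigma;\mathbb{R}) + 1 = 2 - 2g(\A). $$
Alternatively one could quote the classification of surfaces: $\Sigma \cong S_k$ for a unique $k$, a one-$0$-cell, $2k$-$1$-cell, one-$2$-cell structure on $S_k$ gives $\chi(S_k) = 2 - 2k$, and $k = \tfrac{1}{2}\dim H_1(S_k;\mathbb{R}) = g(\A)$. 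There is no real obstacle: the content is entirely classical. The only points needing care are that $\Sigma$ is connected (so that $H_0$ and $H_2$ are one-dimensional) and that it is exactly \emph{minimality} of the embedding that lets us replace the genus of $\Sigma$ by $g(\A)$ — for a general cellular embedding the argument only yields $e_0 - e_1 + e_2 = 2 - 2g(\Sigma) \le 2 - 2g(\A)$.
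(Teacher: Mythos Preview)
Your proposal is correct and aligns with the paper's treatment: the paper simply states Euler's formula as a classical result with a reference to \cite[Chap.~IV, \S 13]{Bredon}, while you have unpacked precisely that reference via cellular homology and the homology of a closed orientable surface. Your careful remarks about connectedness and about minimality being needed to identify $g(\Sigma)$ with $g(\A)$ are appropriate and match the paper's standing hypotheses (``oriented and minimal'').
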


Another useful observation is a consequence of the decomposition $\pi_{0}(\Sigma - \A) = \coprod_{k \geq 0} F_{k}$. Namely,
\begin{equation}
e_{2} = f_{1} + f_{2} + f_{3} + \ldots = \sum_{k \geq 0} f_{k}. \label{eq:two-cells}
\end{equation}
The sum above is finite since the total number of $2$-cells is finite. In particular, there is a maximal index $k \geq 0$ such that $f_{k} > 0$ and $f_{l} = 0$ for all $l > k$.

We need one more result that relates the number of $1$-cells to the number of faces.

\begin{lemma}\label{le:kfke}
\begin{equation}
2e_{1} = f_{1} + 2f_{2} + 3f_{3} + \ldots = \sum_{k \geq 0} k\ f_{k}. \label{eq:degree-formula}
\end{equation}
\end{lemma}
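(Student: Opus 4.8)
The plan is to count incidences between edges and faces in two different ways, using the pairing $\langle -, - \rangle \colon \Sigma^1 \times \pi_0(\Sigma - \A) \to \{0,1,2\}$ introduced in \S\ref{subsec:cycles-faces}. Concretely, I would consider the double sum $\sum_{e \in \Sigma^1} \sum_{c \in \pi_0(\Sigma - \A)} \langle e, c \rangle$ and evaluate it by summing in each of the two possible orders (a standard Fubini-type argument on a finite sum of nonnegative integers).

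First I would fix the order of summation with the inner sum over faces $c$ and the outer sum over edges $e$. By equation (\ref{eq:two-factor}), for each edge $e$ one has $\sum_{c} \langle e, c \rangle = 2$, so summing over all $e \in \Sigma^1$ gives $2 e_1$. This yields the left-hand side of (\ref{eq:degree-formula}). Next I would reverse the order, putting the sum over edges inside and the sum over faces outside: the total becomes $\sum_{c \in \pi_0(\Sigma - \A)} \bigl( \sum_{e \in \Sigma^1} \langle e, c \rangle \bigr)$. By the definition of a $k$-face, the inner parenthesized sum is exactly $k$ precisely when $c \in F_k$; since $\pi_0(\Sigma - \A) = \coprod_{k \geq 1} F_k$ (property (3) after the definition of $k$-face, together with the fact that there is no $0$-face because a cellular embedding has no isolated $2$-cell component disjoint from $\A^1$ once the embedding is cellular — more precisely the decomposition runs over $k \ge 1$, or $k\ge 0$ with $f_0 = 0$), this regrouping gives $\sum_{k \geq 1} k\, f_k = f_1 + 2 f_2 + 3 f_3 + \cdots$, which is the right-hand side of (\ref{eq:degree-formula}). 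Equating the two evaluations of the same finite double sum yields the claim. Finiteness of all sums involved follows from $\A$ being finite and from property (2) after the definition of $k$-face (all but finitely many $F_k$ are empty).

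The only genuine subtlety — and the step I would be most careful about — is making sure the face-side count is correct for monofacial edges: such an edge contributes $\langle e, c \rangle = 2$ to the single face $c$ that it borders on both sides, so when we sum $\sum_e \langle e, c \rangle$ over the edges of a face $c$, a monofacial edge is counted with weight $2$ (it appears ``twice'' on the boundary walk of $c$). This is consistent with the definition of a $k$-face, where $k$ is defined precisely by $\sum_e \langle e, c \rangle = k$ rather than by a naive count of distinct boundary edges; so no separate case analysis is needed, the pairing already bakes in the right multiplicity. With that observation in place the proof is a one-line double-counting identity, analogous to the classical handshake lemma $2 e_1 = \sum_v \deg v$ but for the edge–face incidence rather than the edge–vertex incidence.
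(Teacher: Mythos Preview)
Your proof is correct and follows essentially the same approach as the paper: both evaluate the double sum $\sum_{e}\sum_{c}\langle e,c\rangle$ in the two possible orders, using (\ref{eq:two-factor}) for one direction and the definition of a $k$-face together with the decomposition $\pi_0(\Sigma-\A)=\coprod_k F_k$ for the other. Your additional remarks on monofacial edges and the handshake-lemma analogy are helpful commentary but not part of the paper's more terse argument.
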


\begin{proof}
We begin with the relation  (\ref{eq:two-factor}): $\sum_{c \in \pi_{0}(\Sigma - \A)} \langle e,c \rangle = 2$. It follows that
$$ \sum_{e \in \Sigma^{1}} \sum_{c \in \pi_{0}(\Sigma - \A)} \langle e,c \rangle = 2 | \Sigma^{1} | = 2e_1.$$
Now use the decomposition of the cells into $k$-faces: $\pi_{0}(\Sigma - \A) =
\coprod_{k \geq 0} F_{k}$.
\begin{align*}
 \sum_{e \in \Sigma^{1}} \sum_{c \in \pi_{0}(\Sigma - \A)} \langle e,c \rangle = \sum_{c \in \pi_{0}(\Sigma - \A)} \sum_{e \in \Sigma^{1}} \langle e,c \rangle
 & = \sum_{k \geq 0}\ \sum_{c \in F_{k}} \sum_{e \in \Sigma^{1}} \langle e,c \rangle \\
 & = \sum_{k \geq 0}\ \sum_{c \in F_{k}} k \\
 & = \sum_{k \geq 0} k\ f_{k},
\end{align*}
where we used the relation $\sum_{e \in \Sigma^{1}} \langle e,c \rangle = k$ for a $k$-face $c$. This completes the proof. \hfill $\blacksquare$
\end{proof}

\subsection{Proof of Theorem \ref{th:genus-formula} (Genus formula)}

Consider a cellular embedding of $\A$ into a closed oriented surface $\Sigma$.  Euler's formula (\ref{eq:Euler-Poincare}) for the genus of $\Sigma$ gives $g_{\Sigma} = 1 - \frac{e_0 - e_1 + e_2}{2}$.
Since the automaton is complete, each state has exactly $m$ outgoing transitions. Therefore
$e_0 = e_1 / m$. Next use the relations (\ref{eq:degree-formula}) and (\ref{eq:two-cells}) to express $e_1$ and $e_2$ in terms of the $k$-faces. This yields the formula
$$ g_{\Sigma} = 1 + \sum_{k=1}^{+\infty} \frac{k(m-1)-2m}{4m} f_k. $$
Now $g(\A) \leq g_{\Sigma}$ with equality if and only if the embedding into $\Sigma$ is minimal. This achieves the proof.
 \hfill $\blacksquare$

\section{Proof of the Genus Growth Theorem} \label{sec:genus-growth-proof}
It is convenient to introduce the following functions:
 $$\displaystyle A(n)
=\sum_{k \geq 3} \frac{k(m-1)-2m}{4m}f_{k}(n)\ {\hbox{and}}\ B(n)
= \sum_{k \geq 3} k\ f_{k}(n).$$  We begin with
\begin{lemma} \label{lem:constante-alpha}
There is a constant $\alpha > 0$ such that
\begin{equation}
A(n) \geq \alpha B(n). \label{eq:prelim-equation}
\end{equation}
%and that the inequality is strict if there is some $k \geq 4$ such
%that $f_{k} \not= 0$.
\end{lemma}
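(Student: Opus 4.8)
The plan is to show that the coefficient $c_k = \frac{k(m-1)-2m}{4m}$ appearing in $A(n)$ is, for every $k \geq 3$, at least a fixed positive multiple $\alpha$ of the coefficient $k$ appearing in $B(n)$, whence the inequality $A(n) \geq \alpha B(n)$ follows termwise. Concretely, I would set $\alpha = \inf_{k \geq 3} \frac{c_k}{k} = \inf_{k \geq 3} \frac{k(m-1) - 2m}{4mk}$ and check that this infimum is a strictly positive real number.

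The first step is to verify positivity of each term. Since $m \geq 4$ (Hypothesis $(1)$ of the Genus Growth Theorem, under which this lemma is invoked), and $k \geq 3$, we have $k(m-1) - 2m \geq 3(m-1) - 2m = m - 3 \geq 1 > 0$, so $c_k > 0$ for all $k \geq 3$. The second step is to compute the infimum of $\frac{c_k}{k} = \frac{m-1}{4m} - \frac{1}{2k}$ over integers $k \geq 3$. This is an increasing function of $k$, so its infimum over $k \geq 3$ is attained at $k = 3$, giving
$$ \alpha = \frac{m-1}{4m} - \frac{1}{6} = \frac{3(m-1) - 2m}{12m} = \frac{m-3}{12m}. $$
By Hypothesis $(1)$, $m \geq 4$, so $\alpha = \frac{m-3}{12m} > 0$. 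The third step is the termwise comparison: for each $k \geq 3$ and each $n$, $\frac{k(m-1)-2m}{4m} f_k(n) \geq \alpha \, k \, f_k(n)$ since $f_k(n) \geq 0$; summing over $k \geq 3$ (a finite sum for each fixed $n$, as only finitely many $f_k(n)$ are nonzero) yields $A(n) \geq \alpha B(n)$, which is precisely \eqref{eq:prelim-equation}.

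This argument is essentially a one-line estimate once the right constant is identified, so I do not anticipate a genuine obstacle; the only subtlety is making sure the infimum defining $\alpha$ is positive, which is exactly where the hypothesis $m \geq 4$ (rather than merely $m \geq 3$) enters — for $m = 3$ one would get $\alpha = 0$ and the conclusion would fail, consistent with Proposition~\ref{prop:three-letters} and the genus-$1$ examples. One should also note that the constant $\alpha = \frac{m-3}{12m}$ obtained here is not optimal for the later steps of the proof of Theorem~\ref{th:genus-growth-theorem}; the value $\frac{m-3}{6m}$ in the statement of that theorem comes from a finer analysis that also uses Hypothesis $(2)$ to control the contributions of $f_1$ and $f_2$. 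For the present lemma, however, any positive $\alpha$ suffices, and the crude bound above is all that is needed.
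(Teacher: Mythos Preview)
Your proof is correct and follows essentially the same route as the paper: reduce to the termwise inequality $\frac{k(m-1)-2m}{4m} \geq \alpha k$ for $k \geq 3$, rewrite as $\frac{m-1}{4m} - \frac{1}{2k} \geq \alpha$, take the infimum at $k=3$ to get $\alpha = \frac{m-3}{12m} > 0$ for $m \geq 4$. One small correction to your closing commentary: the constant $\frac{m-3}{6m}$ in Theorem~\ref{th:genus-growth-theorem} is not obtained by a finer analysis but is exactly $2\alpha$ with your $\alpha = \frac{m-3}{12m}$, the factor~$2$ arising from $2e_1 = 2mn$ in the final substitution.
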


\begin{proof}
To prove the claim, we first find $\alpha
> 0$ such that
$$ \frac{(m-1)k - 2m}{4m} \geq \alpha\ k\ \ \ {\hbox{for all}}\ k \geq 3.$$
It suffices, therefore, to choose $\alpha$ such that
$$ \frac{m-1}{4m} - \frac{1}{2k} \geq \alpha \ \ \ {\hbox{for all}}\ k \geq 3.$$
This condition is satisfied if we choose $$ \underset{k \geq
3}{\inf} \left(\frac{m-1}{4m} - \frac{1}{2k}\right) =
\frac{m-3}{12m} = \alpha_0 \geq \alpha.$$ (Note that $\alpha_0 >
0$ for $m \geq 4$.)
%Then \begin{enumerate} \item[$\bullet$] for
%$\alpha = \alpha_0$, all the inequalities above are strict for $k
%\geq 4$ and the inequality is an equality for $k = 3$. Hence the
%inequality (\ref{eq:prelim-equation}) is strict provided that
%$f_{k} \not= 0$ for some $k \geq 4$. \item[$\bullet$] for $0 <
%\alpha < \alpha_0$, all the inequalities above are strict. Hence
%the inequality (\ref{eq:prelim-equation}) is strict.
%\end{enumerate}
This proves the lemma. \hfill $\blacksquare$ \end{proof}

\begin{lemma} \label{lem:rel-negligible-second}
\begin{equation}
\underset{n \to +\infty}{\lim} \frac{f_{j}(n)}{B(n)}= 0\ \ \ {\hbox{for}}\ j = 1,2.
\end{equation}
\end{lemma}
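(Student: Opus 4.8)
The plan is to show that the denominator $B(n)$ grows linearly in $n$ while the numerators $f_1(n)$ and $f_2(n)$ are $o(n)$. The linear lower bound on $B(n)$ will come from Euler's relation together with the edge count of a complete deterministic automaton; the sublinear bound on $f_1$ and $f_2$ will come from the elementary inequality $f_k \le z_k$ combined with hypothesis $(2)$ of Theorem \ref{th:genus-growth-theorem}.

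Concretely, I would fix a minimal embedding of $\A_n$. Since $\A_n$ is complete and deterministic on $m$ letters with $n$ states, it has $e_0 = n$ vertices and $e_1 = mn$ edges (each state has exactly $m$ outgoing transitions). Relation \eqref{eq:degree-formula} then gives $\sum_{k \ge 1} k\, f_k(n) = 2e_1 = 2mn$, whence
$$ B(n) = \sum_{k \ge 3} k\, f_k(n) = 2mn - f_1(n) - 2f_2(n). $$
Because every $k$-face is bounded by a closed walk of length $k$, we have $f_k(n) \le z_k(n)$ for all $k$, so $f_1(n) + 2f_2(n) \le z_1(n) + 2z_2(n)$. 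By hypothesis $(2)$, $z_1(n)/n \to 0$ and $z_2(n)/n \to 0$, hence $f_1(n) + 2f_2(n) = o(n)$ and therefore $B(n) = 2mn - o(n)$. In particular there is $N_0$ such that $B(n) \ge mn$ for every $n \ge N_0$.

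Finally, for $j \in \{1,2\}$ and $n \ge N_0$,
$$ 0 \le \frac{f_j(n)}{B(n)} \le \frac{z_j(n)}{mn}, $$
and the right-hand side tends to $0$ as $n \to +\infty$ by hypothesis $(2)$, which gives the claim. I do not anticipate a genuine obstacle here: the argument is a short bookkeeping computation. The only point needing a word of care is that the ratio $f_j(n)/B(n)$ be well defined, which is guaranteed for all large $n$ by the bound $B(n) \ge mn > 0$; equivalently, one notes that $B(n) \to +\infty$, which is also what is needed for the subsequent steps of the Genus Growth Theorem's proof.
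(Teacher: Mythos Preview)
Your argument is correct and follows essentially the same route as the paper: both rely on the identity $B(n)=2mn-f_1(n)-2f_2(n)$ together with $f_k\le z_k$ and hypothesis~(2) of Theorem~\ref{th:genus-growth-theorem}. Your packaging is in fact a bit cleaner, since you bound $B(n)\ge mn$ directly and then squeeze $f_j(n)/B(n)\le z_j(n)/(mn)\to 0$, avoiding the slightly roundabout final step in the paper's version.
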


\begin{proof}
Since $f_k \leq z_k$, we have $\frac{f_k}{n} \leq \frac{z_k}{n} \underset{n \to +\infty}{\to} 0$ for $k = 1, 2$. Thus for any positive constants $a, b$,
\begin{equation}
 \frac{n}{a f_1 + b f_2} \underset{n \to +\infty}{\to} +\infty.
\label{eq:rel-negligible}
\end{equation}
Observe that
$ 2 \ m n = e_1(n) = f_1(n) + 2f_2(n) + B(n).$ Hence
$$ n = \frac{1}{2m}(f_1(n) + 2f_2(n) + B(n)).$$
Replacing $n$ in (\ref{eq:rel-negligible}) by this expression, with $a = 1/(2m)$ and $b = 1/m$, we find that
$$ \frac{\frac{1}{2m} B(n)}{\frac{1}{2m}f_1(n) + \frac{1}{m}f_{2}(m)} = \frac{B(n)}{f_1(n) + 2f_2(n)} \underset{n \to +\infty}{\to} +\infty.$$
Then $$\max \left(\frac{B(n)}{f_1(n)}, \frac{B(n)}{f_2(n)} \right) \geq \frac{B(n)}{f_1(n) + 2f_2(n)} \underset{n \to +\infty}{\to} \infty,$$  as desired. \hfill $\blacksquare$
\end{proof}

Let us come to the proof of the Genus Growth Theorem.
Let $\alpha > \varepsilon > 0$ satisfying the condition of Lemma \ref{lem:constante-alpha}.
Lemma \ref{lem:rel-negligible-second} ensures there is $N
> 0$ such that for any $n \geq N$,
$$ \left( \alpha + \frac{m+1}{4m}
\right)f_{1}(n) + \left( 2 \alpha + \frac{1}{2m} \right)f_{2}(n)
\leq \varepsilon\ B(n).$$ Hence for $n \geq N$,
$$ A(n) \geq (\alpha - \varepsilon) B(n) + \left( \alpha + \frac{m+1}{4m}
\right)f_{1}(n) + \left( 2 \alpha + \frac{1}{2m}
\right)f_{2}(n).$$ Thus
\begin{align*}
A(n) - \frac{m+1}{4m} f_{1}(n) - \frac{1}{2m} f_{2}(n) & \geq
(\alpha - \varepsilon)(B(n)+f_{1}(n) + 2f_{2}(n)) \\
& = 2(\alpha- \varepsilon)e_{1}(n) \\
& = 2(\alpha- \varepsilon)mn.
\end{align*}
According to Theorem \ref{th:genus-formula}, $A(n) -
\frac{m+1}{4m} f_{1}(n) - \frac{1}{2m} f_{2}(n) = g(n) - 1$. Thus
$$ g(n) \geq 1 +  2(\alpha - \varepsilon)m\ n.$$
This achieves the proof of the theorem. \hfill $\blacksquare$

\section{Proof of Theorem \ref{th:genus-lower-bound}}
\label{sec:genus-size}

Let the regular language $\Lang$ be represented by some minimal automaton $\A_{min} = \langle Q_{min}, A, q_{0,min}, F_{min}, \delta_{min}\rangle $ of size $n$. Let $\A = \langle Q, A, q_0, F, \delta \rangle$ be an automaton with minimal genus representing $\Lang$. According to Proposition~\ref{pr:reachable}, we can suppose without loss of generality that $\A$ is complete and that all states are reachable.  Suppose that $\A$ verifies the following properties
\begin{enumerate}[(i)]
\item $\A$ has at least $n$ states,
\item $\A$ contains no loops,
\item $\A$ contains no bigons.
\end{enumerate}

\noindent Then, %since $\A$ contains neither loops, nor bigons,
\begin{eqnarray*}
g(\Lang)  = g(\A) &=& 1 + \sum_{k = 3}^{\infty} \cfrac{k(m-1) - 2m}{4m} f_k \hspace{10mm}\text{(by Genus Formula, (ii), (iii))}\\
&\geq & 1+ \left(\cfrac{m-3}{12m}\right)  \sum_{k = 3}^{\infty} k f_k \hspace{16mm}\text{(by Lemma~\ref{lem:constante-alpha})}\\
&\geq &1+\left(\cfrac{m-3}{6m}\right)e_1 \hspace{23mm} \text{ (by Lemma~\ref{le:kfke})}\\
&\geq&1+\left(\cfrac{m-3}{6}\right) n \hspace{26mm} \text{due to (i)}
\end{eqnarray*}

\noindent It remains to prove that $\A$ indeed satisfies the three properties (i-iii) listed above.
Following Myhill-Nerode, the set $Q_{min}$  of states of $\A_{min}$ is (isomorphic to) the set of equivalence classes for the non-distinguishing extension relation $R_\Lang$. Let the function $\rho: Q \to Q_{min}$ maps each state of $Q$ to its equivalent class in $Q_{min}$ as described by Proposition~\ref{pr:projection}.

\begin{enumerate}[(i)]
\item Since $\A_{min}$ is the minimal automaton, $\A$ has at least $n$ states.
\item Suppose that $\A$ contains a loop, that is $q = \delta(q,a)$ for some state $q$ and letter $a$. Then,  $\rho(q) = \rho(\delta(q,a)) = \delta_{min}(\rho(q),a)$. The second equality holds by Proposition~\ref{pr:projection}. This contradicts the hypothesis that $\A_{min}$ does not contain any loops.
\item In a similar way, suppose that $\A$ contains a bigon of the shape
\raisebox{-3mm}{\includegraphics[scale=0.7]{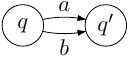}}. Then $$ \delta_{min}(\rho(q),a)  = \rho(\delta(q,a)) = \rho(\delta(q,b)) = \delta_{min}(\rho(q),b),$$ in contradiction with the hypothesis. If  $\A$ contains a bigon of the shape
\raisebox{-3mm}{\includegraphics[scale=0.7]{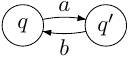}}. Then, $$\rho(\delta(\delta(q,a),b)) = \delta_{min}(\rho(\delta(q,a)),b) = \delta_{min}(\delta_{min}(\rho(q),a),b) = \rho(q),$$
which also contradicts the hypothesis.\hfill $\blacksquare$
\end{enumerate}

\section{Conclusion}

The topological tool we employ here, the genus as a complexity measure of the language,
leads to a viewpoint that seems orthogonal to the standard one: it is not compatible with set-theoretic
minimization (that is, state minimization).  However, the genus does behave similarly to the state
complexity with respect to operations such as determinization and union (up to a linear factor);
furthermore, there is a hierarchy of regular languages based on the genus. This suggests a more systematic
study of all operations : e.g. concatenation, star-operation, and composition of those. We take up this task
in a sequel to this paper \cite{BD}.

\bibliographystyle{alpha}
\bibliography{genre}

\noindent G. Bonfante, LORIA, Universit\'e de Lorraine, Campus scientifique, B.P.~239, 54506 Vandoeuvre-l\`es-Nancy Cedex

{\tt{guillaume.bonfante@loria.fr}}\\

\noindent F. Deloup, IMT, Universit\'e Paul Sabatier -- Toulouse III, 118 Route de Narbonne, 31069 Toulouse cedex 9

{\tt{florian.deloup@math.univ-toulouse.fr}}

\end{document}